\documentclass[a4paper,10pt]{article}
\usepackage{jcappub} 
\usepackage{lineno}
\usepackage{tensor}
\usepackage{subcaption}
\usepackage{tikz}
\usepackage{multirow}
\usepackage{amsmath}
\usepackage{amsthm}
\usetikzlibrary{positioning, shapes.geometric}
\newtheorem{theorem}{Theorem}[section]

\newtheorem{corollary}[theorem]{Corollary}

\newtheorem{definition}[theorem]{Definition}

\newtheorem{proposition}[theorem]{Proposition}
\newtheorem{remark}[theorem]{Remark}


\arxivnumber{2505.15975} 
\title{\boldmath Geometric formulation of \texorpdfstring{$k$}{k}-essence and late-time acceleration}







\author[a,b]{Lehel Csillag,}

\affiliation[a]{Department of Physics, Babeș-Bolyai University, Kogălniceanu Street, Cluj-Napoca 400084, Romania}
\affiliation[b]{Faculty of Mathematics and Computer Science, Transilvania University, Iuliu Maniu Street 50, Brașov 500091, Romania}
\author[c]{Erik Jensko}
\affiliation[c]{Department of Mathematics, University College London,
Gower Street, London WC1E 6BT, UK
}%

\emailAdd{lehel.csillag@ubbcluj.ro, lehel.csillag@unitbv.ro, erik.jensko@ucl.ac.uk}

\date{\today}

\abstract{ 
We study a class of geometries in which nonmetricity is fully determined by a vectorial degree of freedom and three independent coefficients. Formulating the simplest linear action in this geometry, implemented through Lagrange multipliers, naturally leads to an equivalence with the purely kinetic $k$-essence models with quadratic kinetic terms. A detailed dynamical systems analysis reveals that the $\Lambda$CDM phenomenology is embedded within the model.
Crucially, we find that if stability conditions such as a positive sound speed squared and energy density  are not enforced, the model generically exhibits instabilities and divergent behaviour in the phase space. These physical viability criteria allow us to isolate stable regions of the parameter space and derive well-motivated priors for parameter inference. Using Markov Chain Monte Carlo methods and late-time observational data, including cosmic chronometers, Pantheon$^{+}$ Type Ia supernovae, and DESI
baryon acoustic oscillations, we constrain the degrees of freedom associated with nonmetricity and demonstrate the viability of the model.
We discuss the implications of these results in light of the recent cosmic tensions, and give a possible explanation as to why the equivalent $k$-essence models have been missed as serious competitors to $\Lambda$CDM in the past. Finally, we review the geometric foundations of the theory and show that the integrable Weyl, Schr\"{o}dinger and completely symmetric geometries are embedded within our framework as special cases.
}

\begin{document}
\maketitle
\flushbottom
\section{Introduction}

The standard $\Lambda$-cold-dark-matter ($\Lambda$CDM) cosmological model has proven to be exceptionally good at describing observational phenomena on a range of scales, 
from the observed accelerated expansion of the Universe ~\cite{SupernovaSearchTeam:1998fmf} to the formation of large-scale structure \cite{SDSS:2005xqv}.
This is exemplified in the detailed cosmological surveys of the past decade, with the predictions of $\Lambda$CDM being confirmed by a range of observational probes \cite{Planck:2018vyg}.
However, despite the successes of the phenomenological model, our present understanding of dark matter and dark energy is incomplete \cite{Joyce:2014kja,Peebles:2002gy}. 
Moreover, the cosmological constant faces theoretical problems \cite{Weinberg:1988cp}, which potentially motivate dynamical explanations for dark energy \cite{Ratra:1987rm,Caldwell:1997ii,Copeland:2006wr}. Some of the simplest examples are scalar field models,
including the well-known quintessence scenario \cite{Wetterich:1994bg} and the more general $k$-essence models \cite{Armendariz-Picon:1999hyi,Armendariz-Picon:2000ulo} (originally proposed in the context of inflation but shortly after studied in relation to dark energy \cite{Armendariz-Picon:2000nqq,Chiba:1999ka}). Such theories are natural candidates to study alternative cosmological models to $\Lambda$CDM. 

On the observational side, current tensions in cosmology may also call for extensions to $\Lambda$CDM and modifications of General Relativity (GR) \cite{Abdalla:2022yfr,DiValentino:2025sru}. One prominent example is the long-standing \textit{Hubble tension}: the discrepancy between the value of the Hubble constant $H_0$ inferred from early-time observations ($67.4 \pm 0.5$~km/s/Mpc from Planck 2018 \cite{Planck:2018vyg}) and from late-time observations ($73.04 \pm 1.04$~km/s/Mpc from the SH0ES collaboration \cite{Riess:2021jrx}).
 This tension has been shown to be statistically significant and difficult to ease within the standard $\Lambda$CDM framework \cite{Verde:2019ivm,DiValentino:2020zio}. Moreover, it does not appear to be lessening with the onset of new observational data; for instance, the most recent surveys of the Dark Energy Spectroscopic Instrument (DESI) further point towards tensions in $\Lambda$CDM, with dynamical dark energy being preferred by the data \cite{DESI:2025zpo,DESI:2025zgx}. In fact, even simple scalar field models have been shown to perform much better than $\Lambda$CDM in this regard \cite{Keeley:2025rlg,Gialamas:2025pwv}.
  It is therefore worthwhile to explore and test alternative models.

In light of these theoretical issues and observational challenges, modified theories of gravity have become a popular approach from which to construct new cosmological models, see \cite{Goenner:2004se,Goenner:2014mka,Clifton:2011jh,Nojiri:2010wj} for reviews.
One particular class of theories with a long and rich history, overlapping significantly with particle physics and stemming from a gauge-theoretic approach to gravitation, is the metric-affine theory \cite{Blagojevic:2013xpa,Hehl:1994ue}.
Well-known examples include the Einstein-Cartan \cite{Kibble:1961ba,Hehl:1976kj,Trautman:2006fp,Shapiro:2001rz,Luz:2023uhy}, Einstein-Weyl \cite{Weyl:1918ib,calderbank1997einstein,Wheeler:2018rjb}, and teleparallel theories of gravity \cite{Hayashi:1967se,Nester:1998mp,Obukhov:2002tm,Aldrovandi:2013wha}, though many more possibilities exist \cite{Hehl:1994ue,Puetzfeld:2004yg}. These frameworks have since been used to construct a myriad of modified gravity theories \cite{Sotiriou:2008rp,Bengochea:2008gz,Linder:2010py,Cai:2015emx,BeltranJimenez:2017tkd,BeltranJimenez:2019esp,BeltranJimenez:2019tme,Bahamonde:2021gfp,Boehmer:2021aji,Boehmer:2023fyl,Jensko:2023lmn,Csillag:2024eor,Csillag:2024oqo,Bamba:2012cp,Nojiri:2017ncd,Gialamas:2024iyu}. Of particular interest are the extended Weyl geometries introduced in \cite{aringazin1991matter}, possessing both curvature and vectorial nonmetricity. This was later generalised to connections with linear vectorial distortion \cite{BeltranJimenez:2015pnp,BeltranJimenez:2016wxw}, which was shown to give rise to promising inflationary scenarios. However, the compatibility of these types of models with cosmological observations has yet to be studied in detail; this motivates the present study, in which late-time observational data is used to constrain the new degrees of freedom associated with nonmetricity. 

In this work, we further extend the Weyl geometric framework by allowing for specific vectorial nonmetricity contributions, motivated below. Besides having a natural geometric interpretation, the benefits of these generalisations become immediately obvious when studying cosmological applications,
where the new terms display phenomenologically appealing properties. By working with integrable vectorial nonmetricity, defined solely in terms of the metric and scalar fields, we present our model in a scalar-tensor representation. Remarkably, this reveals a direct equivalence\footnote{Similar dualities can be found for the Palatini $R^2$ theories \cite{Guendelman:2015jii,Gialamas:2019nly} and the $k$-essence effective geometries \cite{Panda:2024amk,Ganguly:2025kdf}, which we refer to for comparison. Moreover, effective geometry in $k$-essence is related to the Weyl integrable nonmetricity \cite{Sawicki:2024ryt}. For more general equivalences between non-canonical scalar-tensor models and geometric modifications of gravity, see \cite{Yang:2024kdo}.} with the $k$-essence theories.  Our specific model, constructed from the Ricci scalar alone, corresponds to a quadratic form of  \textit{purely kinetic} $k$-essence \cite{Scherrer:2004au}, also known as a ghost condensate model \cite{Arkani-Hamed:2003pdi}.
This model has one additional parameter compared to $\Lambda$CDM, and we derive constraints on the free model parameter based on stability arguments (a positive sound speed squared $c_s^2 \geq 0$ and the absence of ghosts). Such models have seen particular interest due to their potential ability to describe a unified dark sector \cite{Scherrer:2004au,Bertacca:2007ux,Bertacca:2010ct,Guendelman:2015jii}. Our explicit analytic solutions confirm this behaviour at late times, with the $k$-essence fluid evolving like a cosmological constant at zeroth order and dark matter at first order.

Interestingly, until now, this particular model had not been compared with observational data.
In fact, recent results have suggested that the quadratic $k$-essence model is not viable for describing late-time cosmology altogether \cite{Quiros:2025fnn}. On the contrary, our results directly show that late-time $\Lambda$CDM dynamics are embedded within the model. Using a combination of cosmic chronometers (CC), Type Ia supernovae (SNe Ia) and Baryon Acoustic  Oscillation (BAO) data from the Pantheon$^{+}$ and DESI DR1 samples, the model is shown to be compatible with late-time observations.
The cosmological implications are further investigated with a detailed dynamical systems analysis, supporting our observational results and clarifying some points of contention in the recent literature. From the geometric perspective, we thus conclude that integrable vectorial nonmetricity is a feasible candidate to describe dark energy. 

In Section \ref{sec:2} we review the metric-affine formalism and the foundations of nonmetricity. We introduce our new form of integrable vectorial nonmetricity and discuss how this relates to previous studies. In Section \ref{sec:3} we formulate our action principle and derive the equations of motion. The scalar-tensor representation is then presented, which naturally lends itself to a $k$-essence description. Utilising this language, we discuss stability constraints related to the $k$-essence fluid energy density and sound speed squared. 
Focusing on cosmological applications, we first conduct a detailed dynamical systems analysis of the model in Section \ref{sec:4}. This makes the $\Lambda$CDM limit and the general viability of the cosmological scenario immediately clear. In Section \ref{sec:5} we perform an observational analysis using SNe Ia, CC and BAO data. We show that due to the inherent instabilities and unphysical regions of parameter space, precise knowledge of the parameter priors is needed for obtaining correct solutions. These are found from the preceding stability analyses, after which we perform Markov Chain Monte Carlo methods to constrain the model parameters. The reduced chi-squared statistic, Akaike information criterion and Bayesian information criterion are then used to compare the geometric $k$-essence model with $\Lambda$CDM.
We conclude in Section \ref{sec:6} with a discussion of the theoretical framework, observational implications and directions for future research.

\section{Geometric foundations of nonmetricity} \label{sec:2}
Given a pseudo-Riemannian manifold\footnote{We work with a Lorentzian metric with signature $(-,+,+,+)$.} $(M,g)$, a covariant derivative $\nabla$, locally described by its connection coefficient functions $\tensor{\Gamma}{^\mu _\nu_\rho}$, is not prescribed a priori. This raises the question of how this choice of choosing a connection relates to the underlying metric. It turns out that this freedom is completely determined by two geometric objects \cite{schouten2013ricci}
\begin{equation}
    \tensor{T}{^\mu _\nu _\rho}:=\tensor{\Gamma}{^\mu _\rho _\nu} - \tensor{\Gamma}{^\mu _\nu _\rho} \, , \qquad Q_{\mu \nu \rho}:=-\nabla_{\mu} g_{\nu \rho} \,, 
\end{equation}
where
\begin{itemize}
    \item[-] $\tensor{T}{^\mu _\nu _\rho}$ is the torsion tensor, measuring how much the connection fails to be symmetric, i.e. the extent to which the parallelogram law fails to hold infinitesimally,
    \item[-] $Q_{\mu \nu \rho}$ is the nonmetricity tensor, measuring how the length of a vector changes during parallel transport.
\end{itemize}

The nonmetricity tensor is often decomposed into a trace and trace-free part \cite{Hehl:1994ue}
\begin{equation} \label{Q_decomp}
    Q_{\mu \nu \rho} = Q_{\mu} g_{\nu \rho} + \overline{Q}{}_{\mu \nu \rho} \, ,
\end{equation}
where $Q_{\mu}=\tensor{Q}{_\mu _\lambda ^\lambda}/4$ is called the  \textit{dilation} vector, and $\overline{Q}_{\mu \lambda}{}^{\lambda}=0$ defines the trace-free part known as the \textit{shear} \cite{Hehl:1976kt,Hehl:1976kv}. The dilation is the well-known Weyl vector term \cite{Weyl:1918ib} and preserves relative lengths and angles under parallel transport. Notably, it follows that the light-cone structure is also kept intact. The shear does not preserve either relative lengths or angles, but volumes remain unchanged. When coupled to matter, the decomposition (\ref{Q_decomp}) gives rise to associated dilation and shear hypermomentum currents \cite{Neeman:1996zcr,Hehl:1994ue}. Although we do not explicitly include hypermomentum couplings in this work, we later discuss the dilation and shear contributions corresponding to geometries with a specific form of nonmetricity.

An arbitrary affine connection can be decomposed in a post-Riemannian sense into the Levi-Civita connection and additional non-Riemannian components as
\begin{equation}
    \tensor{{\Gamma}}{^\mu _\nu _\rho}=\overset{\circ}{\Gamma}\tensor{}{^\mu _\nu _\rho}+\tensor{N}{^\mu _\rho _\nu} \, , 
\end{equation}
where
\begin{itemize}
    \item[-] $\overset{\circ}{\Gamma} \tensor{}{^\mu _\nu _\rho}$ are the Christoffel symbols of the Levi-Civita connection,
    \item[-] $\tensor{N}{^\mu _\rho _\nu }$ is the distortion tensor, accounting for torsion and nonmetricity effects \cite{schouten2013ricci}.
\end{itemize}
 The distortion tensor $\tensor{N}{^\mu _\rho _\nu }$ is explicitly given by
\begin{equation} \label{distortion}
    \tensor{N}{^\mu _\rho _\nu }= \frac{1}{2} g^{\lambda \mu}(-Q_{\lambda \nu \rho}+ Q_{\rho \lambda \nu} + Q_{\nu \rho \lambda}) - \frac{1}{2}g^{\lambda \mu}(T_{\rho \nu \lambda}+T_{\nu \rho \lambda}- T{}_{\lambda \rho \nu}) \, .
\end{equation}
We adopt the convention where the Riemann tensor is defined by
\begin{equation} 
\tensor{R}{^\mu _\nu _\rho _\sigma}=\tensor{\Gamma}{^\lambda _\nu _\sigma} \tensor{\Gamma}{^\mu _\lambda _\rho}-\tensor{\Gamma}{^\lambda _\nu _\rho}\tensor{\Gamma}{^\mu _\lambda _\sigma} +\partial_{\rho} \tensor{\Gamma}{^\mu _\nu _\sigma}- \partial_{\sigma} \tensor{\Gamma}{^\mu _\nu _\rho} \,.
\end{equation}
Using the decomposition of the connection, the Riemann tensor can be expressed as
\begin{equation} 
    \tensor{R}{^\mu _\nu _\rho _\sigma}=\overset{\circ}{R} \tensor{}{^\mu _\nu _\rho_\sigma}+\overset{\circ}{\nabla}_{\rho} \tensor{N}{^\mu _\sigma _\nu}- \overset{\circ}{\nabla}_{\sigma} \tensor{N}{^\mu_\rho _\nu}+\tensor{N}{^{\mu} _\rho_\lambda}\tensor{N}{^\lambda _\sigma _\nu}- \tensor{N}{^\mu _\sigma _\lambda} \tensor{N}{^\lambda _\rho _\nu} \, ,
\end{equation}
where $\overset{\circ}{R} \tensor{}{^\mu _\nu _\rho _\sigma}$ is the Riemann tensor of the Levi-Civita connection.
The Ricci tensor and Ricci scalar are then obtained by contraction,
\begin{align}
R_{\nu \sigma}&=\tensor{R}{^\mu _\nu _\mu _\sigma}=\overset{\circ}{R}_{\nu \sigma}+\overset{\circ}{\nabla}_{\mu} \tensor{N}{^\mu _\sigma _\nu} - \overset{\circ}{\nabla}_{\sigma} \tensor{N}{^\mu _\mu _\nu} + \tensor{N}{^\mu _\mu _\beta}\tensor{N}{^\beta _\sigma _\nu} - \tensor{N}{^\mu _\sigma _\beta}\tensor{N}{^\beta _\mu _\nu}\,, \\
    R&=g^{\nu \sigma} R_{\nu \sigma}=\overset{\circ}{R}+\overset{\circ}{\nabla}_{\mu} \tensor{N}{^\mu ^\nu _\nu} - \overset{\circ}{\nabla}_{\sigma} \tensor{N}{^\mu _\mu ^\sigma} + \tensor{N}{^ \mu  _\mu _\beta} \tensor{N}{^\beta ^\nu _\nu} - \tensor{N}{^\mu _\sigma _\beta} \tensor{N}{^\beta _\mu ^\sigma}\,. \label{Ricci_decomp}
\end{align}
In this paper, we consider a symmetric affine connection with a  novel form of vectorial nonmetricity given by\footnote{Surprisingly, to the best of our knowledge, this form of nonmetricity has rarely appeared before in the literature \cite{Iosifidis:2018jwu,Iosifidis:2020gth,Andrei:2024vvy}. The corresponding geometric quantities, such as distortion, dilation, shear and curvature tensors, have not been studied before for this nonmetricity.}
\begin{equation}\label{vectorial_nonmetricity}
    Q_{\mu \nu \rho}= c_1 \pi_{\mu} g_{\nu \rho} + c_2\left(\pi_{\rho} g_{\mu \nu} +\pi_{\nu} g_{\rho \mu} \right)+2c_3 \pi_{\mu} \pi_{\nu} \pi_{\rho} \, ,
\end{equation}
where $\pi_{\mu}$ are the components of a one-form and $c_1$, $c_2$, $c_3$ are constants. The specific form of this type of nonmetricity can be motivated by comparing with previous works on cosmological hypermomentum, where a completely analogous $Q_{\mu \nu \lambda}$ appears on FLRW backgrounds \cite{Iosifidis:2020gth,Andrei:2024vvy}. Moreover, the physical role of these $c_2$ and $c_3$ terms will become clear when we study the cosmologies of these models.

Additionally, if the components of the one-form $\pi_{\mu}$ are  locally the gradient of a smooth function, i.e. $\pi_{\mu}=\partial_{\mu} \phi$, the nonmetricity is said to be \textit{integrable}. The first $c_1$ term appears in the standard Weyl geometries \cite{Weyl:1918ib}, while the $c_2$ terms can be found in the previously discussed works on vectorial distortion \cite{aringazin1991matter,BeltranJimenez:2015pnp}. Notably, the completely symmetric $c_3$ has rarely been studied in physics, with its mathematical interpretation discussed in detail in Appendix \ref{appendix_coofree}. 

The geometric effects induced by the nonmetricity \eqref{vectorial_nonmetricity} are fully determined by the coefficients $c_1,c_2,c_3$, their magnitudes and signs, and the structure of the one-form $\pi$. For specific choices of these coefficients, one can recover the Weyl \cite{Weyl:1918ib}, Schrödinger \cite{schrodinger1985space,Csillag:2024eor} and completely symmetric geometries \cite{Csillag_2024statmfd}. These subcases are illustrated in Fig.~\ref{fig:vectorial_nonmetricity}. Although these are special cases of \eqref{vectorial_nonmetricity}, their geometric properties differ significantly. For a thorough coordinate-free description, we refer the reader to Appendix \ref{appendix_coofree}. Here, we summarise their main properties:
\begin{itemize}
    \item[$\triangleright$] \textit{Weyl geometry:} volumes and vector lengths are not preserved under parallel transport, but angles are. If the nonmetricity is integrable, volumes are preserved as well.
    \item[$\triangleright$] \textit{Schrödinger geometry:} certain special vectors, namely the autoparallels, preserve their length, while angles and volumes generally change. Assuming integrability again restores volume preservation.
    \item[$\triangleright$] \textit{Completely symmetric geometry:} in the most general case, parallel transport does not preserve any geometric quantities -- angles, lengths, or volumes. However, when $c_3=0$ the connection admits volume-preserving transport.
\end{itemize}
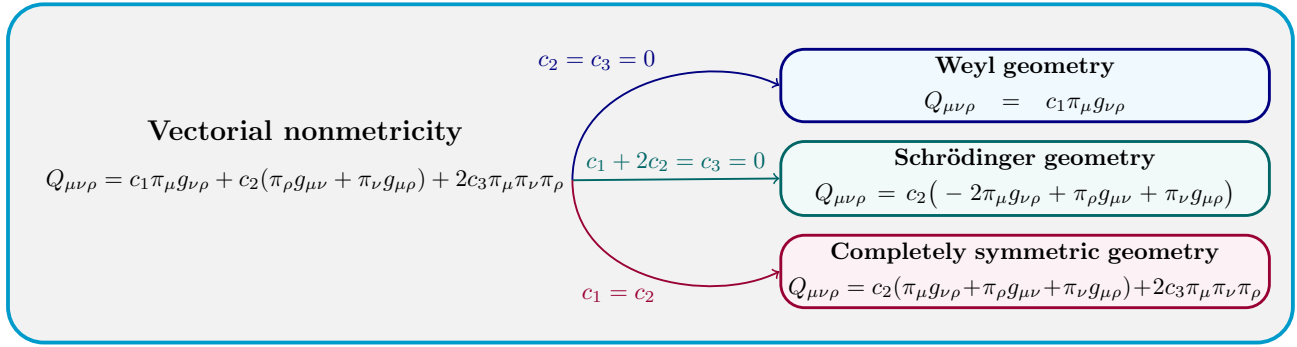
\begin{figure}[htbp]
\centering
\scalebox{0.7}{
\begin{tikzpicture}[
    node distance=0.1cm, 
    every node/.style={align=center}, 
    labelnode/.style={font=\normalsize, inner sep=2pt},
    mainbox/.style={
        draw=cyan!80!black,
        ultra thick,
        rounded corners=15pt,
        fill=gray!10,
        inner sep=2.5cm
    }
]
    \node[mainbox] at (0,0) {~~~~~~~~~~~~~~~~~~~~~~~~~~~~~~~~~~~~~~~~~~~~~~~~~~~~~~~~~~~~~~~~~~~~~~~~~~~~~~~~~~~~~~~~~~~~~~~~~~~~~~~~~~~~~~~~~~~~~~~~~~~~~~~~~}; 

    \node[font=\bfseries \large] (title) at (-5.1,0.6) {Vectorial nonmetricity};

    \node[font=\normalsize, below=of title, yshift=0cm] (mainEq) 
    {$Q_{\mu\nu\rho} = c_1 \pi_{\mu} g_{\nu\rho} + c_2 (\pi_{\rho} g_{\mu\nu} + \pi_{\nu} g_{\mu\rho}) + 2c_3 \pi_{\mu} \pi_{\nu} \pi_{\rho}$};
    
    \node[font=\itshape, right=of mainEq, yshift=0cm] (special) {~~~~~~~~~~~~~~};

    \node[
    draw=blue!50!black, 
    very thick, 
    rounded corners=10pt,
    fill=cyan!5, 
    right=of special, xshift=1cm,
    yshift=1.4cm, 
    text width=7cm
    ,font=\normalsize] (weyl) {
        \textbf{Weyl geometry}\\[0.3em]
        $Q_{\mu\nu\rho} = c_1 \pi_{\mu} g_{\nu\rho}$
    };
    
    \node[
    draw=teal!80!black, 
    very thick, 
    rounded corners=10pt,
    fill=teal!5, 
    below=of weyl, 
    yshift=-0.15cm, 
    text width=7cm, font=\normalsize
    ] (schrodinger) {
        \textbf{Schr\"{o}dinger geometry}\\[0.3em]
        $Q_{\mu\nu\rho} = c_2 \big( -2\pi_{\mu} g_{\nu\rho} + \pi_{\rho} g_{\mu\nu} + \pi_{\nu} g_{\mu\rho} \big)$
    };

    \node[
    draw=purple!80!black, 
    very thick, 
    rounded corners=10pt,
    fill=purple!5, 
    below=of schrodinger, 
    yshift=-0.15cm, 
    text width=7cm, font=\normalsize
    ] (symmetric) {
        \textbf{Completely symmetric geometry}\\[0.3em]
        $Q_{\mu\nu\rho} = c_2 (\pi_{\mu} g_{\nu\rho} + \pi_{\rho} g_{\mu\nu} + \pi_{\nu} g_{\mu\rho}) + 2c_3 \pi_{\mu} \pi_{\nu} \pi_{\rho}$
    };

\draw[->, thick, blue!50!black] 
([xshift=-0.1cm]special.west)  
.. controls +(0,1.5) and +(-1,0.5) .. 
node[above left, pos=0.55, font=\normalsize] {$c_2 = c_3 = 0$} 
(weyl.west);

\draw[->, thick, teal!80!black] 
([xshift=-0.1cm]special.west)-- 
node[above, pos=0.5, font=\normalsize] {$c_1 + 2c_2 = c_3 = 0$}
(schrodinger.west);

\draw[->, thick, purple!80!black] 
([xshift=-0.1cm]special.west) 
.. controls +(0,-1.5) and +(-1,-0.5) .. 
node[below left, pos=0.55, font=\normalsize] {$c_1 = c_2$}
(symmetric.west);

\end{tikzpicture}

}

\caption{Illustration of vectorial nonmetricity and its special cases.}
\label{fig:vectorial_nonmetricity}
\end{figure}

The decomposition given in equation \eqref{Q_decomp} can also be applied to the vectorial nonmetricity defined in equation \eqref{vectorial_nonmetricity}. For such a connection, the trace (dilation) and traceless (shear) parts of the nonmetricity take the form
\begin{equation}
    Q_{\mu}=\left( c_1 + \frac{c_2}{2} \right) \pi_{\mu} + \frac{c_3}{2} \pi_{\mu} \pi_\lambda \pi^\lambda \, , \quad \overline{Q}_{\mu \nu \rho}=c_2 \pi_{\rho} g_{\mu \nu} + c_{2} \pi_{\nu} g_{\rho \mu}+ 2 c_3 \pi_{\mu} \pi_{\nu} \pi_{\rho} - \frac{c_2}{2} \pi_{\mu} g_{\nu \rho}-\frac{c_3}{2} g_{\nu \rho} \pi_{\mu} \pi_{\lambda} \pi^{\lambda} \ ,
\end{equation}
from which it can be easily seen that $c_1$ contributes exclusively to the dilation part, while $c_2$ and $c_3$ generate contributions to both dilation and shear. Table \ref{table:dilationshear} summarises the explicit forms of the dilation vector $Q_{\mu}$ and shear tensor $\overline{Q}_{\mu \nu \rho}$ for the three special cases illustrated in Fig.~\ref{fig:vectorial_nonmetricity}. Both the Schr\"{o}dinger and completely symmetric geometries have non-vanishing dilation and shear. When coupled to matter, both of these currents could have interesting phenomenological implications. For instance, while the dilation current relates to scale invariance, the shear current has been hypothesised to encode information about the internal structure of hadrons \cite{Hehl:1977gn,Hehl:1978cb,Iosifidis:2020upr}. While hypermomentum couplings are not investigated in this work, they represent an interesting direction for future study.

\begin{table}[ht]
\centering
\renewcommand{\arraystretch}{1.8}
\small
\resizebox{\textwidth}{!}{%
\begin{tabular}{|c|c|c|}
\hline
\textbf{Geometry} & \textbf{Dilation}  & \textbf{Shear}  \\
\hline
Weyl & \( c_1\pi_\mu \) 
& \( 0 \) \\
\hline
Schrödinger & \( -\frac{3}{2}c_2 \pi_\mu \) 
& \( c_2  \pi_{\rho} g_{\mu \nu} + c_2 \pi_{\nu} g_{\rho \mu}  - \frac{c_2}{2} \pi_{\mu} g_{\nu \rho} \) \\
\hline
Completely symmetric & 
\( \frac{3c_2}{2}  \pi_\mu + \frac{c_3}{2} \pi_\mu \pi_\lambda \pi^\lambda \) 
& \( c_2 \pi_{\rho} g_{\mu \nu} + c_{2} \pi_{\nu} g_{\rho \mu}+ 2 c_3 \pi_{\mu} \pi_{\nu} \pi_{\rho} - \frac{c_2}{2} \pi_{\mu} g_{\nu \rho}-\frac{c_3}{2} g_{\nu \rho} \pi_{\mu} \pi_{\lambda} \pi^{\lambda} \) \\
\hline
\end{tabular}
}
\caption{Explicit forms of the dilation and shear for the three special geometries.}
\label{table:dilationshear}
\end{table}

For the general vectorial nonmetricity given in
(\ref{vectorial_nonmetricity}), the distortion tensor (\ref{distortion}) takes the form
\begin{equation} \label{Ktor}
    N^{\lambda}{}_{\mu \nu} = \frac{2c_2-c_1}{2} g_{\mu \nu} \pi^{\lambda} + c_1 \delta_{(\mu}^{\lambda} \pi_{\nu)} + c_3 \pi_{\mu} \pi_{\nu} \pi^{\lambda} \, .
\end{equation}
A direct but lengthy calculation shows that the Riemann tensor is given by
\begin{equation}\label{curvaturegeneral}
    \begin{aligned}
        \tensor{R}{^\rho _\mu _\sigma _\nu}&= \overset{\circ}{R} \tensor{}{^\rho _\mu _\sigma _\nu} + \frac{c_1}{2} \delta^{\rho}_{\nu} \left( \overset{\circ}{\nabla}_{\sigma} \pi_\mu -\frac{c_1}{2} \pi_\sigma \pi_\mu -c_3 \pi^\beta \pi_\beta \pi_\sigma \pi_\mu \right) +\frac{c_1}{2} \delta^{\rho}_{\mu} \left( \overset{\circ}{\nabla}_{\sigma} \pi_\nu - \overset{\circ}{\nabla}_{\nu} \pi_{\sigma} \right)\\
        &+\frac{c_1}{2} \delta^{\rho}_{\sigma} \left( - \overset{\circ}{\nabla}_{\nu} \pi_{\mu} +\frac{c_1}{2} \pi_\nu \pi_\mu + c_3 \pi^\beta \pi_\beta \pi_\nu \pi_\mu \right) + c_3 \left(\overset{\circ}{\nabla}_{\sigma}\left( \pi^\rho \pi_\nu \pi_\mu \right) - \overset{\circ}{\nabla}_{\nu} \left( \pi^\rho \pi_\sigma \pi_\mu \right) \right)\\
        &+\frac{2c_2-c_1}{2} g_{\mu \nu} \left( \overset{\circ}{\nabla}_{\sigma} \pi^{\rho}+c_2 \pi^\rho \pi_\sigma + \frac{c_1}{2} \delta^{\rho}_{\sigma} \pi_\beta \pi^\beta + c_3 \pi^\beta \pi_\beta \pi^\rho \pi_\sigma - \frac{c_1}{2} \pi^\rho \pi_\sigma \right)\\
        &-\frac{2c_2-c_1}{2} g_{\sigma \mu} \left( \overset{\circ}{\nabla}_{\nu} \pi^\rho + \frac{2c_2-c_1}{2} \pi^\rho \pi_\nu + \pi_\beta \pi^\beta \left(\frac{c_1}{2} \delta^{\rho}_{\nu} +c_3 \pi^\rho \pi_\nu \right) \right)\, .
    \end{aligned}
\end{equation}
From this expression, the Ricci scalar is found to be

\begin{equation} \label{R_expand}
    R = \overset{\circ}{R} +  3\pi_{\mu} \pi^{\mu} \left(-\frac{c_1^2}{2}+c_1 c_2 + c_2^2+ c_2 c_3 \pi_\lambda \pi^\lambda \right) + 3\overset{\circ}{\nabla}_{\mu} \pi^{\mu} \left(-c_1+c_2 \right)  \, ,
\end{equation} 
with the final two terms being total derivatives. The properties of this expression will be discussed in the next section.

\section{Action principles for integrable nonmetricity}\label{sec:3}

The task of constructing actions in non-Riemannian geometries which lead to theories beyond standard GR is challenging for a number of reasons. 
For instance,  simply taking the affine Ricci scalar and solving the connection field equations will lead back to the Levi-Civita geometry. The long-standing problem of finding an action which leads to the metric-affine Einstein tensor in non-Riemannian geometries is therefore unresolved, see for instance \cite{Klemm:2020mfp}.
Consequently, the vast majority of non-Riemannian modifications take one of two approaches:
\begin{enumerate}
    \item Construct higher-order actions from the geometric quantities $R^{\mu}{}_{\nu \rho \sigma}$, $T^{\mu}{}_{\nu \rho}$, $Q_{\mu \nu \rho}$ and their derivatives \cite{Iosifidis:2021bad}. 
    \item Assume matter-geometry couplings, giving rise to non-vanishing hypermomentum \cite{Hehl:1994ue}.
\end{enumerate}
In the first case, the connection field equations become genuinely dynamical (i.e., differential), whereas in the second, they are algebraic but include source terms. Both scenarios can lead to deviations from GR with
non-vanishing torsion and nonmetricity. A concrete example is provided by the linear vectorial distortion models studied in \cite{BeltranJimenez:2015pnp,BeltranJimenez:2016wxw}, which arise from quadratic gravitational actions.

However, there are challenges associated with both of these routes. The first, involving higher-order geometric actions, has been extensively studied in the case of $f(R)$ gravity and its extensions \cite{Sotiriou:2006hs,Sotiriou:2006qn,DeFelice:2010aj,Wu:2018idg}. When confronting these theories with data, it is not surprising that some are in tension \cite{Amarzguioui:2005zq}, while others show better agreement \cite{Gomes:2023xzk}. Nevertheless, a general comparison with observational data is challenging given the infinite freedom in the choice of model. This makes the theory space as a whole very difficult to constrain, regardless of the geometric setting. Moreover, higher-order actions are well-known to be prone to instabilities. Even in the quadratic case, the conditions for a healthy spectrum on FLRW backgrounds are highly non-trivial \cite{Aoki:2023sum}, and studying stability conditions on general backgrounds is a complicated task (e.g., see \cite{Bahamonde:2024efl} and \cite{Bahamonde:2024zkb} for studies in cubic metric-affine and quadratic teleparallel theories respectively).

The second approach with matter-geometry couplings is potentially problematic due to the non-conservation of the energy-momentum tensor and the non-geodesic motion of particles \cite{Velten:2021xxw,Bahamonde:2021akc}. The choice of matter action must also be well-motivated and give rise to non-trivial hypermomentum \cite{Hehl:1976kv,Hehl:1994ue}. For instance, standard scalar and vector fields generically have vanishing hypermomentum, while fermions (Dirac spinors) couple only to torsion. It follows that for Einstein-Weyl geometries, at least, one must either add additional nonminimal couplings or use non-standard matter sources to generate non-trivial hypermomentum. Although further work is needed to assess the viability and observational implications of these models, progress has been made in this direction \cite{Iosifidis:2020gth,Iosifidis:2021nra,Iosifidis:2023kyf,Iosifidis:2024ksa}.

In this work, we take a third approach which does not rely on higher-order actions nor matter-geometry couplings. Instead, we assume the nonmetricity vector is integrable, as described in the previous section. This allows us to construct the gravitational action using only the Ricci scalar, which now includes contributions from both curvature and integrable vectorial nonmetricity. While this approach is far less studied, it has been utilised effectively in the Weyl geometric framework \cite{Miritzis:2013ai,Paliathanasis:2020plf,Paliathanasis:2021qns,Salim:1996ei}. To our knowledge, it has not been considered in the broader case of general vectorial nonmetricity such as \cite{aringazin1991matter}. We show that the combination of integrability with the extended form of vectorial nonmetricity (\ref{vectorial_nonmetricity}) is crucial for cosmology. These new terms are necessary for late-time accelerated expansion in theories constructed linearly from the Ricci scalar. To implement the integrability condition we introduce Lagrange multipliers, verifying the consistency of the model. From this point, a natural equivalence with scalar-tensor theories emerges, which we discuss in detail. This language is then utilised to determine stability constraints for our model, to be used in the following cosmological analysis.

\subsection{Field equations and variations}\label{sec:3A}

Our gravitational action will be constructed from the Ricci scalar including the effects of both curvature and vectorial nonmetricity (\ref{R_expand}). The quartic nonmetricity vector terms have not appeared in the literature, being absent for the well-studied case of Weyl spacetimes with $c_2 = c_3 = 0$ \cite{Salim:1996ei,Quiros:2022uns,Miritzis:2013ai} and in the more general cases with linear vectorial nonmetricity with $c_3=0$ \cite{aringazin1991matter,BeltranJimenez:2015pnp}. From Fig.~\ref{fig:vectorial_nonmetricity}, we see that they are only present for the completely symmetric geometry.
These new terms can be thought of as the next leading-order contributions in powers of $\pi$. Higher order terms such as $\mathcal{O}(\pi^6)$ would require $Q_{\mu \nu \lambda}$ itself to include contractions $\pi^{\mu}\pi_{\mu}$, which seems considerably less natural and is therefore not considered in this work. Alternatively, one could construct actions from higher powers of the Riemann or nonmetricity tensors, but for simplicity we restrict our attention to actions linear in the Ricci scalar.

We now postulate that the vectorial component is integrable $\pi_{\mu} = \partial_{\mu} \phi$, where $\phi$ is an arbitrary scalar field. This condition can be implemented by supplementing the gravitational action with the Lagrange multiplier term
\begin{equation} \label{lagmult}
    S_{\lambda} = \frac{1}{2\kappa} \int \sqrt{-g} \lambda^{\mu}\big( \pi_{\mu} - \overset{\circ}{\nabla}_{\mu} \phi \big)  d^4x  \, ,
\end{equation}
which, we note, has not appeared in the literature before.
Lastly, to establish a direct correspondence with integrable Weyl-type theories \cite{Salim:1996ei}, we include the non-trivial vectorial divergence term 
\begin{equation}
     \nabla_{\mu} \pi^{\mu} = \big(2 c_1 + c_2 \big) \pi_{\mu} \pi^{\mu} + c_3\pi_{\mu} \pi^{\mu}\pi_{\nu} \pi^{\nu} + \overset{\circ}{\nabla}_{\mu} \pi^{\mu} \, ,
\end{equation}
multiplied by an arbitrary constant $\xi$, see also \cite{Novello:2008ra,Miritzis:2013ai}.

The total action for the integrable vectorial nonmetricity theory is given by 
\begin{align} 
    S[g,\pi,\phi,\lambda,\psi] = \frac{1}{2\kappa} \int \sqrt{-g} \big( R + \xi \nabla_{\mu} \pi^{\mu}  \big) d^4x + S_{\lambda}[g,\pi,\phi,\lambda] + S_{\textrm{m}}[g,\psi]   \, , \label{S_main}
\end{align}
where $S_{\textrm{m}}$ is the minimally coupled matter action with matter fields $\psi$. No further geometry-matter couplings are assumed beyond the standard minimal metric couplings, for the reasons discussed above.
The variations with respect to the Lagrange multiplier lead to the integrable condition $\pi_{\mu} = \partial_{\mu} \phi$, while the $\phi$ variations enforce the covariant conservation of the Lagrange multiplier $\overset{\circ}{\nabla}_{\mu} \lambda^{\mu} =0$. The latter is used to eliminate $\lambda^{\mu}$ from the vectorial field equation after taking the divergence, see Appendix \ref{append_var} for  full details.
The final metric and scalar field equations can be written as
\begin{align}  \label{EoM_met}
    \overset{\circ}{G}_{\mu \nu} + \overset{\circ}{\nabla}_{\mu} \phi \overset{\circ}{\nabla}_{\nu} \phi \Big(b_1 + b_2 \overset{\circ}{\nabla}_{\lambda} \phi \overset{\circ}{\nabla} \tensor{}{^{\lambda}} \phi \Big) - \frac{1}{2} g_{\mu \nu}\overset{\circ}{\nabla}_{\lambda} \phi \overset{\circ}{\nabla} \tensor{}{^{\lambda}} \phi \Big(b_1 + \frac{b_2}{2} \overset{\circ}{\nabla}_{\rho} \phi \overset{\circ}{\nabla}\tensor{}{^{\rho}} \phi \Big) &= \kappa T_{\mu \nu} \, , \\
    b_1 \overset{\circ}{\nabla}_{\mu} \overset{\circ}{\nabla} \tensor{}{^{\mu}} \phi + b_2 \Big( \overset{\circ}{\nabla}_{\mu} \phi \overset{\circ}{\nabla}\tensor{}{^{\mu}}\phi \overset{\circ}{\nabla}_{\nu} \overset{\circ}{\nabla}\tensor{}{^{\nu}} \phi + 2  \overset{\circ}{\nabla}_{\mu} \phi\overset{\circ}{\nabla}\tensor{}{^{\nu}} \phi \overset{\circ}{\nabla}_{\nu} \overset{\circ}{\nabla}\tensor{}{^{\mu}} \phi \Big) &= 0 \, , \label{EoM_phi}
\end{align}
where the following two constants parameterise the whole theory
\begin{equation} \label{bs}
    b_1 = - \frac{3 c_1^2}{2} + c_2 (3c_2 + \xi) + c_1 (3c_2 + 2 \xi) \, , \qquad b_2 = 2 c_3 (3c_2 + \xi) \, .
\end{equation}
Here, the constants $c_1,c_2$ and $\xi$ are dimensionless\footnote{This follows from dimensional analysis: nonmetricity $Q_{\mu \nu \rho}$ has dimensions of length, the vector $\pi_{\mu}=\partial_{\mu} \phi$ has dimension $L^{-1}$, and the metric $g_{\mu \nu}$ has dimension $L^2$. Hence, $c_1$ and $c_2$ must be dimensionless. Since the cubic term $c_3 \pi_\mu \pi_\nu \pi^\lambda$ must match the dimension of $\tensor{Q}{^\mu _\nu _\lambda} \sim L$, it follows that $[c_3]=L^2$.}, while $c_3$ has dimension  $L^{2}$. This will be important when constructing dynamical variables in the following section.
Since matter is minimally coupled, the energy-momentum tensor is covariantly conserved \cite{Koivisto:2005yk}
\begin{equation}
    \overset{\circ}{\nabla}\tensor{}{_{\mu}}T^{\mu \nu} = 0 \, .
\end{equation}
This conservation law can be explicitly verified by taking the covariant divergence of the metric field equation (\ref{EoM_met}) and invoking the scalar field equation (\ref{EoM_phi}).

For the three geometric settings described in the previous section, the corresponding field equations can be obtained by evaluating the constants $b_1$ and $b_2$. In the Weyl case, we leave $\xi$ arbitrary in order to reproduce past results such as \cite{Novello:2008ra,Miritzis:2013ai}, but for the other geometric cases we set it to zero. 
In the following section, we give more precise details relating to the stability of these geometries.
\begin{itemize}
    \item[$\triangleright$] \textit{Weyl geometry} with $c_2=c_3=0$ and arbitrary $\xi$ gives 
    \begin{equation} \label{Weyl}
        b_1 = -\frac{3c_1^2}{2} + 2 c_1 \xi \, , \quad b_2 = 0 \, .
    \end{equation}
    The kinetic term in the action has the correct (negative) sign if either $c_1 >0 $ and $ 4\xi < 3 c_1$, or if $c_1 <0 $ and $ 4\xi > 3 c_1$. The theory is mapped to standard gravity minimally coupled to a canonical scalar field ($b_1 = -1/2)$ with vanishing potential when $\xi =(3c_1^2-1)/4c_1$. More details on the scalar-tensor representation of the Weyl geometries and further possible extensions can be found in \cite{Novello:2008ra,Pucheu:2016act,Chatzidakis:2022mpf}.
    \item[$\triangleright$] \textit{Schr\"{o}dinger geometry} with $c_1 = -2 c_2$, $c_3=0$ and $\xi=0$ gives
    \begin{equation}
        b_1 = -9 c_2^2 \, , \quad b_2 = 0 \,  .
    \end{equation}
    The parameter $b_1$ is always negative and the kinetic term therefore always has the correct sign. 
    \item[$\triangleright$] \textit{Completely symmetric geometry} with $c_1=c_2$ and $\xi=0$ gives
    \begin{equation} \label{symm}
        b_1 = \frac{9 c_2^2}{2} \, , \quad b_2 = 6 c_2 c_3  \, .
    \end{equation}
    In this case, both $b_1$ and $b_2$ are generally non-zero. The coefficient $b_1$ is always positive, whereas $b_2$ can take either sign depending on the parameters. In the next section, we study the stability conditions associated with having a kinetic term $b_1$ with the `incorrect' (positive) sign.
\end{itemize}

Before proceeding to the cosmological analysis, it is illustrative to examine the scalar-tensor equivalent representation of the theory. This reveals a fundamental duality between the integrable vectorial nonmetricity geometries and a popular class of non-canonical scalar-tensor models.
It has been argued that the constraints one imposes from the scalar-tensor representation may not necessarily hold the same weight in the geometric setting \cite{Miritzis:2013ai,Pucheu:2016act}. Nonetheless, we will find these constraints to be extremely useful when finding viable cosmological solutions.

\subsection{Scalar-tensor representation}
Rewriting our initial Lagrangian in a scalar-tensor form allows us to compare with previously studied models and make some general claims about perturbative stability. Using the integrable vectorial nonmetricity $\pi_{\mu} = \overset{\circ}{\nabla}_{\mu} \phi$ and the parametrisation introduced in (\ref{bs}), the Lagrangian (\ref{S_main}) can be expressed as 
\begin{equation} \label{Ricciscalarplusxi}
    R + \xi \nabla_{\mu} \pi^{\mu} = \overset{\circ}{R} + b_1 \overset{\circ}{\nabla}\tensor{}{^{\mu}}\phi \overset{\circ}{\nabla}_{\mu} \phi + \frac{1}{2} b_2\overset{\circ}{\nabla}\tensor{}{^{\mu}} \phi \overset{\circ}{\nabla}_{\mu}\phi \overset{\circ}{\nabla}\tensor{}{^{\nu}}\phi \overset{\circ}{\nabla}_{\nu}\phi + b_3 \overset{\circ}{\nabla}_{\mu} \overset{\circ}{\nabla}\tensor{}{^{\mu}}\phi \, ,
\end{equation}
where the final boundary term has an irrelevant prefactor $b_3 = -3c_1 + 3 c_2 + \xi$. Hence, the part of the Lagrangian which contributes to the dynamical equations depends only on the two parameters $b_1$ and $b_2$, as is evident from the field equations (\ref{EoM_met})-(\ref{EoM_phi}).
In the case that $b_2=0$, the theory resembles Einstein gravity minimally coupled to a scalar field with a kinetic term $\propto b_1$. For a healthy kinetic term, one may naively expect $b_1<0$. However, it has been argued that because $\phi$ is geometric in nature, there are no restrictions on the sign of this term \cite{Miritzis:2013ai}. On the other hand, models with $b_1 >0$ correspond to well-known ghost condensate models \cite{Arkani-Hamed:2003pdi}. Moreover, such models can be consistently viewed as low-energy effective theories \cite{Babichev:2018twg}, giving further reason to consider these parameter choices as potentially physical.
Without a priori assuming the sign of either $b_1$ or $b_2$, we will go on to investigate the dynamical stability of cosmological solutions for both choices.
Lastly, we point out that extensions of (\ref{Ricciscalarplusxi}) within the integrable Weyl geometry ($b_2=0$) with an additional potential term $V(\phi)$ have also been studied, see for instance \cite{Pucheu:2016act}. However, from the geometric point of view, $\phi$ is related to the integral of nonmetricity and therefore including these terms does not appear natural in our setting. For this reason, we neglect any potential $V(\phi)$.

Inspection of our Lagrangian (\ref{Ricciscalarplusxi}) reveals a direct equivalence with a class of `purely kinetic' $k$-essence models \cite{Armendariz-Picon:1999hyi,Armendariz-Picon:2000ulo}. As this structure emerges directly from our underlying spacetime geometry, we will also refer to the model as \textit{geometric $k$-essence}. 
Characterising the scalar field part of the action in the function $P(\phi,X)$ where $X = -\overset{\circ}{\nabla}\tensor{}{^{\mu}}\phi \overset{\circ}{\nabla}_{\mu} \phi/2 = -\pi_{\mu} \pi^{\mu}/2$, the equivalent geometric $k$-essence Lagrangian is simply
\begin{equation} \label{Pmodel}
    \mathcal{L}_{k} = P(\phi,X) = -b_1 X + b_2 X^2 \, ,
\end{equation}
where the factor of $1/2$ in (\ref{S_main}) has been taken into account. This specific quadratic form of kinetic $k$-essence action has been well-studied, appearing in the first proposed kinetically driven inflationary paradigm \cite{Armendariz-Picon:1999hyi} (see also \cite{Chiba:1999ka,Scherrer:2004au,Armendariz-Picon:2000ulo,dePutter:2007ny} for further historic studies). The parameters $b_1$ and $b_2$ are often assumed to be non-negative, but we will consider the whole parameter space for our dynamical analysis\footnote{Usually $\phi$ is rescaled to absorb one of the parameters $b_1$ or $b_2$, but this is only possible when $b_1$ and $b_2$ have a fixed sign. After the dynamical systems analysis, once the signs of $b_1$ and $b_2$ are determined via stability arguments, we will perform this rescaling.}.
The signs of these two parameters also turn out to be relevant in astrophysical contexts, such as in relation to screening mechanisms \cite{Bezares:2021dma,Bezares:2021yek}, and a comparison of the theoretical constraints on $b_1$ and $b_2$ in different regimes would be interesting to investigate in the future.

The quadratic power-law model (\ref{Pmodel}) can be obtained by simply considering the first two (non-constant) terms of the Taylor expansion of a generic function $P(X)$ with $P \rightarrow 0$ as $X \rightarrow 0$. It then becomes apparent that had we included higher-order contractions of the vector $\pi^{\mu}\pi_{\mu}$ in our nonmetricity ansatz (\ref{vectorial_nonmetricity}), the next leading-order terms in the Taylor expansion of $P(X)$ would be obtained. This would allow one to construct any purely kinetic, power-law $k$-essence model from an action constructed from the Ricci scalar with integrable vectorial nonmetricity. In this work, we focus only on the quadratic models.


The equation of state for the $k$-essence fluid is given by
\cite{Armendariz-Picon:1999hyi}
\begin{equation} \label{w_phi}
    w_{\phi} = \frac{P}{\rho_{\phi}} = \frac{P}{2X P_{,X}-P } = \frac{b_1 - b_2 X}{ b_1 - 3 b_2 X} \, ,
\end{equation}
where we have identified $P$ as the pressure and the energy density as $\rho_{\phi} = 2X P_{,X}-P$. For $b_2 =0$ we have the kinetic-dominated canonical scalar field equation of state $w_{\phi} = 1$, while for $b_1=0$ we instead have $w_{\phi} = 1/3$. The latter is the radiation tracking solution \cite{Armendariz-Picon:2000nqq}. 
The adiabatic sound speed of the scalar field  perturbations is \cite{Armendariz-Picon:1999hyi}
\begin{equation} \label{cs2}
    c_s^2 = \frac{P_{,X}}{\rho_{\phi,X}}  = 
    \frac{b_1 - 2b_2 X}{b_1 - 6 b_2 X} = \frac{1+w_{\phi}}{5-3w_{\phi}} \, .
\end{equation}
For the two cases $b_1=0$ or $b_2=0$, the sound speed is equal to the background equation of state $c_s^2 =w_{\phi} =1/3$ and $c_s^2 =w_{\phi} =1$  respectively. In more general cases, the sound speed varies and is unbounded from below and above.

It is well known that non-canonical scalar field Lagrangians can introduce pathologies at both the classical and quantum level. At the classical level, stability under small perturbations and the absence of ghosts imposes consistency conditions on the model. These can be implemented by requiring a non-negative sound speed squared\footnote{Note that imaginary sound speeds have also been studied in $k$-essence scenarios \cite{Bouhmadi-Lopez:2016cja}. In an EFT approach, the usual gradient instabilities with $c_s^2<0$ are regulated below the cutoff by higher-derivative operators that modify the dispersion relation; see the original ghost condensate construction \cite{Arkani-Hamed:2003pdi} and related EFT discussion in \cite{deRham:2017aoj}. A concrete UV completion in the $P(X)$ models is given in \cite{Babichev:2018twg}.} and a positive derivative of the energy density for the $k$-essence fluid \cite{Garriga:1999vw,Armendariz-Picon:2000ulo},
\begin{equation} \label{constr}
    c_s^2 \geq 0   \quad \textrm{and} \quad  \rho_{\phi,X} >  0  \, .
\end{equation}
For the purely kinetic model with $P(X)=-b_1X+b_2X^2$, these conditions give
\begin{equation}
2b_2 X \geq b_1, \; \; \text{and} \; \; 6b_2 X >b_1.
\end{equation}
Given that on cosmological backgrounds the kinetic term $X$ is non-negative, there are four distinct cases required for stability:
\begin{equation}
\label{constr2}
{
\left\{
\begin{aligned}
&\text{$b_2>0$ and $b_1 > 0$ with $X \geq \frac{b_1}{2b_2}$} \, , \\
&\text{$b_2 > 0$ and $b_1 \leq 0$} \, , \\
&\text{$b_2<0$ and $b_1 < 0$ with $X < \frac{b_1}{6b_2}$} \, , \\
&\text{$b_2=0$ and $b_1<0$}\,  .
\end{aligned}
\right.
}
\end{equation}
The two branches with either $b_1 >0$ or $b_1 \leq 0$ display uniquely different physical properties, which we discuss below. Note that the case with $b_2<0$ and $b_1>0$ is always unstable.

A further condition sometimes imposed for viable background cosmological dynamics is a positive energy density $\rho_{\phi} \geq 0$ \cite{Bahamonde:2017ize}. Such a condition is usually independent from the stability criteria (\ref{constr}); however, for our particular quadratic model, if both $\rho_{\phi} \geq 0$ and $c_s^2 \geq 0$ are upheld,
this gives identical constraints to (\ref{constr2}). We therefore use $\rho_{\phi} \geq 0$ interchangeably with $\rho_{\phi,X} \geq 0$ in this work. Note that $b_1 < 0$ also arises from requiring the quadratic kinetic term appearing in the action $b_1 \pi^{\mu} \pi_{\mu} = - 2b_1 X$ to have the correct sign, see Sec. \ref{sec:3A}. Hence, both the Weyl and Schr\"{o}dinger geometries (with $\xi=0$ and $b_2=0$) satisfy the first set of stability constraints. The result is less trivial in the case where the quadratic kinetic terms are present in the action, $b_2 \neq 0$, such as the completely symmetric geometries.

In the branch\footnote{Note that in the case with $b_1 < 0$ and $b_2<0$, we retrieve the well-known result that all stable configurations are superluminal \cite{Babichev:2007dw}. However, in the next section we show that these regions evolve to past instabilities.} where $b_1  \leq 0$ the equation of state is bound between $1/3 \leq w_{\phi} < 5/3$, while in the other branch with $b_1>0$ we instead have $-1 \leq w_{\phi} \leq 1/3$. The former $b_1 \leq 0$ branch therefore cannot give rise to stable accelerating solutions. Consequently, neither branch can allow for stable phantom crossing, see also \cite{Vikman:2004dc}. More importantly, in the physically interesting $b_1>0$ case, 
the sound speed is free from superluminal  propagation at all times and bound between $0 \leq c_s^2 < 1/3$. However, a more detailed analysis of the full phase space is needed to ensure that these unstable regions with $c_s^2 <0$ or $\rho_{\phi} <0 $ are dynamically avoided at all times (i.e., along trajectories within the physical phase space). We will investigate exactly this in the dynamical systems analysis of the following section. We will also explain why these additional stability arguments are essential for constraining the model against observations.

\section{Cosmological dynamics of geometric k-essence}\label{sec:4}

Having introduced the key theoretical properties and geometric origin of the purely kinetic $k$-essence model, we now turn to cosmological applications. Recent cosmological studies of various $k$-essence models have shown a number of interesting features, such as early dark energy behaviour \cite{Tian:2021omz} and the ability to regularise cosmological singularities \cite{FerreiraJunior:2023qxi,Huang:2022hye,Das:2023umo}. They have also shown promise in the possibility of alleviating cosmic tensions \cite{Huang:2021urp,Copeland:2023zqz,HosseiniMansoori:2024pdq}. However, these works have also included potential terms, which are not feasible from our geometric standpoint. The purely kinetic models, and in particular the quadratic models (\ref{Pmodel}), have received far less attention. Recent works on these models \cite{Hussain:2024qrd,Quiros:2025fnn} tend to fix the signs of the free parameters $b_1$ and $b_2$ in their analysis for phenomenological reasons. However, as is clear from equations (\ref{Weyl})-(\ref{symm}), this would rule out some of the geometric frameworks introduced in Section \ref{sec:2}. We therefore leave $b_1$ and $b_2$ arbitrary for now, working in as much generality as possible.

The spatially flat Friedmann–Lema\^{i}tre–Robertson–Walker (FLRW) metric of a homogeneous and isotropic background is given by 
\begin{equation}
    ds^2 = -dt^2 + a(t)^2 \delta_{ij} dx^{i} dx^j \, .
\end{equation}
On these backgrounds, nonmetricity is generically composed of three independent functions of time \cite{Minkevich:1998cv}. One arrives at our specific vectorial nonmetricity (\ref{vectorial_nonmetricity}) by fixing two of these functions and making straightforward redefinitions\footnote{It is interesting to note that the completely general nonmetricity tensor satisfying the FLRW symmetries (see Eq.~(70) of \cite{Iosifidis:2020gth}) takes the same form as our specific vectorial nonmetricity in (\ref{vectorial_nonmetricity}). In fact, promoting the two constant parameters $c_1$ and $c_2$ to arbitrary functions of time and rescaling $\pi(t)$ directly reproduces the most general cosmological nonmetricity \cite{Minkevich:1998cv,Iosifidis:2020gth}. For further study of this type of nonmetricity on FLRW backgrounds we refer to \cite{Iosifidis:2020upr,Iosifidis:2021bad,Iosifidis:2022evi}.}. Furthermore, with the integrable condition, the nonmetricity vector is simply the time derivative of the scalar field $\pi_{\mu} = \dot{\phi}$. Given that our action is algebraic in nonmetricity, this leaves only a single degree of freedom $\phi$.
The dynamics for $\phi$ are determined from the scalar field equations (\ref{EoM_phi}), which are fundamentally derived from action principles. This is in contrast to many past approaches studying cosmological nonmetricity \cite{Iosifidis:2020upr,Iosifidis:2021bad,Iosifidis:2022evi,Csillag:2024eor}, where the dynamics are prescribed by hand at the level of equations of motion (equivalent to choosing a specific equation of state).
Similarly, in our approach the continuity equation follows from our action principles, without needing to be assumed or added by hand.

For the matter sector, we introduce a perfect fluid energy-momentum tensor
\begin{equation}
    T_{\mu \nu} = (\rho + p) u_{\mu} u_{\nu} + g_{\mu \nu} p \, ,
\end{equation}
with energy density $\rho$, pressure $p$ and four velocity $u^{\mu}$.
The cosmological equations then read
\begin{align}
    3H^2 &= \kappa \rho  - \frac{1}{2} b_1 \dot{\phi}^2 + \frac{3}{4} b_2 \dot{\phi}^4 \, , \\
        3 H^2 + 2\dot{H} &= - \kappa p
    + \frac{1}{2} b_1 \dot{\phi}^2 - \frac{1}{4} b_2 \dot{\phi}^4 \, ,
\end{align}
and the scalar field equation is
\begin{equation}
    b_1 \overset{\circ}{\Box} \phi - 3 b_2 \dot{\phi}^2\left(  \overset{\circ}{\Box}\phi  + 2 H \dot{\phi} \right) =0 \, ,
\end{equation}
where $\overset{\circ}{\Box} \phi = -3H \dot{\phi} - \ddot{\phi}$ is the Levi-Civita d'Alembert operator. 
It is then straightforward to confirm that the covariant conservation of matter follows from the cosmological and scalar field equations
\begin{equation} \label{cont_0}
    \dot{\rho} + 3 H(\rho + p ) = 0\, .
\end{equation} 
It is convenient to make the redefinition $\varphi(t) := \dot{\phi}$, after which the field equations can be stated as
\begin{align} \label{varphi1}
    3H^2 &= \kappa \rho - \frac{1}{2} b_1 \varphi^2 + \frac{3}{4} b_2 \varphi^4 \, , \\  \label{varphi2}
        3 H^2 + 2\dot{H} &= - \kappa p + \frac{1}{2} b_1 \varphi^2 - \frac{1}{4} b_2 \varphi^4 \, , \\ 
    b_1(3H\varphi+\dot{\varphi}) &= b_2 ( 3H \varphi^3 + 3 \varphi^2 \dot{\varphi} ) \, .  \label{varphi3}
\end{align}
Lastly, we give the scalar field pressure, energy density, equation of state and adiabatic sound speed squared in terms of the new field $\varphi^2 = \dot{\phi}^2 = 2X$, 
\begin{equation}
\begin{aligned} \label{eos_rel}
    p_{\varphi} &= -\frac{1}{2} b_1 \varphi^2 + \frac{1}{4} b_2 \varphi^4 \,, \quad & \rho_{\varphi} &= -\frac{1}{2} b_1 \varphi^2 + \frac{3}{4} b_2 \varphi^4 \, , \\ 
    w_{\varphi} &= \frac{2b_1 -  b_2 \varphi^2}{2b_1-3b_2 \varphi^2} \, , \quad &  c_s^2 &=  \frac{b_1 -  b_2 \varphi^2}{b_1-3b_2 \varphi^2}  \, .
\end{aligned}  
\end{equation}
This fully characterises the effective fluid resulting from linear actions with integrable vectorial nonmetricity in cosmological spacetimes. 

\subsection{Dynamical systems formulation}

Dynamical systems theory techniques have been successfully applied to many cosmological models, providing useful qualitative information about the background dynamics and evolution of physical parameters \cite{Bahamonde:2017ize,Coley:2003mj,Boehmer:2022wln}. In the case of the kinetic $k$-essence models, recent studies can be found in \cite{Fang:2014qga,Chakraborty:2019swx,Quiros:2025fnn,Hussain:2024qrd}. To date, however, there has been no fully satisfactory dynamical systems analysis of the purely kinetic quadratic models. Regardless of the choice of variables used, the resulting evolution equations and phase space contain problematic divergences. The points where the equations diverge can therefore not be trusted, unless further techniques or numerical methods are used. 
For instance, in the case at hand, applying linear stability fails at these points, with different eigenvalues being obtained depending on which trajectory is taken---as is well-known, a limit is not well-defined if different values can be obtained when approaching the point from different directions. Such limits appear to have been taken in past works, explaining some of the contradictory results found in the literature. 
Nonetheless, the numerical solutions around these points appear to be consistent, indicating that this is a problem with the formulation and not the underlying physical model. 
Similarly, the problems will not affect our physical results, provided these divergent points are avoided. A short discussion of past works and the disagreements in the literature is given in Appendix \ref{sec:comment}.

The dynamical systems analysis will be utilised for two main purposes: the first is to study the stability of the fixed point solutions; the second is to find accurate priors for the late-time observational analysis. These dynamical systems methods are less-often used to directly inform observational studies, but informative examples can be found in \cite{Amendola:2020ldb,Fritis:2023lhd}. Here we show this to be an essential step for the $k$-essence models, due to the existence of regions which violate the stability conditions  \eqref{constr}. The information gained from the stability analysis allows us to avoid these pathological regions and obtain physically realistic cosmological solutions. 

Dividing the Friedmann equation by $3H^2$ gives us the dimensionless Hubble constraint
\begin{equation}
    1 = \frac{\kappa \rho}{3H^2}  - \frac{b_1 \varphi^2}{6 H^2} + \frac{b_2\varphi^4}{4H^2} \, .
\end{equation} 
Note that $b_1$ is dimensionless while $b_2$ has dimension $H^{-2}$, which follows from (\ref{bs}).
Next, we introduce dimensionless variables
\cite{Chakraborty:2019swx}
\begin{align} \label{dyn_vars}
     \Omega_m = \frac{\kappa \rho}{3H^2} \, , \quad
     x = -\frac{b_1 \varphi^2}{6H^2}  \, ,  \quad
     y = \frac{b_2 \varphi^4}{4H^2}  \, .
\end{align}
The variables $x$ and $y$ are clearly related to one another, so one may wonder whether these are actually independent. However, if we define a  new variable  $\widetilde{y} = y/x^2$, we see that $\widetilde{y}=9 b_2 H^2/b_1^2$. This allows us to determine $H$ given solutions in $x$ and $y$. Alternatively, it shows that $x$ does not uniquely determine $y$ and vice-versa due to the presence of the unknown function $H$. In this sense, one variable can be related to the scalar field $\varphi$ and the other to the Hubble parameter $H$. In fact, a similar choice of variables was used in the recent study\footnote{In terms of the kinetic term $X$, the variables used in \cite{Quiros:2025fnn} are $x=b_1 X/(b_1 X+H^2)$ and $y=b_2 H^2 /(b_1^2 + b_2 H^2)$. This can be compared with our choice (\ref{dyn_X}) to give a mapping between both sets, provided $b_1$ and $b_2$ are both non-negative (as assumed in their work).} \cite{Quiros:2025fnn}, with one variable for $\varphi$ and the other for $H$. One difference in our dynamical systems analysis is that we allow $b_1$ and $b_2$ to be both positive and negative in order to construct a more general phase space and relate to our underlying geometry (\ref{Ktor}). We will then find constraints on the values of $\varphi_0$, $b_1$ and $b_2$ to use as priors in the late-time observational analysis of Section \ref{sec:5}. As a last note,  we can write the scalar field variables in terms of the kinetic term as
\begin{equation} \label{dyn_X}
    x = -\frac{b_1 X}{3H^2} \, , \qquad y = \frac{b_2 X^2}{H^2} \, .
\end{equation}

With our variables (\ref{dyn_vars}), the Hubble constraint is given by
\begin{equation} \label{hub_const}
    1 = \Omega_m + x + y  \, .
\end{equation}
Assuming a linear matter equation of state $p = w \rho$, we can use the acceleration equation to write the derivative of the Hubble factor as
\begin{equation} \label{Hdot}
    \frac{\dot{H}}{H^2} = -\frac{1}{2}\big(3+3x + y  + 3w \Omega_m \big) \, .
\end{equation}
We then find the effective equation of state $w_{\textrm{eff}}$ and deceleration parameter $q$ in terms of our dynamical variables,
\begin{align}
    w_{\textrm{{eff}}} &= x + \frac{y}{3} + w \Omega_m \, , \\
    q &:= -\frac{\dot{H}}{H^2}-1 = \frac{1}{2}\big(1+3x+y +3w\Omega_m \big) \, . \label{decel}
\end{align}
When matter dominates over the scalar field, $\Omega_m=1$ and $x= y =0$, and we obtain $w_{\textrm{{eff}}}=w$ and $q=(1+3w)/2$ as expected. 
It will also be important to state the scalar field equation of state (\ref{w_phi}) and its sound speed squared (\ref{cs2}) in terms of our new variables
\begin{equation} \label{cs2b}
    w_{\phi} = \frac{3 x +y}{3(x+y)} \, , \qquad c_s^2 = \frac{1}{3} + \frac{2 x}{3(x+2y)}\, .
\end{equation}
These will be used as additional stability constraints for the physical phase space and are essential for the viability of these models, see also \cite{Quiros:2025fnn}. Introducing the total scalar field density $\Omega_{\phi} = x+y$, we can then simply write 
\begin{equation}
    w_{\textrm{eff}} = w_{\phi} \Omega_{\phi} + w \Omega_m  \, .
\end{equation}
This confirms that during matter domination $\Omega_m=1$ with $x \rightarrow 0$ and $y \rightarrow 0$ the apparent divergence of the scalar field equation of state is not physical, with the correct physical quantity being $w_{\phi} \Omega_{\phi}$ which simply vanishes.
The origin $(0,0)$ must therefore be taken carefully to obtain the correct physical interpretation, which will show up in the fixed point analysis.

To derive the evolution equations we take the derivative with respect to $N = \log a$ and use the Friedmann constraint to eliminate $\Omega_m$, yielding
\begin{align} \label{dyn1}
    \frac{d x}{dN} &= x \Big(3x+y+1+3w(1-x-y) - \frac{4x}{x+2y}\Big)
    \, ,\\ \label{dyn2}
    \frac{d y}{dN} &= y \Big(3x+y-1+3w(1-x-y) - \frac{8x}{x+2y}
    \Big)
    \, .
\end{align}
The equations for $dx/dN$ and $dy/dN$ take a somewhat similar form, owing to the fact that the variables are related by $y=9 b_2 H^2 x^2/b_1^2$. One can also show that they satisfy the simple relation
\begin{equation}
    \frac{1}{x}\frac{dx}{dN} - \frac{1}{y}\frac{dy}{dN} = \frac{6 x + 4y}{x+2y} \, ,
\end{equation}
which follows from (\ref{dyn1})-(\ref{dyn2}), or from differentiating $y=9 b_2 H^2x^2/b_1^2$ and using (\ref{dyn1}) along with the expression for $\dot{H}$ (\ref{Hdot}). This equation is actually analytically solvable, and we refer to Appendix \ref{appendix:analyticalsolution} for exact solutions in this model.

An important property of the system is that the line $x=-2y$ displays divergent behaviour due to the denominator vanishing in the final terms of equations (\ref{dyn1})-(\ref{dyn2}). Similarly, the adiabatic sound speed diverges on this line. With the exception of the matter-dominated limit $x\rightarrow0$ and $y\rightarrow0$, which must be studied carefully, all other points on the $x=-2y$ line represent genuine pathological behaviour. However, the perturbative stability constraints will exclude these regions from the  physical phase space.

\subsection{Fixed point analysis and stability constraints}\label{sec:fixed}

In this part we analyse the \textit{fixed points} $(x_*,y_*)$ of the system and their linear stability, characterised by the eigenvalues of the Jacobian matrix \cite{perko2013differential}
\begin{equation} \label{stab}
J(x_{*},y_{*})= \left.\left( \begin{array}{cc}
\displaystyle\frac{\partial {f_1}}{\partial x}&     
\displaystyle\frac{\partial {f_1}}{\partial y}\\
\vspace{-8pt}\\
\displaystyle\frac{\partial {f_2}}{\partial x}&     
\displaystyle\frac{\partial {f_2}}{\partial y}
\end{array} \right)\right|_{(x,y)=(x_*,y_*)}\, ,\\ 
\end{equation}
where $f_1$ and $f_2$ are defined by equations (\ref{dyn1}) and (\ref{dyn2}) respectively. We will also include the perturbative stability constraint $c_s^2 \geq 0$ and positive energy condition $\rho_{\phi} \geq 0$ in our phase space analysis. Lastly, regions with accelerated expansion $-1 \leq q < 0$ and phantom expansion $q < -1$ will be highlighted. A combination of all of these constraints will determine the physically relevant phase space and in turn give constraints on the parameters $b_1$ and $b_2$.

The fixed points $(x_*,y_*)$ of the autonomous system (\ref{dyn1})-(\ref{dyn2}) are given in Table \ref{tab:fixed}. In what follows, we fix the matter equation of state to describe pressureless dark matter $w=0$.
Point A at $(0,1)$ is dominated by the quartic part of the scalar field $y= \Omega_{\phi} = 1$, with $\Omega_m=0$. The equation of state is $w_{\textrm{eff}}=w_{\phi}=1/3$ and the adiabatic sound speed squared is positive $c_s^2 = 1/3$. This is the radiation tracking solution \cite{Armendariz-Picon:2000nqq}, and its appearance in the quadratic kinetic $k$-essence models is well-studied \cite{Scherrer:2004au}.
The eigenvalues of the linear stability matrix (\ref{stab}) are $\lambda_1 = 1$, $\lambda_2=2$, indicating an unstable fixed point and a past attractor of the physical phase space. The second point B at $(1,0)$ is dominated by the quadratic part of the scalar field $x=1$. This corresponds to the usual stiff-matter kinetic point found in quintessence models \cite{Copeland:1997et}, with $w_{\textrm{eff}}=w_{\phi}=1$ and $c_s^2 =1$. It acts as a saddle, with eigenvalues $\lambda_1 = 3$, $\lambda_2=-6$, but is not relevant for late-time dynamics. The last scalar field dominated fixed point $\Omega_{\phi}=1$ is the de Sitter attractor C at $(-2,3)$. This represents an accelerated expansion solution with $w_{\textrm{eff}}=w_{\phi}=-1$ and is perturbatively stable $c_s^2 =0$. The eigenvalues are $\lambda_{1} = \lambda_{2} =-3$, and existence requires both $b_1$ and $b_2$ to be positive. It follows that only the completely symmetric geometries of Section \ref{sec:2} are compatible with these solutions. Hence, we see the importance of the new vectorial nonmetricity contribution in (\ref{vectorial_nonmetricity}).

Finally, we point out that the matter dominated `point' at $(0,0)$, which has the physical properties $\Omega_m=1$, $\Omega_{\phi}=0$ and $w_{\textrm{eff}}=0$, is not captured by our dynamical system. This is because the limit $x\rightarrow0$ and $y\rightarrow0$ is undefined\footnote{Explicitly, the limit $x\rightarrow0$ and $y\rightarrow0$ is not defined for our system as it lies on the divergent line $x=-2y$, hence O is not technically a critical point. Similarly, the previous analysis using the same variables \cite{Chakraborty:2019swx} correctly does not list the origin as a fixed point. Analyses that do study these points often obtain incorrect results (e.g., the stability analysis is inconsistent, with the signs of the eigenvalues depending on how one approaches the matter point). Using the alternative set of variables \cite{Quiros:2025fnn} does not directly solve the issue, as the $y=1$ line is divergent (see Fig.~1 of  \cite{Quiros:2025fnn}). The limits taken to obtain the values of $q$ and $w_{\textrm{eff}}$ therefore appear inconsistent on this line.}.
However, when following a heteroclinic orbit A $\rightarrow$ O $\rightarrow$ C that passes arbitrarily close to A (but does not reach it), the origin acts as a matter saddle. This is confirmed by the phase diagrams and the numerical solutions, and by taking the limit along the $y$-axis from point A to point O or  the limit along the line $y=-3x/2$ from point C to point O. Further comments on these problematic points are given in Appendix \ref{sec:comment}.

\begin{table}[htb!]
    \centering
    \begin{tabular}[t]{|c|c|c|c|c|c|c|c|}
    \hline
         Point & $\Omega_m$ & $x$ & $y$ & $w_{\rm{eff}}$ & $c_s^2$  & stability & conditions \\ \hline\hline 
        A & $0$ & $0$ & $1$ & $\frac{1}{3}$ & $\frac{1}{3}$ &  unstable & $b_2 > 0$
         \\
          B & $0$ & $1$ & $0$ & $1$ & $1$ &  saddle  & $b_1 < 0$
         \\
          C & $0$ & $-2$ & $3$ & $-1$ & $0$ &  stable & $b_1 >0$ \& $b_2 > 0$
         \\
         \hline
         O & $1$ & $0$ & $0$ & $w$ & -- &  -- & none
         \\
         \hline
    \end{tabular}
    \caption{Fixed points of the system, with matter density parameter $\Omega_m$, effective equation of state $w_{\rm{eff}}$ and scalar field adiabatic sound speed squared $c_s^2$. Linear stability and existence conditions have also been given in the last two columns. The origin O is listed separately, as the limit is not well-defined in our phase space. However, the physical properties $\Omega_m$ and $w_{\textrm{eff}}$ can be determined directly from the cosmological field equations.}
    \label{tab:fixed}
\end{table}

The phase portraits for the system are given in Fig.~\ref{fig:A}. The variables $x$ and $y$ are unbounded from above and below and so the space is non-compact. However, trajectories do not cross the $x=0$ and $y=0$ lines, evident from (\ref{dyn1}) and (\ref{dyn2}). It follows that the signs of $b_1$ and $b_2$ determine which distinct part of the phase space the model corresponds to for all time $N = \log a$.
For a positive matter density $\Omega_m \geq 0$, the Friedmann constraint (\ref{hub_const}) subjects the physical phase space to lie below the line $y=1-x$, indicated by the solid black line; the grey region above this line is therefore unphysical.
Accelerated expansion (blue shaded region) with $-1 \leq q < 0 $ and the phantom regime (green shaded region) with $q < -1$ are shown in Fig.~\ref{fig:A1}. The de Sitter critical point C lies on the border of these two regions with $q =-1$. 

The second figure \ref{fig:A2} shows the regions which violate the $k$-essence stability conditions $\rho_{\phi} < 0$ (orange region with orange-dashed border) and
$c_s^2<0$ (red region with red-dashed border). We have also indicated the superluminal regions $c_s^2>1$ in purple. It is interesting to observe that the only  superluminal region satisfying the stability constraints~(\ref{constr2}), with $b_1<0$ and $b_2<0$,
will necessarily run into gradient instabilities in the past. This can be seen by tracing trajectories backwards until they hit the divergent line $y=-x/2$. Note that $c_s^2<0$ occurs between the two lines $y=-3x/2$ and $y=-x/2$, with the latter line being exactly where the dynamical equations diverge. Orbits within the triangle with vertices O, A and C are therefore safe from this pathological behaviour for all $N$. (This is made clearer in Fig.~\ref{fig:prior} of the following section, which shows a zoomed-in phase diagram.)

\begin{figure}[htb!]
    \centering
    \begin{subfigure}[t]{0.48\textwidth} 
        \centering
        \includegraphics[width=0.95\linewidth]{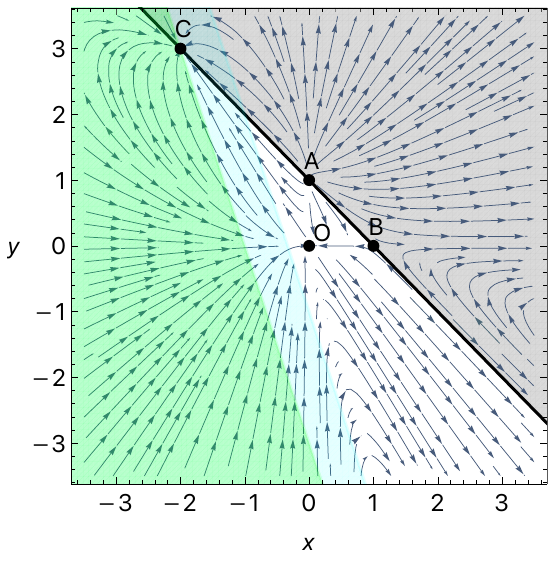}
        \caption{Phase space showing accelerated expansion $ -1 \leq w_{\textrm{eff}} \leq -1/3$ (blue) and phantom regime $w_{\textrm{eff}} <- 1$ (green).}\label{fig:A1}
    \end{subfigure}%
    ~ 
    \begin{subfigure}[t]{0.48\textwidth} 
        \centering
        \includegraphics[width=0.95\linewidth]{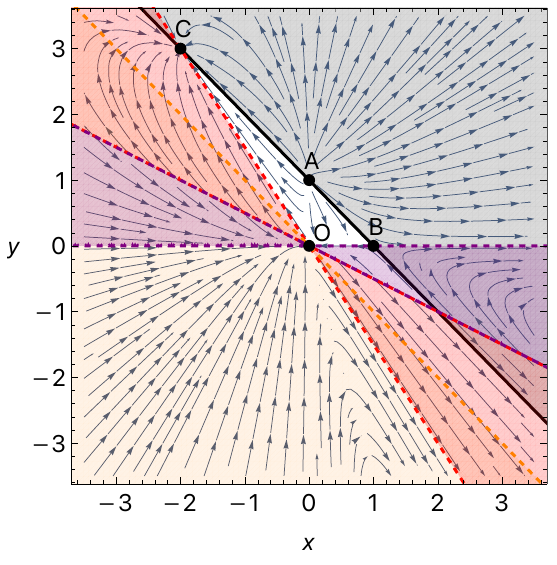}
        \caption{Phase space showing stability conditions. The red region indicates $c_s^2 < 0$, the orange region indicates $\rho_{\phi} < 0$, and the purple region  indicates $c_s^2 >1$.}\label{fig:A2}
    \end{subfigure}
    \caption{Phase portraits with pressureless matter equation of state $w=0$. The grey area (above the black line) is excluded due to a negative matter density $\Omega_m <0$. Points A, B and C lie on this line with $\Omega_m =0$. The triangle formed from the vertices O, A, C represents the physically relevant phase space, and orbits within this region stay in this region for all time.}  \label{fig:A}
\end{figure}

Realistic trajectories through the phase space will end at the fixed point C. The evolution of a trajectory following close to A $\rightarrow$ O $\rightarrow$ C is given in Fig.~\ref{fig:evol}.
Proximate to the O $\rightarrow$ C line defined by $y=-3x/2$, the scalar field behaves like a cosmological constant with $w_{\phi} \approx -1$ and sound speed $c_s^2 \approx 0$. For other nearby trajectories we have $w_{\phi} > -1$, with the phantom regime $w_{\phi}<-1$ strictly excluded for stability reasons ($c_s^2 <0$). We also note that this disconnected region of the phase space is free from superluminal sound speeds,      $c_s^2<1$. In fact, as previously pointed out, the physically acceptable regions with $b_1>0$, $c_s^2\geq0$ and $\rho_{\phi,X} >0$ immediately give 
a constrained propagation speed $0 \leq c_s^2<1/3$, see Eq. (\ref{constr}). We therefore avoid extended discussion of any subtleties related to superluminal propagation speeds, which we refer to \cite{Adams:2006sv,Bonvin:2006vc,Babichev:2007dw} for further details.

It should be stressed that the origin of our dark energy fluid is clearly distinct from a cosmological constant, with the early-time repeller for the physical phase space located at point A with $w_{\phi} =1/3$. Hence, both $\phi$ and $w_{\phi}$ will generically evolve at early times. Nonetheless, we can use the $\Lambda$CDM-like trajectory to find initial conditions and priors for our observational analysis. For positive $b_1$ and $b_2$, to closely track the line $y=-3x/2$ we have
\begin{equation} \label{b1b2}
    b_1 \lessapprox  b_2 \varphi^2 \, ,
\end{equation}
where the inequality comes from requiring it to be above the line (i.e., $c_s^2 \geq 0$ and $w_{\phi} \geq -1$). When the inequality is exact, the field acts like a cosmological constant with density $\rho_{\phi} = b_1^2/(4b_2)$. This is also exactly what happens for a quintessence field dominated by its potential, so the result is not surprising. What is novel about the result is that trajectories near the $\Lambda$CDM line remain close to that line from the matter-domination origin to the late-time accelerated expansion point C. This will turn out to be crucial for finding viable cosmological solutions that fit observational data.

\begin{figure}[htb!] 
    \centering
     \centering
        \includegraphics[width=0.6\linewidth]{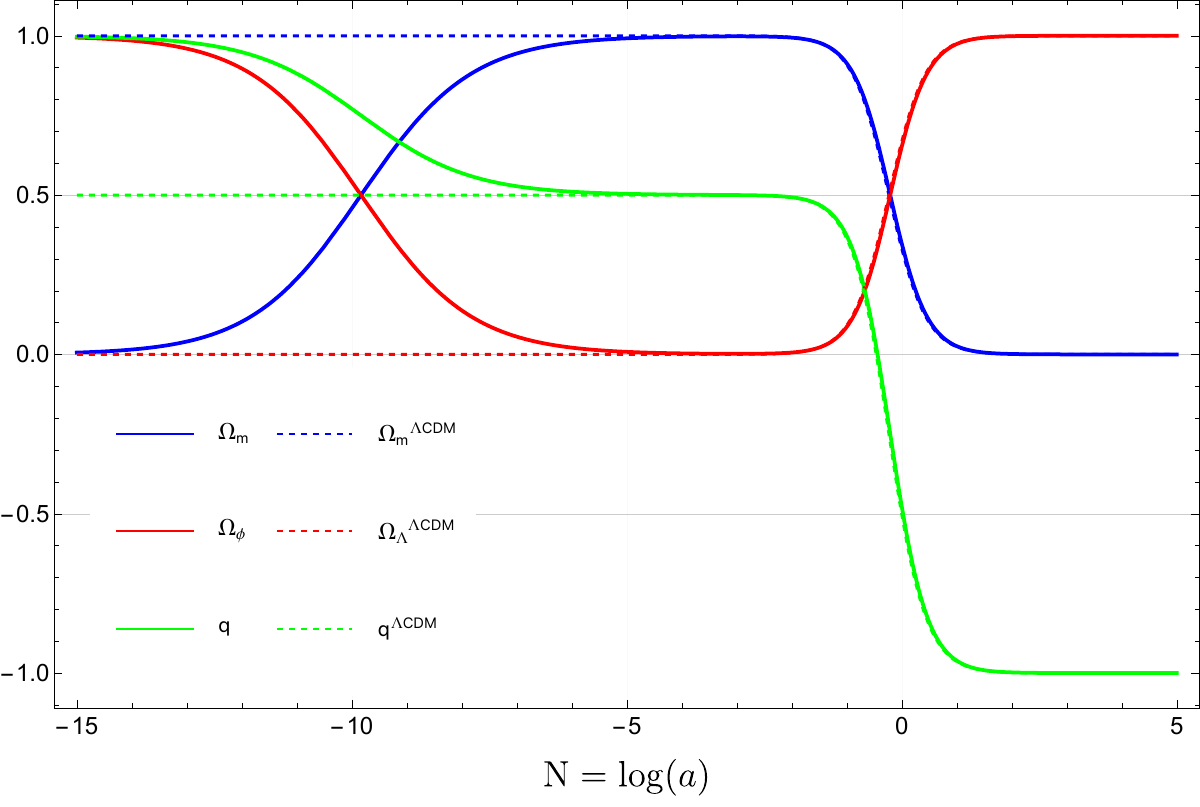}
    \caption{
Evolution of the matter density parameter $\Omega_m$, the $k$-essence density parameter $\Omega_\phi$, and the deceleration parameter $q$ in the geometric $k$-essence model, compared to the standard $\Lambda$CDM scenario.}
\label{fig:evol}
\end{figure}

The final piece of information required to give concrete values for $b_1$, $b_2$ and $\varphi_0$ is information about today's values of physical parameters. This will give a specific location along the trajectory close to O $\rightarrow$ C.
The easiest way to implement this is to fix a value for $\Omega_{m0}$. For instance, using $\Omega_{m0} \approx 0.32$ in the Friedmann constraint to locate our position along the line $y=-3x/2$ gives $x \approx -1.4$ from which we find $b_1 \approx 8.4 H^2/\varphi^2$ and $b_2 \approx 8.4 H^2/\varphi^4$. In Fig.~\ref{fig:evol} we have plotted a typical evolution using these initial conditions compared to the exact $\Lambda$CDM evolution. These typical values will be used to inform the priors for the MCMC runs in the next section.

Lastly, we observe that in order to have a sufficient period of matter domination, the initial matter density parameter $\Omega_{m0}$ cannot be too small (i.e., close to zero). If this were the case, trajectories would follow closely along the orbit A $\rightarrow$ C where $\Omega_{m} \approx 0$, and nowhere approach the origin O. The effective equation of state would then evolve smoothly between $w_{\textrm{eff}} = -1$ and $w_{\textrm{eff}} =1/3$, as opposed to displaying matter-dominated saddle behaviour (e.g., as shown in Fig.~\ref{fig:evol}). This would certainly be in conflict with observations, with no extended periods of time to allow for structure formation. In this respect, we agree with the conclusions of \cite{Quiros:2025fnn} that the purely kinetic quadratic $k$-essence model cannot adequately describe a unified dark sector, as initially envisioned by Scherrer \cite{Scherrer:2004au}. In this work, we therefore only aim to describe late-time acceleration and dark energy effects with our geometric $k$-essence model.

\section{Late-time observational constraints} \label{sec:5}
In this section, we constrain the parameters of the proposed geometric $k$-essence model using Markov Chain Monte Carlo (MCMC) techniques. For this purpose, we rewrite the cosmological equations into a dimensionless form in the redshift representation. The priors for the model parameters are guided by theoretical stability requirements, as detailed in section \ref{sec:4}. We then describe the observational datasets used in our analysis, along with the Python libraries employed in the MCMC sampling. Finally, we report the inferred constraints on the model parameters based on the combined observational data.

\subsection{Redshift representation}

Based on the dynamical systems analysis, the existence of de Sitter solutions requires  both $b_1>0$ and $b_2>0$. These signs can now be fixed, allowing us to make field redefinitions that reduce the number of free parameters in the model by one. 
We introduce a set of dimensionless variables $(h,\tau,r,\Phi,B)$ to implement these redefinitions as
\begin{equation} \label{new_vars}
h = \frac{H}{H_0} \, , \quad \tau =H_0 t\, , \quad r = \frac{\rho \kappa}{3 H_0^2} \, , \quad \Phi = \frac{\sqrt{b_1} \varphi}{H_0} \, , \quad B = \frac{3b_2 H_0^2}{4 b_1^2} \, .
\end{equation}
In these variables, the stability condition \eqref{b1b2} obtained from the stability analysis takes the form
\begin{equation} \label{Bconstr}
    B \geq \frac{3}{4 \Phi^2} \, ,
\end{equation}
while the Friedmann equations together with the evolution equation for the scalar field can be written as
\begin{align}
    3h^2 &=  3 r - \frac{1}{2} \Phi^2 +  B \Phi^4 \label{Friedmann1}   \, , \\
    3h^2 + 2 \frac{dh}{d\tau}&=  \frac{1}{2} \Phi^2 -  \frac{1}{3} B \Phi^4  \label{Friedmann2}\, , \\
    \frac{d \Phi}{d \tau} + 3h \Phi &= 4 B \Big(\frac{d \Phi}{d \tau} \Phi^2 + h \Phi^3 \Big) \label{Friedmann3} \, .
\end{align}
Here, we have assumed pressureless matter ($p=0$) corresponding to a dark matter or dust-dominated Universe. Equations \eqref{Friedmann1}-\eqref{Friedmann3} now take a more familiar form with the $\Phi^2$ term being canonically normalised, and $B$ being a dimensionless model parameter associated with the quartic nonmetricity contributions in (\ref{R_expand}). Introducing the redshift variable $z$ defined via $1+z=\frac{1}{a}$, the dimensionless time derivative transforms according to
\begin{equation}
    \frac{d}{d \tau}=-(1+z)h(z) \frac{d}{dz}.
\end{equation}
In terms of redshift, the cosmological equations describing the geometric $k$-essence model take the following form
\begin{align}  \label{ff1}
    3h(z)^2 &=3r(z)-\frac{1}{2} \Phi(z)^2+  B \Phi(z)^4 \, , \\
    \label{FF1}
    \frac{dh(z)}{dz}&=\frac{3h(z)^2-\frac{1}{2} \Phi(z)^2+  \frac{1}{3} B \Phi(z)^4}{2(1+z)h(z)} \, , \\
\label{Phi1}
    \frac{d\Phi(z)}{dz}&=\frac{-  4 B h(z) \Phi(z)^3 +3 h(z) \Phi(z)}{(1+z)h(z)(1-4 B \Phi^2)} \, .
\end{align}

We briefly comment on the different approaches for solving these equations: the first is to simply solve both equations numerically and use our solution for $h$ and $\Phi$ in the Friedmann constraint (\ref{ff1}) to determine the matter density $\Omega_m$. This is the most direct approach. The other option is to instead use the continuity equation to write the matter density parameter explicitly in terms of $\Omega_{m0}(1+z)^3$, and use the scalar field equation to get an integral expression for $\Phi$ to use in the Friedmann constraint (\ref{ff1}). This is commonly used in quintessence models with a free potential $V(\phi)$, where that freedom can be re-expressed in terms of some dark energy parametrisation $w_{\phi}(z)$ \cite{Linder:2002et,Corasaniti:2002vg}. Moreover, in general kinetic $k$-essence theories with arbitrary $P(X)$ this has also been shown to be possible \cite{Li:2006bx,Cardenas:2018ltl}.
This effectively trades a free model parameter for the matter density parameter today $\Omega_{m0}$. In fact, the scalar field equation \eqref{Phi1} is analytically solvable in terms of $a$ or equivalently $z$, see \cite{Scherrer:2004au} and Appendix \ref{appendix:analyticalsolution}  for further details. However, the stability constraints \eqref{Bconstr} which then relate $B$ and $\Phi_0$ take a more cumbersome form, so we instead solve the differential equations numerically. We later verify that both approaches are equivalent, up to the same degree of precision. 

To make the comparison with $\Lambda$CDM obvious, we evaluate the Friedmann constraint at redshift $z=0$, allowing us to trade the initial scalar field value $\Phi_0$ for the matter density parameter $\Omega_{m0}$. Imposing $h(0)=1$, $\Phi(0)=\Phi_0$, and $r(0)=\Omega_{m0}$ in Eq.~\eqref{ff1}, we obtain the relation
\begin{equation} \label{ff2}
    1 = \Omega_{m0} - \frac{1}{6} \Phi_0^2 + \frac{B}{3} \Phi_0^4 \, ,
\end{equation}
which can be solved for $\Phi_0$ in terms of $\Omega_{m0}$ and $B$. This lets us work with the more physically meaningful $\Omega_{m0}$ and directly compare our results to $\Lambda$CDM. The constraint equation (\ref{ff2}) can then be used to obtain $\Phi_0$ if desired. 

From this point onward, we work with $(\Omega_{m0},B)$.  Combining the energy conditions $\Omega_{m0} \geq 0$ and $\rho_{\phi} \geq 0$ with the perturbative stability condition (\ref{Bconstr}) leads to the following constraints on the model parameters
\begin{equation} \label{model_constr1}
    0 \leq \Omega_{m0} < 1 \, , \quad B \geq \frac{1}{16-16\Omega_{m0}} \, .
\end{equation}
These conditions define the physical parameter space, which corresponds to the region enclosed by the triangle OAC in Fig.~\ref{fig:prior}. Since $B$ can, in principle, grow arbitrarily large, the parameter space is unbounded. To restrict this, we additionally require that the Universe is accelerating today, i.e. the deceleration parameter \eqref{decel} satisfies $q_{0} < 0$. Using the variable definitions \eqref{new_vars} and the Hubble constraint \eqref{ff2}, we arrive at the final allowed parameter region
\begin{equation} \label{param_region}
    0 \leq \Omega_{m0} < 2/3 \, , \quad \frac{1}{16-16 \Omega_{m0}} \leq B < \frac{4-3\Omega_{m0}}{6(\Omega_{m0}-2)^2} \, ,
\end{equation}
which defines the physically viable and accelerating region of the model, as illustrated in Fig.~\ref{fig:prior}. These bounds will be used in our MCMC analysis to ensure sampling is restricted to the physically relevant regions of the phase space of the model.

\begin{figure}[htb!] 
     \centering   \includegraphics[width=0.5\linewidth]{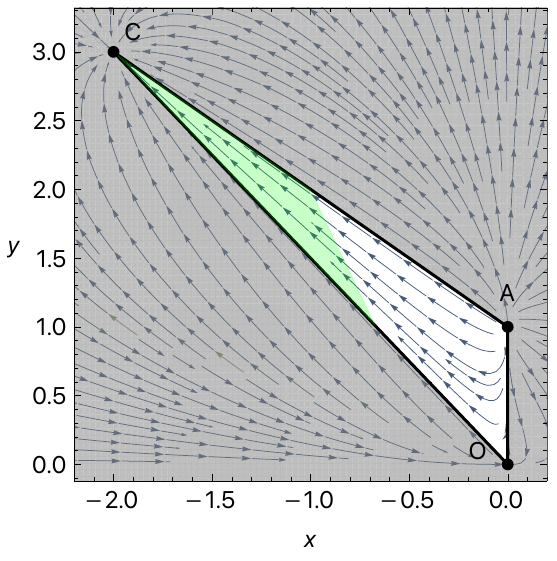}
    \caption{Phase plot showing the physical phase space (black border) and accelerating region (green)
    used to define the parameter space, see Eq.~(\ref{param_region}).}    \label{fig:prior}
\end{figure}

\subsection{Methodology and datasets}
To constrain the model parameters, we use a Markov Chain Monte Carlo (MCMC) method implemented with the \texttt{emcee} sampler \cite{ForemanMackey2013}. MCMC methods are commonly used in cosmology to constrain model parameters by exploring their posterior probability distributions, especially when the parameter space is high-dimensional. They work by building a Markov chain, whose stationary distribution matches the posterior. Unlike grid-based methods, MCMC can handle complex models more easily and avoids scanning the entire parameter space point by point \cite{Gregory2005}. The affine-invariant ensemble sampler we use is well-suited for cases where the posterior has complicated shapes or when parameters vary over very different scales. The posterior distribution being sampled is related to the likelihood and prior through Bayes' theorem \cite{Lewis2002}
\begin{equation}
P(\theta|D) = \frac{L(D|\theta) P(\theta)}{P(D)}  \, ,
\end{equation}
where $P(\theta|D)$ is the posterior probability of the model parameters $\theta$ given the data $D$. The likelihood $L(D|\theta)$ quantifies the agreement between the model and the data, while the prior $P(\theta)$ encodes any existing knowledge or assumptions about the model parameters. The evidence $P(D)$ serves as a normalisation factor ensuring the posterior integrates to unity, though it is typically not required in parameter estimation.

A key advantage of MCMC is that it not only determines the most probable values of the model parameters, but it also naturally incorporates uncertainties arising both from the model and the observational data \cite{Gelman2013}. As outlined above, applying MCMC in our case requires two essential ingredients:
\begin{enumerate}
    \item The likelihood function, which quantifies the agreement between the model's theoretical predictions and the observational data;
    \item The prior distributions, which encode assumptions or constraints on the model parameters and are supplied to the \texttt{emcee} sampler.
\end{enumerate}

The form of the likelihood function depends strongly on the specific datasets employed. In this study, we make use of three datasets, detailed below. The explicit form of the likelihoods corresponding to each dataset is presented following their description.
\begin{itemize}
    \item[$\triangleright$] \textit{Cosmic Chronometers:} We use Hubble measurements extracted based on the differential age method \cite{Jimenez_2002}. This technique makes use of passively evolving massive galaxies that formed at redshifts $z \sim 2$-$3$, allowing for a direct and model-independent determination of the Hubble parameter using the relationship $\Delta z/ \Delta t$. The differential age method significantly reduces the dependence on the underlying astrophysical assumptions, allowing for model-independence. In our approach, there are two main differences compared to the parameter inference performed in \cite{Vagnozzi_2021}. First, we use only a subset of the $31$ points used in \cite{Vagnozzi_2021}, selecting 15 points compiled in \cite{moresco2020HzTable} from \cite{Moresco_2012, 10.1093/mnrasl/slv037, Moresco_2016}, i.e. those with a full covariance matrix available. Secondly, for analysing these points, we use the full non-diagonal covariance matrix, as opposed to the diagonal covariance matrix used in \cite{Vagnozzi_2021}. 
    \item[$\triangleright$] \textit{Type Ia supernovae:} We also use the Pantheon$^{+}$ dataset, which includes $1701$ light curves from $1550$ Type Ia supernovae (SNe Ia) spanning the redshift range $0.001 \leq z \leq 2.26$ \cite{Brout_2022}. A publicly available total covariance matrix, incorporating both statistical and systematic uncertainties, is provided with this dataset \cite{PantheonPlusGitHub, Conley2011}, and will be discussed later in the context of the likelihood construction.
    \item[$\triangleright$] \textit{Baryon Acoustic Oscillations:} Lastly, we incorporate measurements of baryon acoustic oscillations (BAO), which provide a standard ruler for cosmic distances based on the imprint of sound waves in the early universe. These measurements constrain the cosmic expansion history by probing both the comoving angular diameter distance $D_{M}(z)/r_{d}$ and the Hubble distance $D_{H}(z)/r_{d}$ across a range of redshifts. We use the compilation of BAO data from the DESI DR1 release \cite{Adame2025}, which includes multiple galaxy and quasar samples. The data span an effective redshift range from $z_{eff}=0.295$ to $z_{eff}=2.33$ with associated measurements of $D_{M}/r_{d}, D_{H}/r_{d}$ or $D_{V}/r_{d}$, and their uncertainties, as summarised in Table 1 of \cite{Adame2025}. These BAO measurements are especially valuable when combined with SNe Ia and CC data, as they help break degeneracies in the estimation of cosmological parameters.
\end{itemize}
We now provide the two key components needed for the MCMC analysis. The total likelihood is built as the sum of three contributions, each associated with one of the datasets introduced earlier through the formula
\begin{equation}\label{likelihood}
    \log \mathcal{L}_{\text{tot}}=-\frac{1}{2} \chi^2_{\text{tot}}+\text{const} \, ,
\end{equation}
where the constant term arises from normalisation. Since it does not depend on the model parameters, it is irrelevant for parameter inference and can be omitted. The total chi-squared appearing in Eq.~\eqref{likelihood} is given by
\begin{equation}\label{chi2tot}
\chi^2_{\text{tot}}=\chi^{2}_{\text{CC}}+\chi^{2}_{\text{SNeIa}}+\chi^{2}_{\text{BAO}}\, , \; \; \text{where}
\end{equation}
\begin{itemize}
    \item[$\triangleright$] $\chi^{2}_{\text{CC}}$ represents the contribution from Cosmic Chronometers (CC), computed using the covariance matrix developed by Moresco \cite{Moresco2021CCcovariance}. It is defined as
    \begin{equation}
        \chi^{2}_{\text{CC}}(\theta,z)=\Delta H^{T}(z) \mathbf{C}^{-1}_{\text{tot}} \Delta H(z) \, ,
    \end{equation}
   where $\Delta H(z)=H_{\text{model}}(\theta,z)-H_{\text{obs}}(z)$ is the vector of residuals between the model prediction and observed values of the Hubble parameter, and $\theta$ denotes the model parameters. The total covariance matrix $\mathbf{C}_{\text{tot}}$ incorporates both statistical and systematic contributions
   \begin{equation}
       \textbf{C}_{\text{tot}}= \textbf{C}_{\text{stat}}+\textbf{C}_{\text{sys}} \, ,
   \end{equation}
   where $\textbf{C}_{\text{stat}}$ is a diagonal matrix containing the statistical uncertainties of the $H(z)$ measurements, and $\textbf{C}_{\text{sys}}$ accounts for the correlated systematic errors.
    \item[$\triangleright$] $\chi^{2}_{\text{SNeIa}}$ corresponds to the contribution from the Pantheon$^{+}$ supernova sample, and is defined as
    \begin{equation}
        \chi^2_{\text{SNeIa}}(\theta,z)=\Delta \textbf{D}^{T}(\theta,z) \textbf{C}^{-1}_{\text{tot}} \Delta \textbf{D}(\theta,z) \, ,
    \end{equation}
    where $\textbf{C}^{-1}_{\text{tot}}$ is the inverse of the full covariance matrix provided with the Pantheon$^{+}$ data release \cite{PantheonPlusGitHub}, and $\Delta \textbf{D}$ is the vector of residuals for the $1701$ supernovae in the sample. These residuals are computed as\footnote{Note the sign convention: for the Pantheon$^{+}$ dataset, the residual is defined as data minus model, unlike the model minus data convention often used for Hubble parameter measurements. This is purely a matter of convention and has no impact on the resulting likelihood, since the chi-squared is quadratic in the residual: $-\Delta \textbf{D}^{T} \textbf{C}^{-1} (-\Delta \textbf{D})=\Delta \textbf{D}^{T} \textbf{C}^{-1} \Delta \textbf{D}$.}
    \begin{equation}
    \Delta \textbf{D}(\theta,z)=\mu(z)-\mu_{\text{model}}(\theta,z) \, ,
    \end{equation}
   where the observed distance modulus $\mu(z)$ is related to the apparent magnitude $m(z)$ by
    \begin{equation}
        \mu(z)=m(z)-\mathcal{M} \, ,
    \end{equation}
    with $\mathcal{M}$ denoting the absolute magnitude of a standard Type Ia supernova. The apparent magnitude $m(z)$ is given in terms of the luminosity distance $d_{L}(z)$ as
    \begin{equation}
        m(z)=5 \log_{10} \left(\frac{d_{L}(z)}{Mpc} \right) + \mathcal{M}+25 \, ,
    \end{equation}
   where the luminosity distance itself is determined from the Hubble parameter via
    \begin{equation}
        d_{L}(z)=c(1+z) \int_{0}^{z} \frac{ dz'}{H(z')} \, ,
    \end{equation}
    with $c$ being the speed of light in $km/s$.
    \item[$\triangleright$] $\chi^{2}_{\text{BAO}}$ accounts for the contributions from the DESI DR1 BAO measurements. To incorporate these contributions, we use the following distance measures, from which $\chi^2_{\text{BAO}}$ is constructed:
    \begin{itemize}
        \item \textit{radial Hubble distance}
        \begin{equation}
            D_{H}(z)=\frac{c}{H(z)} \, ,
        \end{equation}
        \item \textit{comoving angular diameter distance} \begin{equation}
            D_{M}(z)=c \int_0^{z} \frac{ dz'}{H(z')} \, ,
        \end{equation}
        \item \textit{volume averaged distance}
        \begin{equation}
            D_{V}(z)=\left(z D_M^2(z) D_{H}(z) \right)^{\frac{1}{3}} \, .
        \end{equation}
    \end{itemize}
    Each of these distances is normalised by the sound horizon at the drag epoch, $r_{d}$, as provided in the DESI DR1 data \cite{Adame2025} (see Table 1 therein). The contribution of the BAO data to the total chi-squared is computed using the full covariance matrix and takes the form
\begin{equation}
\chi^2_{\text{BAO}}(\theta,z) = 
\left[ \mathbf{D}_{\text{model}}(\theta,z) - \mathbf{D}_{\text{data}}(z) \right]^{T} 
\, \mathbf{C}^{-1} \,
\left[ \mathbf{D}_{\text{model}}(\theta,z) - \mathbf{D}_{\text{data}}(z) \right] \, ,
\end{equation}
where \( \mathbf{D}_{\mathrm{data}} \) is a vector containing the DESI measurements of \( D_M(z)/r_d \), \( D_H(z)/r_d \), and \( D_V(z)/r_d \), ordered to match the structure of the published covariance matrix \( \mathbf{C} \) \cite{desi2024_bao_covariance}. The theoretical predictions \( \mathbf{D}_{\mathrm{model}}(\theta,z) \) are obtained by evaluating the appropriate ratios, once $H(z)$ is known. The observational data are thus provided in terms of ratios $D_{X}/r_d$ for $X \in \{H,M,V\}$. In our approach, we treat $r_{d}$ as a free parameter rather than fixing it to the fiducial value inferred by the Planck collaboration\footnote{We note that allowing $r_d$ to freely vary leads to posterior values that remain close to the Planck-inferred fiducial value (see Sec. \ref{sec:data:results}), thus validating this approach.}. This choice improves model-independence by avoiding assumptions about the early Universe and recombination physics \cite{universe9090393, Lin_2021, Jedamzik2021}.

\end{itemize}

From Eq.~\eqref{chi2tot}, it is evident that $\chi^2_{\text{tot}}$ is fully determined by the observational data and the model-dependent Hubble parameter $H_{\text{model}}(\theta,z)$. In the case of our geometric $k$-essence model, the parameter set is $\theta=\{H_{0},\Omega_{m0},r_{d},\mathcal{M},B\}$. $H_{\text{model}}(\theta,z)$ is obtained numerically as the solution\footnote{We employ the BDF integration method from the \texttt{solve\_ivp} package, using relative and absolute tolerances of $10^{-5}$ and $10^{-15}$, respectively.} of the differential equations given in  \eqref{FF1}-\eqref{Phi1}, subject to the initial conditions
\begin{equation} h(0) = 1, \quad \Phi_0 = \frac{ \sqrt{1 + \sqrt{1 + 48B(1-\Omega_{m0})}}}{2\sqrt{B}} \, ,
\end{equation}
where $\Phi_{0}$ corresponds to the real-valued solution of Eq.~\eqref{ff2} that satisfies the stability conditions (up to an overall irrelevant sign). We have also verified that these numerical solutions coincide with the analytic solution given in Eq.~(\ref{hfull}) of Appendix \ref{appendix:analyticalsolution} for all $z$, with an error less than $10^{-14}$.

The final step in setting up the MCMC analysis is to choose appropriate priors for these parameters. While $H_{0}$, $\Omega_{m0}$, $r_{d}$ and $\mathcal{M}$ are commonly constrained in cosmological analyses, the $k$-essence specific $B$ parameter requires special care. In the case of the quadratic ghost condensate model, this is likely due to the need for carefully chosen priors to avoid pathological behaviour. In particular, the stability conditions outlined in Section \ref{sec:4} must be strictly enforced. For example, if the squared sound speed $c_s^2$ becomes negative, the model becomes physically unstable, which leads to divergent background evolution and numerical breakdowns. This behaviour is also seen in the phase space analysis, where trajectories entering the $c_s^2 <0$ region typically diverge. The energy conditions $\Omega_{m} \geq 0, \rho_{\phi} \geq 0$, together with the conditions for acceleration $q <0$ and  perturbative stability $c_{s}^2 >0$, $\rho_{\phi,X} >  0$ restrict the $k$-essence parameter $B$ to a bounded admissible range, given by Eq.~\eqref{param_region}.
This prior range ensures both physical viability and numerical stability throughout the MCMC sampling. Accordingly, we impose the uniform (flat) priors over the intervals
\begin{equation}
    50 \leq H_0 \leq 100 \, , \; \; 0 \leq \Omega_{m0} \leq 0.5 \, , \; \; 100 \leq r_{d} \leq 300 \, , \; \;  -20 \leq \mathcal{M} \leq -18 \, ,
\end{equation}
for the standard parameters, and the following uniform priors for the stability-dependent parameter
\begin{equation} \label{B_prior}
    \frac{1}{16-16 \Omega_{m0}} \leq B < \frac{4-3\Omega_{m0}}{6(\Omega_{m0}-2)^2} \, .
\end{equation}

Given these priors and the likelihoods described earlier, we initialise the MCMC sampling with a uniform distribution over the domain
\begin{equation} \label{priors}
    [50,100] \times [0,0.5] \times [100,300] \times [-20,-18] \times  \left[\frac{1}{16-16 \Omega_{m0}}, \frac{4-3\Omega_{m0}}{6(\Omega_{m0}-2)^2} \right] \, ,
\end{equation}
and employ the \texttt{emcee} ensemble sampler with $50$ walkers evolved over $20000$ steps. We discard the initial $10 \%$ (i.e., $2000$ steps) as burn-in and apply a thinning factor of \texttt{thin}=$30$ to reduce autocorrelation in the sampled chains.

To evaluate the statistical quality of our model fits, we employ the following statistical measures:
\begin{itemize}
    \item[$\triangleright$] \textit{Reduced $\chi^{2}$ statistic} \cite{Tauscher_2018}
    \begin{equation}        \chi^2_{\text{red}}=\frac{\chi^2_{\text{min}}}{N-k} \, ,
    \end{equation}
    where $\chi^2_{\text{min}}$ is the minimum chi-squared value, $N$ is the total number of data points, and $k=|\theta|$ is the number of free model parameters. In our case, $N=1701+15+12=1728$\footnote{When we include the R22 prior, which is one point, $N=1729$.}, corresponding to the Pantheon$^{+}$, CC and BAO datasets, while $k=5$ for the parameter set $\theta=\{H_0,\Omega_{m0}, B,r_d,\mathcal{M}\}$.
    \item[$\triangleright$] \textit{Akaike information criterion} \cite{10.1111/j.1745-3933.2007.00306.x, 10.1111/j.1365-2966.2011.19969.x}
    \begin{equation}\label{AICdef}
        \mathrm{AIC} = \chi^2_{\text{min}} + 2k \, ,
    \end{equation}
    where the first term quantifies the fit quality and the second imposes a penalty proportional to the number of parameters to discourage overfitting.
    \item[$\triangleright$] \textit{Bayesian information criterion} \cite{Burnham2002}
    \begin{equation}\label{BICdef} 
        \mathrm{BIC} = \chi^2_{\text{min}} + k \log N \, ,
    \end{equation}
    where $\log N$ increases the penalty strength with dataset size, making BIC more conservative in selecting models as $N$ grows.
\end{itemize}

It is important to note that each statistical measure serves a distinct purpose and comes with its own advantages and limitations. Specifically, the reduced chi-squared statistic, $\chi^2_{\text{red}}$, evaluates the goodness of fit between a model and the data, without penalising model complexity. In contrast, the Akaike Information Criterion (AIC) and the Bayesian Information Criterion (BIC) are model selection tools that balance goodness of fit with model complexity by including a penalty term for the number of parameters \cite{Vrieze2012, Rezaei2021, Arevalo2017}. Among the two, BIC imposes a stronger penalty than AIC for large datasets $\left( \log N > 2\right)$, as evident from their definitions. Since the geometric $k$-essence model has one additional free parameter compared to $\Lambda$CDM, AIC and BIC penalise it differently. The BIC penalty is stronger because $N$ is large, and therefore $\Delta{\rm BIC}$ is larger than $\Delta{\rm AIC}$.

\subsection{Results}\label{sec:data:results}
Posterior constraints extracted from the MCMC chains are listed in Table \ref{optimalparameters}. The corresponding parameter constraints are illustrated in the contour plots of Fig.~\ref{contourplots}. Notably, the $k$-essence specific parameter $B$  is constrained in a manner closely resembling that of $\Omega_{m0}$ in the $\Lambda$CDM model. This behaviour is explained in more detail in Appendix \ref{appendix:analyticalsolution}, where the full analytical solution of the geometric $k$-essence model is presented. The stability constraint implemented in the priors~(\ref{B_prior}) can also be clearly seen in the $\Omega_{m0}$-$B$ contour plots, with a sharp cutoff preventing $\Omega_{m0}$ from taking larger values while $B$ remains small.
The optimal values for $H_0, \mathcal{M}$ and $r_{d}$ extracted from the MCMC chains are almost identical to those of $\Lambda$CDM, while the matter density parameter $\Omega_{m0}$ favours lower values in the $k$-essence model. This is because the $k$-essence fluid behaves similar to DM at the background level at intermediate redshifts. The optimised values of $H_0$ and $\Omega_{m0}$ are also similar to those found in the recent study of the purely kinetic DBI $k$-essence models, which used similar observational data \cite{Ganguly:2025kdf}.
Together with the contour plots, these results validate the close correspondence with late-time $\Lambda$CDM dynamics.

\begin{table}[htb]
\centering
\begin{tabular}{|c|c|c|c|c|}
\hline
{Cosmological Model} & {Parameter} & {Prior} & {JOINT} & {JOINT \textbf{+} R22} \\
\hline
\multirow{4}{*}{{\(\Lambda\)CDM}} 
& \(H_0\) & [50, 100] & \(67.0 \pm 4.0\) & \(72.6 \pm 1.0\) \\
& \(\Omega_{m0}\) & [0, 0.5] & \(0.332 \pm 0.013\) & \(0.329 \pm 0.013\) \\
& \(\mathcal{M}\) & [–20, –18] & \(-19.45 \pm 0.13\) & \(-19.269 \pm 0.030\) \\
& \(r_d\) & [100, 300] & \(148.5^{+7.6}_{-9.5}\) & \(136.8 \pm 2.3\) \\
\hline
\multirow{5}{*}{{Geometric \(k\)-essence}} 
& \(H_0\) & [50, 100] & \(67.0 \pm 4.0 \) & \(72.6 \pm 1.0\) \\
& \( \Omega_{m0}\) & [0, 0.5] & \(0.291^{+0.042}_{-0.022}\) & \(0.288^{+0.042}_{-0.021}\)\\
& \(B\) & \( \left[\frac{1}{16-16 \Omega_{m0}}, \frac{4 - 3 \Omega_{m0}}{6(\Omega_{m0}-2)^2} \right] \) & \( 0.0932\pm 0.0018\) & \(0.0929 \pm 0.0018\) \\
& \(\mathcal{M}\) & [–20, –18] & \(-19.45^{+0.13}_{-0.12}\) & \(-19.270 \pm 0.030\) \\
& \(r_d\) & [100, 300] & \(148.6^{+7.7}_{-9.6}\) & \(136.8 \pm 2.3\) \\
\hline
\end{tabular}
\caption{Summary of maximum-likelihood values extracted from the MCMC chains, with \(2\sigma\) uncertainties, for the cosmological parameters in the {\(\Lambda\)CDM} and {geometric \(k\)-essence} model, derived from the JOINT and JOINT {+} R22 datasets.}\label{optimalparameters}
\end{table}

\begin{figure}[htb] 
\begin{subfigure}{.49\textwidth}
\includegraphics[width=\linewidth]{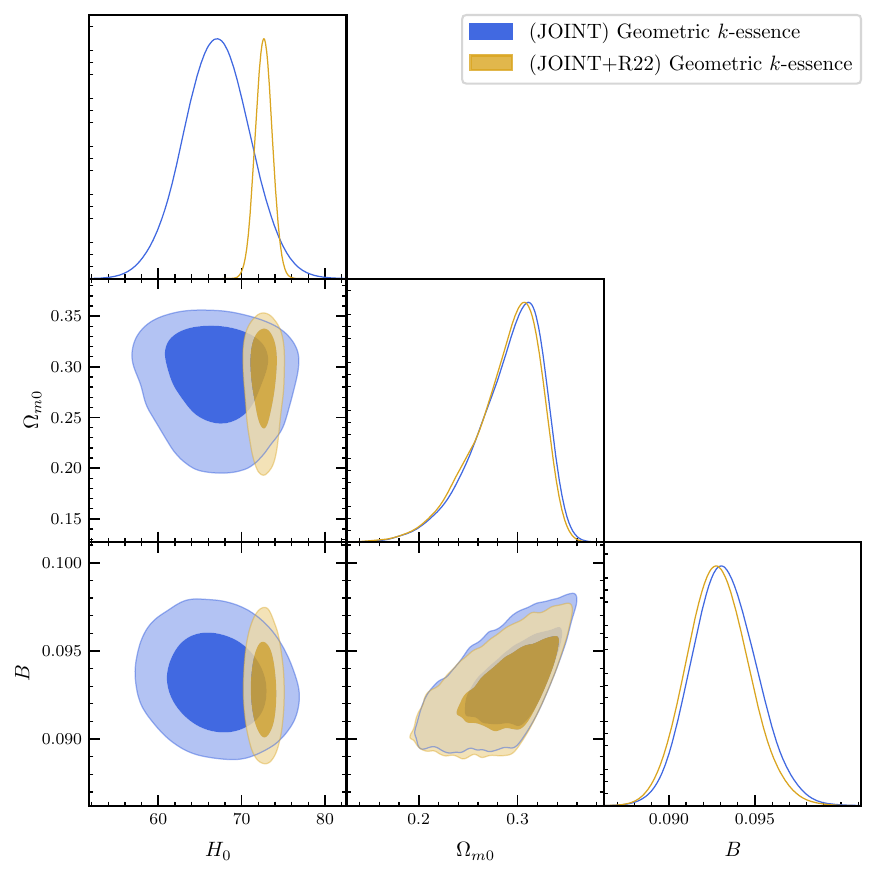}
\end{subfigure}
\hfil
\begin{subfigure}{.49\textwidth}
\includegraphics[width=\linewidth]{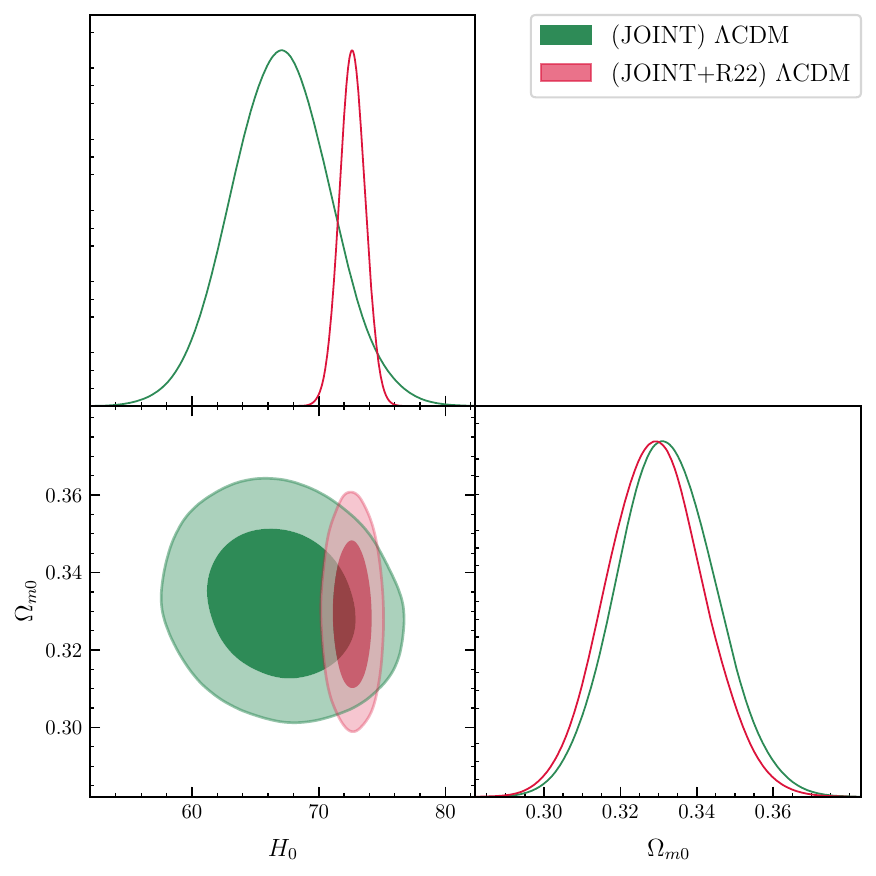}
\end{subfigure}
\caption{Corner plots showing the posterior constraints on the model parameters for the geometric $k$-essence model (left panel) and the $\Lambda$CDM model (right panel). Both the $1\sigma$ and $2\sigma$ confidence regions are displayed.}\label{contourplots}
\end{figure}

The statistical results represented in Table \ref{statisticalresults} show that the geometric $k$-essence model fits the data points well, with a reduced chi-squared close to unity ($\chi^2_{red} \simeq 1$). Despite the similar goodness of fit, the $k$-essence model has an additional parameter compared to $\Lambda$CDM, leading to a slightly worse AIC and BIC, with the latter being more significantly impacted by $P_{\textrm{tot}}$. Consequently, the model is comparable to $\Lambda$CDM with respect to AIC, but disfavoured by BIC. In the future, it would be interesting to conduct a more comprehensive comparison of competing purely kinetic $k$-essence models (such as the DBI models \cite{Ganguly:2025kdf}) to assess their relative viability.

\begin{table}[htb]
\centering
\begin{tabular}{|l|c|c|c|c|c|c|c|c|}
\hline
{Model} & \({\chi^2_{\text{tot,min}}}\) & \({P_{\text{tot}}}\) & \({N_{\text{tot}}}\) & \({\chi^2_{{red}}}\) & {AIC} & \({\Delta\text{AIC}}\) & {BIC} & \({\Delta\text{BIC}}\) \\
\hline
{\(\Lambda\)CDM} & 1778.14 & 4 & 1728 & 1.031 
 & 1786.14 & 0 & 1807.96 & 0 \\
{Geometric \(k\)-essence} & 1778.16  & 5 & 1728 & 1.032
 & 1788.16  & 2.02  & 1815.44 &  7.48
 \\
\hline
\end{tabular}

\caption{Model comparison based on total minimum chi-squared, number of free parameters (\(P_{\text{tot}}\)), number of data points (\(N_{\text{tot}}\)), reduced chi-squared, Akaike Information Criterion (AIC), and Bayesian Information Criterion (BIC). Values are given relative to the \(\Lambda\)CDM model.}
\label{statisticalresults}
\end{table}

\begin{figure}[htb] 
\begin{subfigure}{.48\textwidth}
\includegraphics[width=\linewidth]{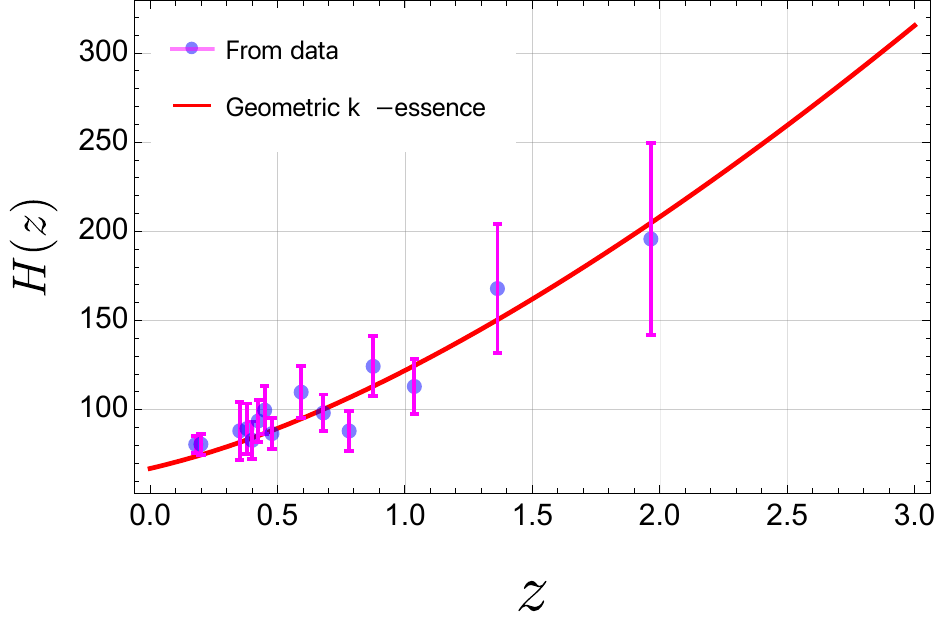}
\end{subfigure}
\hfil
\begin{subfigure}{.48\textwidth}
\includegraphics[width=\linewidth]{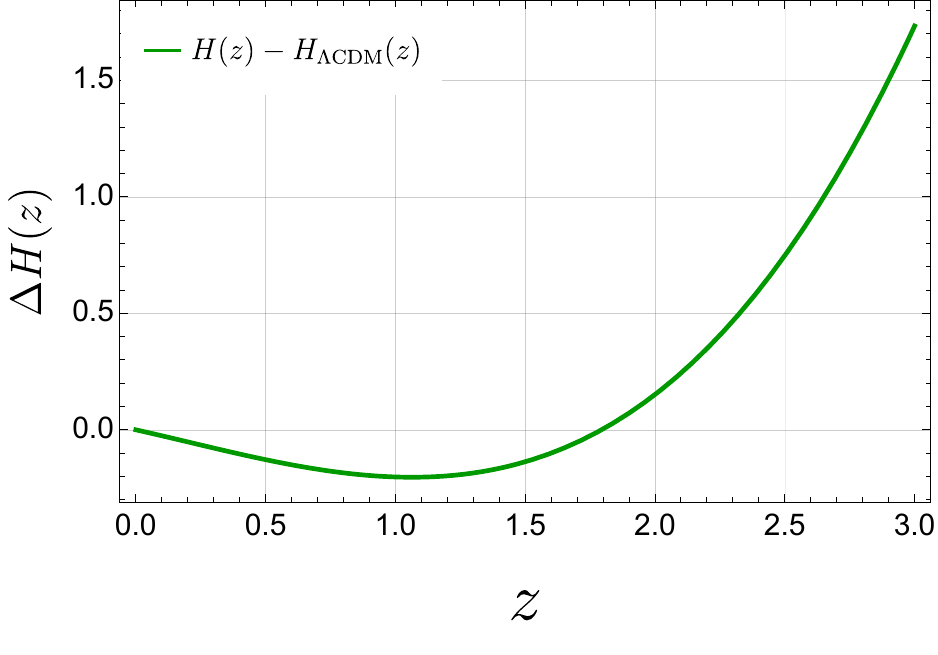}
\end{subfigure}
\caption{Plot of the Hubble function with Moresco CC data points (left panel) 
and the Hubble function difference (right panel) between the $k$-essence model and $\Lambda$CDM.
}\label{fig:hub}
\end{figure}

\begin{figure}[htb] 
\begin{subfigure}{.48\textwidth}
\includegraphics[width=\linewidth]{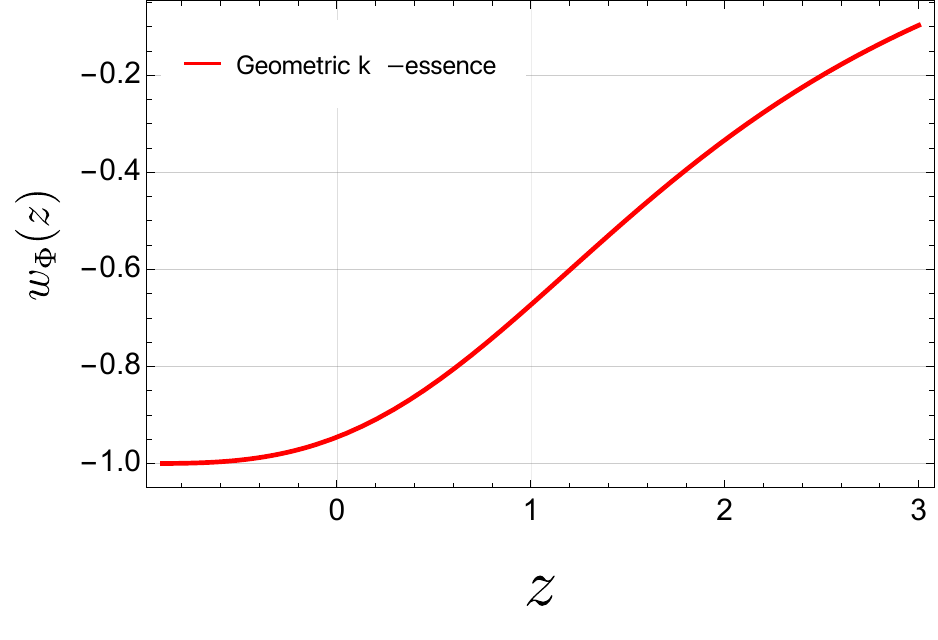}
\end{subfigure}
\hfil
\begin{subfigure}{.48\textwidth}
\includegraphics[width=\linewidth]{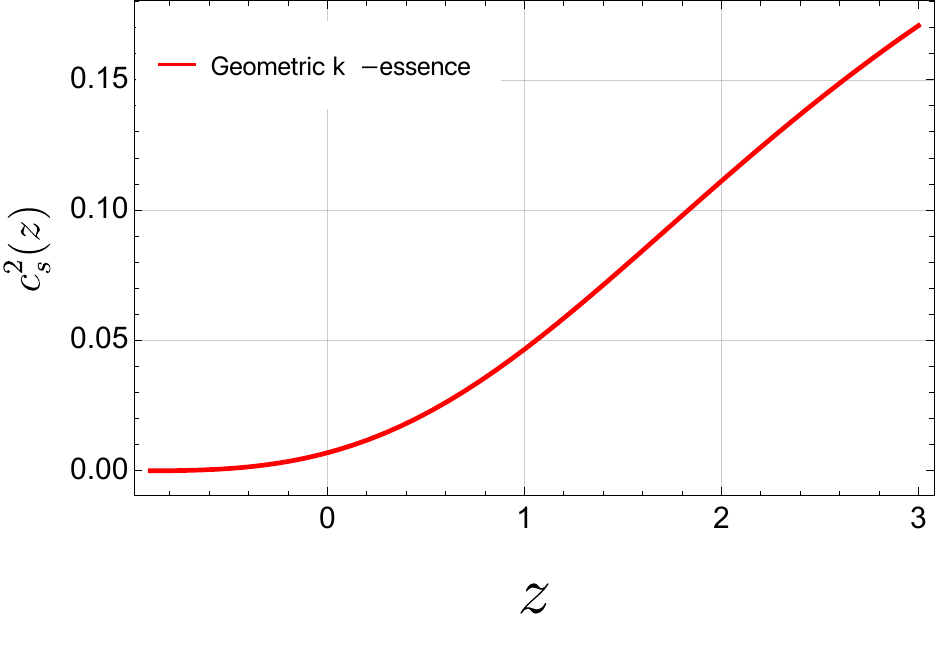}
\end{subfigure}
\caption{Redshift evolution of the equation of state parameter $w_{\Phi}(z)$ (left panel) and the sound speed squared $c_s^2(z)$ (right panel) for the geometric $k$-essence model.}\label{fig:wcs}
\end{figure}

The goodness of fit discussed earlier is visually confirmed in Fig.~\ref{fig:hub}. At late times, the Hubble functions of the $k$-essence and $\Lambda$CDM models are nearly indistinguishable, with differences remaining below $2$ (less than 1\%) up to redshift $z = 3$. In Fig.~\ref{fig:wcs} we show the evolution of the $k$-essence fluid parameters. The equation of state parameter $w_{\Phi}$ exhibits evolving dynamical behaviour, and the sound speed squared $c_s^2$ remains positive and stable throughout the evolution. 
Notably, $c_s^2$ stays within the range $0 < c_s^2 < 1/3$, with radiation-like behaviour at early times corresponding to the tracking-solution fixed point A of Table \ref{tab:fixed}. These plots show that while $w_{\Phi}$ clearly does not behave like a cosmological constant at low redshifts, the total equation of state (and hence the observables) still closely track $\Lambda$CDM at late times. This is a direct consequence of the $k$-essence fluid also generating effective dark matter contributions.
The plots also hint that
the $k$-essence fluid could have non-trivial effects on matter clustering in the intermediate-to-late Universe. The effects of this small late-time dark energy clustering could be of phenomenological interest, such as through suppressing the Integrated Sachs-Wolfe effect \cite{Kable:2021yws}. These avenues seem worthwhile investigating in the future.

\section{Discussion and final remarks}\label{sec:6}

In this work, we have shown that the geometric framework based on integrable vectorial nonmetricity leads to an equivalent class of purely kinetic $k$-essence models. A consistent Lagrangian formulation implementing the integrability condition was put forward, giving rise to non-trivial dynamics for the nonmetricity sector. In the cosmological setting, we found a number of important fixed points using the dynamical systems treatment. Notably, this includes a late-time (stable) de Sitter point and a transitory (saddle-like) matter-dominated region at intermediate times. The section of the phase space containing these points corresponds to the \textit{completely symmetric geometry} of Fig.~\ref{fig:vectorial_nonmetricity}, which relies on the existence of our new cubic nonmetricity vector term, see Eq.~(\ref{vectorial_nonmetricity}).
The dynamical systems results also qualitatively confirm that the quadratic kinetic $k$-essence model gives a viable description of the background evolution of the Universe. This is true even when the $k$-essence stability constraints $c_s^2\geq0$, $\rho_{\phi,X} >  0$ are implemented. In fact, these constraints turned out to be vital for the subsequent observational analysis.

The viability of the geometric $k$-essence model was then confirmed with a combination of CC, SNe Ia and BAO data. 
Implementing priors derived from the stability constraints allowed us to avoid unstable regions of the parameter space, which would make the model difficult to constrain.
Hence, this may explain why it has not received much attention in the context of observational analyses. Nonetheless, our results support the claim that the model is comparable to $\Lambda$CDM at late times. This is also evident from the optimised parameters $\{H_0,r_d,\mathcal{M}\}$ obtained from the MCMC analysis, which match their $\Lambda$CDM counterparts, see Table \ref{optimalparameters} and Figs. \ref{contourplots} and \ref{fig:hub}. On the other hand, the matter density parameter $\Omega_{m0}$ takes slightly lower values in our model, due to the intermediate dark matter properties of the $k$-essence fluid. The goodness of fit with the observational data is comparable between both models, with a similar $\chi^2$ in both models, but a preference for $\Lambda$CDM from AIC and BIC, as reported in Table \ref{statisticalresults}. These results strongly motivate further study at the level of linear cosmological perturbations for our specific model.
In this case, a broader sample of cosmological  data could be used to better assess the viability of the ghost condensate $k$-essence model.

Besides further observational investigations, there are many other theoretical routes that could be explored utilising the generalised vectorial nonmetricity. The most straightforward modification is the inclusion of torsion, thereby generalising the vectorial distortion theories discussed in \cite{BeltranJimenez:2015pnp}. It is worth investigating whether, and in what way, torsion modifies the cosmological dynamics. For simplicity, in this work we chose the linear action constructed solely from the Ricci scalar. On the other hand, non-linear actions would instead lead to higher-order equations of motion and non-minimal couplings between the scalar kinetic term 
and the Riemannian sector. Such couplings would drastically change the phenomenology in cosmology, though potentially introducing new instabilities and pathologies. 
In a more general EFT approach, recent work has demonstrated that the completely symmetric nonmetricity can in fact provide a ghost-free and radiatively stable framework \cite{Barker:2025xzd}, which aligns nicely with our cosmological results.
Another interesting option would be to investigate the hypermomentum currents within our framework. In the general metric-affine setting, such currents are well-motivated theoretically and have important effects in cosmology \cite{Hehl:1994ue}. In Table \ref{table:dilationshear} we showed that the shear and dilation currents are non-trivial in our geometries, and so we would also expect similar implications for our model. 

Finally, it is worth again emphasising the geometric foundations of the $k$-essence model studied in this work. Using our theoretically consistent Lagrangian formalism, the non-trivial cosmological dynamics result entirely from the Ricci scalar, with no additional terms needing to be added by hand. This makes the framework especially appealing, with a natural geometric origin. Additionally, we have firmly shown that the late-time behaviour is consistent with cosmological observations. It is then clear that the confrontation of this particular model with early-time data is the next crucial step. This will determine if geometric $k$-essence is truly a viable candidate to describe our Universe.

\appendix
\section{Coordinate-free treatment of vectorial nonmetricity} \label{appendix_coofree}
In this appendix, we provide a coordinate-free description of connections with vectorial nonmetricity. We also describe in a coordinate-free manner the conditions, which ensure length and volume preservation under (auto)parallel transport.
\begin{definition}
  On a pseudo-Riemannian manifold $(M,g)$ a connection $\nabla \in \text{Conn}(M)$ is called \textbf{a connection with vectorial nonmetricity} if there exists a one-form $\pi \in \Gamma\left( T^{*}M\right)$, such that
    \begin{equation}\label{coordinatefree}
        \nabla_{X} Y=\overset{\circ}{\nabla}_{X} Y +\frac{2 c_2 -c_1}{2} g(X,Y) P+ \frac{c_1}{2} \left( \pi(Y) X + \pi(X) Y \right) + c_3 \pi(X) \pi(Y) P \, ,
    \end{equation}
    where $\overset{\circ}{\nabla}$ denotes the Levi-Civita connection of $g$, $X,Y \in \Gamma(TM)$ are vector fields and $P$ is the dual vector field associated to $\pi$, i.e. $g(X,P)=\pi(X) \, .$
\end{definition}
\begin{remark}
    In a local coordinate system, the Christoffel symbols of a connection with vectorial nonmetricity are given by
    \begin{equation}
        \tensor{\Gamma}{^\lambda _\mu _\nu}=\overset{\circ}{\Gamma} \tensor{}{^\lambda _\mu _\nu} + \frac{2 c_2 - c_1}{2} g_{\mu \nu} \pi^{\lambda} + \frac{c_1}{2}\left( \delta{^\lambda}_\mu \pi_\nu +\delta^\lambda_\nu \pi_\mu \right) + c_3 \pi_\mu \pi_\nu \pi^\lambda \, .
    \end{equation}
\end{remark}
\begin{proof}
    We choose a local chart, in which $X=\partial_{\nu}, Y=\partial_{\mu}$, and the coordinates of the dual vector field are given by $P^{\lambda}=\pi^{\lambda}$.
    Plugging in the local coordinate expressions for $X$ and $Y$ in \eqref{coordinatefree} leads to
    \begin{equation}
        \nabla_{\partial_\nu} \partial_{\mu}=\overset{\circ}{\nabla}_{\partial_\nu} \partial_{\mu} + \frac{2c_2-c_1}{2} g(\partial_{\nu},\partial_{\mu}) \pi^\rho \partial_\rho+\frac{c_1}{2} \left(\pi(\partial_\mu) \partial_{\nu}+ \pi(\partial_\nu) \partial_\mu \right)+c_3 \pi(\partial_\nu) \pi(\partial_\mu) P^\rho \partial_{\rho} \, .
    \end{equation}
    To extract the Christoffel symbols from this expression, we act with $dx^{\lambda}$ and directly obtain the desired result
    \begin{equation}
        \tensor{\Gamma}{^\lambda _\mu _\nu}=\overset{\circ}{\Gamma}\tensor{}{^\lambda _\mu _\nu}+ \frac{2c_2 -c_1}{2} g_{\mu \nu} \pi^\lambda+\frac{c_1}{2} \left(\pi_\mu \delta^{\lambda}_{\nu} + \pi_{\nu} \delta^{\lambda}_{\mu} \right) + c_3 \pi_\nu \pi_\mu \pi^{\lambda} \, .
    \end{equation}
    \end{proof}
    \begin{remark}
        It is well known that an affine connection is uniquely characterized by its nonmetricity $Q(X,Y,Z)= - \nabla_X g(Y,Z)$ and torsion $T=\nabla_X Y-\nabla_Y X - [X,Y]$. Since a connection with vectorial nonmetricity is by definition symmetric, it follows that it is torsion-free. Its nonmetricity tensor is given by
        \begin{equation}
            Q(X,Y,Z)=c_1 \pi(X) g(Y,Z)+c_2\left( \pi(Z) g(X,Y) + \pi(Y) g(X,Z) \right) + 2c_3 \pi(X) \pi(Y) \pi(Z) \, .
        \end{equation}
    \end{remark}
    In \cite{Csillag:2024eor} it was shown that a special class of affine connections preserves the length of vectors during autoparallel transport. Here, we show how this property constrains the values of the coefficients $c_1$, $c_2$, and $c_3$ for connections with vectorial nonmetricity.
    \begin{proposition}
       Let $\nabla$ be a connection with vectorial nonmetricity. Then, for all   vector fields $X \in \Gamma(TM)$, which  satisfy $\nabla_{X} X=0$, the following are equivalent:
       \begin{enumerate}
          \item[$(i)$] $\nabla$ preserves lengths under parallel transport, i.e. $\nabla_{X}(g(X,X))=0\,.$
          \item[$(ii)$] The coefficients of the vectorial nonmetricity satisfy $c_1=-2c_2$, $c_3=0 \, .$ 
       \end{enumerate}
    \end{proposition}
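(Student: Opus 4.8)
The plan is to prove the equivalence by a direct computation of the quantity $\nabla_X(g(X,X))$ for an autoparallel vector field $X$, expressing it in terms of the nonmetricity tensor and then imposing the autoparallel condition. First I would use the fundamental identity relating the covariant derivative of the metric to nonmetricity, namely $\nabla_X(g(X,X)) = (\nabla_X g)(X,X) + 2g(\nabla_X X, X)$. Since $X$ is autoparallel, $\nabla_X X = 0$, so the second term vanishes and we are left with $\nabla_X(g(X,X)) = (\nabla_X g)(X,X) = -Q(X,X,X)$, using the definition $Q(X,Y,Z) = -(\nabla_X g)(Y,Z)$ from the preceding remark.

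Next I would substitute the explicit form of the vectorial nonmetricity tensor, which for $Y = Z = X$ gives
\begin{equation}
    Q(X,X,X) = c_1 \pi(X) g(X,X) + 2 c_2 \pi(X) g(X,X) + 2 c_3 \pi(X)^3 = (c_1 + 2c_2)\pi(X)\, g(X,X) + 2 c_3 \pi(X)^3 \, .
\end{equation}
So the length-preservation condition becomes $(c_1 + 2c_2)\pi(X)\, g(X,X) + 2 c_3 \pi(X)^3 = 0$ for every autoparallel vector field $X$. The implication $(ii) \Rightarrow (i)$ is then immediate: if $c_1 = -2c_2$ and $c_3 = 0$, the right-hand side vanishes identically.

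The more delicate direction is $(i) \Rightarrow (ii)$, and this is where the main obstacle lies: one must argue that the vanishing of $(c_1 + 2c_2)\pi(X)\, g(X,X) + 2 c_3 \pi(X)^3$ for \emph{all} autoparallel $X$ forces both coefficient conditions. The subtlety is that the autoparallels are a restricted class of vector fields, not arbitrary ones, so one cannot simply "divide by $\pi(X)$" or vary $X$ freely. The natural approach is to note that through any point and in any direction there exists an autoparallel (locally, by solving the autoparallel ODE with prescribed initial velocity), so at a fixed point $p$ the relation $(c_1 + 2c_2)\pi(v)\, g(v,v) + 2 c_3 \pi(v)^3 = 0$ must hold for all $v \in T_pM$ lying in the (open, nonempty) cone of admissible initial velocities — which is enough to conclude it holds as a polynomial identity in $v$. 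Choosing $v$ with $\pi(v) \neq 0$ and $g(v,v) \neq 0$ (a generic vector, assuming $\pi \neq 0$ at $p$) and also $v'$ with $\pi(v') \neq 0$ but $g(v',v') = 0$ (a null vector not annihilated by $\pi$, which exists in Lorentzian signature) lets one separate the two terms: the second choice forces $c_3 \pi(v')^3 = 0$ hence $c_3 = 0$, and then the first forces $c_1 + 2c_2 = 0$. I would also remark on the edge case $\pi \equiv 0$, where the nonmetricity vanishes identically and the statement is vacuous/trivial, so the interesting content is on the open set where $\pi \neq 0$.
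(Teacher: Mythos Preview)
Your proposal is correct and reaches the same key identity $(c_1+2c_2)\,\pi(X)\,g(X,X)+2c_3\,\pi(X)^3=0$ as the paper, but by a more direct route. The paper first rewrites $\nabla_X(g(X,X))$ as $\overset{\circ}{\nabla}_X(g(X,X))$, applies the Leibniz rule for the Levi-Civita connection, uses metric compatibility of $\overset{\circ}{\nabla}$, and then eliminates $\overset{\circ}{\nabla}_X X$ via the explicit distortion formula together with $\nabla_X X=0$. You instead apply the Leibniz rule for $\nabla$ itself, kill $g(\nabla_X X,X)$ immediately by the autoparallel condition, and identify the remaining piece with $-Q(X,X,X)$; this avoids the detour through the Levi-Civita connection entirely. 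For the implication $(i)\Rightarrow(ii)$, the paper simply invokes ``since $\pi$ is arbitrary'' to force both coefficients to vanish, whereas your argument---passing to a pointwise polynomial identity in the initial velocity $v\in T_pM$ via local existence of autoparallels, and then separating the two terms using a null vector with $\pi(v')\neq 0$---is more careful and makes explicit use of the Lorentzian signature. Your approach is shorter and more rigorous on this step; the paper's approach has the minor advantage of displaying the distortion tensor explicitly, which ties back to the surrounding discussion.
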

    \begin{proof}
    For any vector field $X \in \Gamma(TM)$ satisfying  $\nabla_{X}X=0$, substituting $X=Y$ in \eqref{coordinatefree} leads to
    \begin{equation}\label{expressfromhere}
        \nabla_{X} X=\overset{\circ}{\nabla}_{X} X +\frac{2c_2-c_1}{2} g(X,X) P + \frac{c_1}{2} (\pi(X) X + \pi(X) X) +c_3 \pi(X) \pi(X) P \, .
    \end{equation}
    Since $g(X,X)$ is a smooth function, both covariant derivatives act on it the same way, so we have
    \begin{equation}
        \nabla_{X}(g(X,X))=\overset{\circ}{\nabla}_{X}(g(X,X))=\big(\overset{\circ}{\nabla}_{X}g\big)(X,X) + g\big( \overset{\circ}{\nabla}_{X} X,X \big)+ g\big(X, \overset{\circ}{\nabla}_{X} X \big) \, .
    \end{equation}
    Using that the Levi-Civita connection is metric-compatible, this yields
    \begin{equation}\label{pluginhere}
        \nabla_{X}(g(X,X))=2g \big(\overset{\circ}{\nabla}_{X} X,X \big) \, .
    \end{equation}
    Note that we can express $\overset{\circ}{\nabla}_{X} X$ from \eqref{expressfromhere} as
    \begin{equation}
    \overset{\circ}{\nabla}_{X} X=\nabla_{X} X -\frac{2c_2-c_1}{2} g(X,X) P - c_1 \pi(X) X - c_3 \pi(X) \pi(X) P \, .
    \end{equation}
    Plugging this back into \eqref{pluginhere}, we obtain
    \begin{equation}
        \nabla_{X}\left( g(X,X) \right)=2g \Big(\nabla_{X} X -\frac{2c_2-c_1}{2} g(X,X) P - c_1 \pi(X) X - c_3 \pi(X) \pi(X) P ,X\Big) \, .
    \end{equation}
    Hence, $\nabla_{X}(g(X,X))=0$ is equivalent to
    \begin{equation}
        2g \Big(\nabla_{X} X -\frac{2c_2-c_1}{2} g(X,X) P - c_1 \pi(X) X - c_3 \pi(X) \pi(X) P ,X\Big)=0 \, .
    \end{equation}
    Using $C^{\infty}(M)$ linearity of the metric, the equation takes the equivalent form
    \begin{equation}
        {2g \left(\nabla_{X}X,X \right)}-(2c_2-c_1) g(X,X) g(X,P) - 2c_1 \pi(X) g(X,X) - 2 c_3 \pi(X) \pi(X) g(P,X)=0 \, .
    \end{equation}
   The first term vanishes, due to the first assumption of the theorem. Finally, using the definition of the dual vector field $g(X,P)=\pi(X)$, we are left with
    \begin{equation}
        -(2c_2-c_1) g(X,X) \pi(X)-2 c_1 \pi(X) g(X,X)- 2c_3 \pi(X) \pi(X) \pi(X)=0 \, .
    \end{equation}
    Rearranging, we get
    \begin{equation}
        -(2c_2+c_1) \pi(X) g(X,X) - 2c_3 \pi(X) \pi(X) \pi(X)=0 \, .
    \end{equation}
    Since the above equation has to be satisfied for all vector fields $X$, it is equivalent to
    \begin{equation}
    (2c_2+c_1)=0, \; \; \text{and} \; \; c_3=0 \, , 
    \end{equation}
    which finishes the proof.
    \end{proof}
    \begin{remark}
        This choice of coefficients corresponds to  Schrödinger geometry, as can be seen from Fig.~\ref{fig:vectorial_nonmetricity}.
    \end{remark}
    \begin{proposition}
        For a connection $\nabla$ with vectorial nonmetricity, the following are equivalent:
       \begin{enumerate}
           \item[$(i)$] $\nabla$ preserves angles under parallel transport, that is,
           \begin{equation}\label{anglepreserv}
               \frac{\left|g(X,Y)\right|}{\sqrt{\left|g(X,X)\right|} \sqrt{\left| g(Y,Y)\right|}}=\frac{ \left| \left(\nabla_{Z} g \right)(X,Y) \right|}{\sqrt{\left| \left(\nabla_{Z} g\right)(X,X)\right|} \sqrt{\left| \left(\nabla_{Z} g\right)(Y,Y)\right|}} \, 
           \end{equation}
           for all vector fields $X,Y,Z \in \Gamma(TM)$.
           \item[$(ii)$] The coefficients of the vectorial nonmetricity satisfy $c_2=c_3=0\, .$
       \end{enumerate}
    \end{proposition}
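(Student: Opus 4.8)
The plan is to recognise that \eqref{anglepreserv} is the classical Weyl condition in disguise: angles are preserved precisely when $\nabla_Z g$ is pointwise a multiple of $g$ for every $Z$. Once this is in place, substituting the explicit nonmetricity tensor of the connection and taking two contractions pins down $c_1,c_2,c_3$, in the same spirit as the proof of the preceding proposition.

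First I would establish the equivalence: \eqref{anglepreserv} holds for all $X,Y,Z$ if and only if $(\nabla_Z g)(X,Y)=\alpha(Z)\,g(X,Y)$ for some one-form $\alpha$. The implication ``$\Leftarrow$'' is immediate: inserting $\nabla_Z g=\alpha(Z)\,g$ into the right-hand side of \eqref{anglepreserv}, the factors $|\alpha(Z)|$ cancel between numerator and denominator and the left-hand side is reproduced (at points where $\alpha(Z)\neq0$; where $\alpha(Z)=0$ the connection is metric in that direction and the right-hand side is vacuous). For ``$\Rightarrow$'', fix $Z$ and a point and set $h:=\nabla_Z g$. Since both sides of \eqref{anglepreserv} must be defined, $h$ is non-degenerate, and \eqref{anglepreserv} forces $g(X,Y)=0\iff h(X,Y)=0$ for vectors non-null with respect to both $g$ and $h$, so $g$ and $h$ share the same orthogonality relation. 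Choosing a $g$-orthonormal frame makes $h$ diagonal in that frame, and testing \eqref{anglepreserv} on $g$-orthogonal, $g$-non-null pairs built from the frame vectors (such as $e_a\pm e_b$ for spacelike $e_a,e_b$, and $2e_0+e_a,\ e_0+2e_a$) forces the diagonal entries of $h$ to be those of a single multiple of $g$. As $h$ depends linearly on $Z$, the proportionality factor is a one-form $\alpha$.

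Given this, the two implications of the proposition are quick. For $(ii)\Rightarrow(i)$: with $c_2=c_3=0$ the nonmetricity tensor of the connection (recorded in the Remark above) collapses to $Q(Z,X,Y)=c_1\pi(Z)g(X,Y)$, so $\nabla_Z g=-c_1\pi(Z)\,g$, which is of the required form, hence angles are preserved. For $(i)\Rightarrow(ii)$: $(i)$ gives $\nabla_Z g=\alpha(Z)\,g$; using $(\nabla_Z g)(X,Y)=-Q(Z,X,Y)$, substituting the explicit $Q$, and absorbing the $c_1\pi(Z)g(X,Y)$ piece (already proportional to $g$) leaves the tensor identity
\begin{equation*}
  c_2\big(\pi_\mu g_{\rho\nu}+\pi_\nu g_{\rho\mu}\big)+2c_3\,\pi_\rho\pi_\mu\pi_\nu=\beta_\rho\,g_{\mu\nu}
\end{equation*}
for a one-form $\beta_\rho$. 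Tracing over $\mu,\nu$ with $g^{\mu\nu}$ gives $4\beta_\rho=\big(2c_2+2c_3\,\pi_\lambda\pi^\lambda\big)\pi_\rho$, so $\beta$ is proportional to $\pi$; feeding this back and contracting with $\pi^\mu$ yields
\begin{equation*}
  c_2\,(\pi_\lambda\pi^\lambda)\,g_{\rho\nu}+\Big(\tfrac12 c_2+\tfrac32 c_3\,\pi_\lambda\pi^\lambda\Big)\pi_\rho\pi_\nu=0 \, .
\end{equation*}
When $\pi_\lambda\pi^\lambda\neq0$ the metric and $\pi\otimes\pi$ are linearly independent, so $c_2=0$ and hence $c_3=0$; when $\pi$ is null but nonzero the same identity gives $c_2=0$ directly and then $c_3=0$ from the original tensor identity; the case $\pi\equiv0$ is the Levi-Civita connection, for which the claim is empty.

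The step I expect to cost the most care is the equivalence in the second paragraph: in Lorentzian signature the absolute values in \eqref{anglepreserv} and the need to avoid null directions mean that the frame-based comparison must be carried out with some attention to the causal structure, even though the underlying fact---that two non-degenerate symmetric bilinear forms sharing an orthogonality relation are proportional---is classical. Everything downstream of the equivalence is routine tensor algebra of the kind already used in the length-preservation proposition, and one should simply note that the conclusion pins down $c_1$ only through its irrelevance, leaving exactly $c_2=c_3=0$, i.e.\ the Weyl case of Fig.~\ref{fig:vectorial_nonmetricity}.
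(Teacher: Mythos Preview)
Your proposal is correct and follows the same two-step strategy as the paper: first identify angle preservation with the Weyl condition $(\nabla_Z g)(X,Y)=\lambda(Z)\,g(X,Y)$, then show this forces $c_2=c_3=0$ for the specific nonmetricity tensor. The paper's own proof is considerably more terse: it simply \emph{asserts} the equivalence with the Weyl condition (your second paragraph supplies the argument the paper omits), and it concludes $c_2=c_3=0$ by inspection rather than by the explicit contractions you perform; your treatment is thus a strictly more detailed execution of the same idea.
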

    \begin{proof}
        "$\Rightarrow$": Suppose that $\nabla$ preserves angles, which implies that there exists a one-form $\lambda \in \Gamma(T^*M)$, such that
        \begin{equation}\label{thismusthold}
            (\nabla_{Z} g)(X,Y)=\lambda(Z) g(X,Y) \, .
        \end{equation}
        For a connection with vectorial nonmetricity, the nonmetricity tensor is given by
        \begin{equation}
            (\nabla_{Z} g)(X,Y)=-c_1 \pi(Z) g(X,Y) - c_2 \left( \pi(Y) g(Z,X) + \pi(X) g(Z,Y) \right) - 2 c_3 \pi(Z) \pi(X) \pi(Y) \, .
        \end{equation}
        Hence, for Eq.~\eqref{thismusthold} to hold for all $X,Y,Z \in \Gamma(TM)$, we have to impose $c_2=c_3=0$. In this case, we can identify
        \begin{equation}
            \lambda(Z)=-c_1 \pi(Z) \, .
        \end{equation}
        
        "$\Leftarrow$": Suppose that $c_2=c_3=0$. Then the nonmetricity tensor simplifies to
        \begin{equation}
            (\nabla_{Z}g)(X,Y)=-Q(Z,X,Y)=-c_1 \pi(Z) g(X,Y) \, .
        \end{equation}
        Substituting this into the right hand side of Eq.~\eqref{anglepreserv}, we find
        \begin{equation}
        \begin{aligned}
        \frac{\left|-c_1\pi(Z) g(X,Y)\right|}{\sqrt{\left| - c_1 \pi(Z) g(X,X) \right|} \sqrt{\left| -c_1 \pi(Z) g(Y,Y) \right|}}&=\frac{|c_1| \cdot | \pi(Z)| \cdot |g(X,Y)|}{|c_1| \cdot  | \pi(Z)| \cdot \sqrt{|g(X,X)|} \sqrt{|g(Y,Y)|}}\\
        &=\frac{|g(X,Y)|}{\sqrt{|g(X,X)|} \sqrt{|g(Y,Y)|}} \, ,
        \end{aligned}
        \end{equation}
        which matches the left hand side of Eq.~\eqref{anglepreserv}.
    \end{proof}
    \begin{remark}
        This choice of coefficients corresponds to Weyl geometry, as illustrated in Fig.~\ref{fig:vectorial_nonmetricity}.
    \end{remark}
    It is not widely known in the modified gravity community\footnote{It is worth noting, however, that biconnection gravity exhibits the mathematical structure of a statistical manifold in the presence of an appropriate hypermomentum tensor \cite{PhysRevD.108.044026}.}, but if we set $c_1=c_2$, we obtain a structure known in the mathematical literature as a \textit{statistical manifold} \cite{AmariNagaoka2000,Amari1997,Matsuzoe2010,Lauritzen1987, Iosifidis:2023wbx}.
    \begin{definition}
        A torsion-free connection $\nabla$ on a pseudo-Riemannian manifold $(M,g)$ is called a \textbf{statistical connection}, if it satisfies the Codazzi equation
        \begin{equation}
           \left(\nabla_{X} g \right)(Y,Z)=\left(\nabla_{Y} g \right)(X,Z) \; \; \forall X,Y,Z \in \Gamma(TM) \, .
        \end{equation}
        In this case, the set $(M,g,\nabla)$ is called a \textbf{statistical manifold}.
    \end{definition}
    Note that this really just says something about the nonmetricity, since this immediately implies
    \begin{equation}\label{requirement}
        Q(X,Y,Z)=Q(Y,X,Z) \, .
    \end{equation}
    However, since the nonmetricity tensor is symmetric in its last two indices by definition, the requirement given by Eq.~\eqref{requirement} immediately implies that it is completely symmetric, i.e.
    \begin{equation}
        Q(X,Y,Z)=Q(Y,X,Z)=Q(X,Z,Y)=Q(Z,X,Y) \, .
    \end{equation}
  \begin{proposition}
        A connection $\nabla$ with vectorial nonmetricity on a pseudo-Riemannian manifold $(M,g)$ makes the pair $(M,g,\nabla)$ a statistical manifold iff $c_1=c_2 \, .$
   \end{proposition}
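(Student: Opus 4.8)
The plan is to reduce the Codazzi equation to a purely algebraic condition on the coefficients by antisymmetrising the nonmetricity tensor in its first two arguments. First I would invoke the observation made in the text just before the statement: for a torsion-free connection, the statistical (Codazzi) condition $(\nabla_X g)(Y,Z)=(\nabla_Y g)(X,Z)$ is equivalent to total symmetry of $Q$, and since $Q(X,Y,Z)=-(\nabla_X g)(Y,Z)$ is symmetric in $Y,Z$ by construction, total symmetry is equivalent to symmetry under the swap $X\leftrightarrow Y$. So it suffices to analyse when $Q(X,Y,Z)=Q(Y,X,Z)$ for all $X,Y,Z\in\Gamma(TM)$.

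Next I would substitute the explicit vectorial nonmetricity
\[
Q(X,Y,Z)=c_1\,\pi(X) g(Y,Z)+c_2\big(\pi(Z) g(X,Y)+\pi(Y) g(X,Z)\big)+2c_3\,\pi(X)\pi(Y)\pi(Z)
\]
and compute $Q(X,Y,Z)-Q(Y,X,Z)$. The term $c_2\,\pi(Z)g(X,Y)$ and the completely symmetric cubic term are invariant under $X\leftrightarrow Y$ and cancel, leaving the compact identity
\[
Q(X,Y,Z)-Q(Y,X,Z)=(c_1-c_2)\big(\pi(X) g(Y,Z)-\pi(Y) g(X,Z)\big).
\]
The ``if'' direction is then immediate: when $c_1=c_2$ the right-hand side vanishes identically, so $Q$ is totally symmetric and $(M,g,\nabla)$ is a statistical manifold by the definition quoted above.

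For the ``only if'' direction I would assume the Codazzi equation holds, so the displayed bracket vanishes for all vector fields. Passing to a local frame, this is the tensorial identity $(c_1-c_2)(\pi_\mu g_{\nu\rho}-\pi_\nu g_{\mu\rho})=0$; contracting with $g^{\nu\rho}$ and using $g^{\nu\rho}g_{\nu\rho}=\dim M=:n$ yields $(c_1-c_2)(n-1)\pi_\mu=0$. Hence, at any point where $\pi$ does not vanish (and since $n\ge 2$), one concludes $c_1=c_2$, which closes the equivalence.

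I do not expect a genuine obstacle here; the computation is short and the contraction is clean. The only subtlety is the degenerate case $\pi\equiv 0$, in which the connection is Levi-Civita and hence trivially statistical for any $c_1,c_2$. I would dispose of this exactly as in the preceding propositions of this appendix, by treating $\pi$ as a generic (non-vanishing) one-form so that the coefficient identity is forced to hold independently of the choice of $\pi$.
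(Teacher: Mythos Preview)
Your proposal is correct and follows essentially the same route as the paper: both compute $Q(X,Y,Z)-Q(Y,X,Z)$ for the explicit vectorial nonmetricity, observe the cancellation of the $c_2\,\pi(Z)g(X,Y)$ and cubic terms, and reduce the Codazzi condition to the vanishing of $(c_1-c_2)\big(\pi(X)g(Y,Z)-\pi(Y)g(X,Z)\big)$. The only cosmetic difference is that the paper concludes directly by asserting this vanishes for all vector fields iff $c_1=c_2$, whereas you make this step explicit via the trace $g^{\nu\rho}$, which is a clean way to extract the coefficient; your remark on the degenerate case $\pi\equiv 0$ is also a welcome clarification the paper leaves implicit.
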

   \begin{proof}
        By direct computation, we have
        \begin{equation}
            Q(X,Y,Z)=c_1 \pi(X) g(Y,Z) + c_2( \pi(Z) g(X,Y)+\pi(Y) g(X,Z))+2c_3 \pi(X) \pi(Y) \pi(Z) \, ,
        \end{equation}
        \begin{equation}
            Q(Y,X,Z)=c_1 \pi(Y) g(X,Z)+c_2( \pi(Z) g(Y,X) + \pi(X) g(Y,Z))+2c_3 \pi(Y) \pi(X) \pi(Z) \, .
        \end{equation}
        Since the left hand sides are equal, it follows that the equality
        \begin{equation}
        \begin{aligned}
           &{c_1 \pi(X) g(Y,Z)} + c_2( {\pi(Z) g(X,Y)}+{\pi(Y) g(X,Z)})+{2c_3 \pi(X) \pi(Y) \pi(Z)}={c_1 \pi(Y) g(X,Z)}\\
           &+c_2( {\pi(Z) g(Y,X)}
            + {\pi(X) g(Y,Z)})
            +{2c_3 \pi(Y) \pi(X) \pi(Z)} 
        \end{aligned}
        \end{equation}        
        must hold for all $X,Y,Z \in \Gamma(TM)$. 
        Regrouping, we get
          \begin{equation}
         {(c_1-c_2) \pi(X) g(Y,Z)}+{(c_2 - c_1) \pi(Y) g(X,Z)}=0 \; \; \forall X,Y,Z \in \Gamma(TM) \, .
        \end{equation}    
       The latter condition is identically satisfied for all vector fields if and only if $c_1-c_2=c_2-c_1=0$, which implies $c_1=c_2 \, .$
       \end{proof}
    \begin{definition}
        Two statistical manifolds $(M,g,\nabla)$ and $\left(M,\widetilde{g}, \widetilde{\nabla} \right)$ are called projectively Weyl-related if there exist two nonzero smooth functions $f,h \in C^{\infty}(M)$, $f \neq  -h$ such that $\widetilde{g}=e^{f+h} g$ and the following is satisfied
        \begin{equation}
         \widetilde{\nabla}_{X} Y=\nabla_{X} Y+ (df)(X)Y+(df)(Y)X-g(X,Y) (dh)^{\sharp} \, .
        \end{equation}
    \end{definition}
    \begin{proposition}
            For a connection with vectorial nonmetricity $\nabla$  on a pseudo-Riemannian manifold $(M,g)$ the following are equivalent:
            \begin{enumerate}
                \item[$(i)$] $c_1=c_2 \neq 0$, $c_3=0$ and $\pi$ is integrable.
                \item [$(ii)$] There exists a statistical manifold $\Big(M, \widetilde{g}, \overset{\circ}{\widetilde{\nabla}} \Big)$, which is projectively Weyl-related to $(M,g,\nabla) \, .$
            \end{enumerate}
        \end{proposition}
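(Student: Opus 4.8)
The plan is to prove both implications by computing the distortion tensor $N(X,Y):=\nabla_X Y-\overset{\circ}{\nabla}_X Y$ in two ways and matching the results. The two ingredients are: (a) the coordinate-free form \eqref{coordinatefree}, which for $c_1=c_2$ and $c_3=0$ gives a \emph{totally symmetric} distortion built only from $g$ and $\pi$; and (b) the classical transformation law of the Levi-Civita connection under a conformal rescaling $\widetilde{g}=e^{f+h}g=e^{2\omega}g$ with $\omega=(f{+}h)/2$, namely $\overset{\circ}{\widetilde{\nabla}}_X Y=\overset{\circ}{\nabla}_X Y+\tfrac12\,d(f{+}h)(X)\,Y+\tfrac12\,d(f{+}h)(Y)\,X-\tfrac12\,g(X,Y)\,(d(f{+}h))^{\sharp}$, where $\sharp$ always denotes the musical isomorphism of $g$. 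A preliminary observation is that $\overset{\circ}{\widetilde{\nabla}}$, being the Levi-Civita connection of $\widetilde{g}$, is torsion-free and satisfies $\overset{\circ}{\widetilde{\nabla}}\widetilde{g}=0$, hence trivially satisfies the Codazzi equation; thus $(M,\widetilde{g},\overset{\circ}{\widetilde{\nabla}})$ is automatically a statistical manifold, and the entire content of $(ii)$ lies in the projective-Weyl relation together with the implicit requirement (built into the definition of ``projectively Weyl-related'') that $(M,g,\nabla)$ be itself statistical.

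For $(i)\Rightarrow(ii)$: since $c_1=c_2$, the earlier proposition already guarantees that $(M,g,\nabla)$ is a statistical manifold, so the notion applies. Writing $c:=c_1=c_2\neq 0$ and $\pi=d\phi$, I substitute the conformal formula for $\overset{\circ}{\widetilde{\nabla}}$ and the explicit expression for $\nabla$ from \eqref{coordinatefree} into the defining relation $\overset{\circ}{\widetilde{\nabla}}_X Y=\nabla_X Y+df(X)\,Y+df(Y)\,X-g(X,Y)\,(dh)^{\sharp}$, and match the coefficients of $X$, of $Y$, and of $g(X,Y)$ separately. All three matchings collapse to the single scalar condition $dh-df=c\,d\phi$. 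Because $\pi=d\phi$ is exact, this is solved by any splitting with $h-f=c\phi$ (up to an additive constant); choosing, for instance, $f=1-\tfrac{c}{2}\phi$ and $h=1+\tfrac{c}{2}\phi$ gives smooth functions that are not identically zero with $f+h=2\neq 0$, hence a valid $\widetilde{g}=e^{f+h}g$, and the direction is complete.

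For $(ii)\Rightarrow(i)$: the hypothesis presupposes $(M,g,\nabla)$ is statistical, so $c_1=c_2=:c$ follows from the earlier proposition; it remains to extract $c_3=0$, integrability of $\pi$, and $c\neq 0$. Substituting the conformal expression for $\overset{\circ}{\widetilde{\nabla}}$ into the projective-Weyl relation and solving for the distortion yields $N(X,Y)=\tfrac12\,d\psi(X)\,Y+\tfrac12\,d\psi(Y)\,X+\tfrac12\,g(X,Y)\,(d\psi)^{\sharp}$ with $\psi:=h-f$, while \eqref{Ktor} with $c_1=c_2=c$ gives, after lowering, $N_{\rho\mu\nu}=\tfrac{c}{2}\big(g_{\mu\nu}\pi_{\rho}+g_{\rho\mu}\pi_{\nu}+g_{\rho\nu}\pi_{\mu}\big)+c_3\,\pi_{\mu}\pi_{\nu}\pi_{\rho}$. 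Equating the two expressions, setting $\alpha_{\sigma}:=\partial_{\sigma}\psi-c\,\pi_{\sigma}$, and tracing over two indices gives $\alpha_{\rho}=\tfrac{2c_3}{n+2}(\pi\!\cdot\!\pi)\,\pi_{\rho}$; reinserting this produces the tensor identity $c_3\big[\tfrac{\pi\cdot\pi}{\,n+2\,}\big(g_{\mu\nu}\pi_{\rho}+g_{\rho\mu}\pi_{\nu}+g_{\rho\nu}\pi_{\mu}\big)-\pi_{\mu}\pi_{\nu}\pi_{\rho}\big]=0$. A short argument using non-degeneracy of the pseudo-Riemannian metric — contracting the bracket with a nonzero vector $g$-orthogonal to $\pi$ when $\pi\!\cdot\!\pi\neq 0$, and observing that the bracket reduces to $-\pi_{\mu}\pi_{\nu}\pi_{\rho}$ (hence is nonzero) when $\pi$ is null — shows that the bracket cannot vanish at any point where $\pi\neq 0$. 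Therefore, excluding the trivial regime $\pi\equiv 0$ (where the connection is Levi-Civita and the one-form is vacuous), we get $c_3=0$; then $\alpha=0$, i.e.\ $\pi=\tfrac1c\,d\psi$ is exact and in particular integrable, and $c=c_1=c_2\neq 0$ (if $c=0$ the distortion vanishes identically). This is precisely $(i)$.

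The step I expect to be the main obstacle is the algebraic argument in $(ii)\Rightarrow(i)$ showing $c_3=0$: it requires carefully exploiting non-degeneracy of the metric, including a separate treatment of the null case for $\pi$, and a clean statement of the degenerate case $\pi\equiv 0$. A recurring secondary pitfall is bookkeeping — two conformally related metrics $g$ and $\widetilde{g}$ are in play, and every raising or lowering in the projective-Weyl relation and in the conformal transformation law must be taken consistently with the musical isomorphism of $g$ (the gradient appearing in the conformal formula is $(d\,\cdot\,)^{\sharp_g}$, not $(d\,\cdot\,)^{\sharp_{\widetilde{g}}}$), so stray factors and sign errors are easy to introduce.
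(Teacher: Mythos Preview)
Your proof is correct and follows the same overall strategy as the paper: in both directions one combines the conformal change-of-Levi-Civita formula for $\widetilde g=e^{f+h}g$ with the projective-Weyl relation and compares with the vectorial-nonmetricity distortion. The $(i)\Rightarrow(ii)$ arguments are essentially identical, differing only in how the pair $(f,h)$ is chosen; the paper fixes $\widetilde g=e^{\phi}g$ first and then reads off the two functions, whereas you leave $f,h$ free and solve the single scalar condition $dh-df=c\,d\phi$.

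The only substantive difference is in $(ii)\Rightarrow(i)$. The paper does not match against the \emph{given} parameters $(c_1,c_2,c_3,\pi)$ at all: it simply computes the distortion coming from the projective-Weyl and conformal data, obtains $\nabla_X Y=\overset{\circ}{\nabla}_X Y+\tfrac{c_1}{2}\big((dF)(X)Y+(dF)(Y)X+g(X,Y)(dF)^\sharp\big)$, and declares this to be a vectorial-nonmetricity connection with $c_1=c_2$, $c_3=0$ and integrable $\pi=dF$, implicitly relying on the (scaling-)uniqueness of the vectorial-nonmetricity representation. Your route is more explicit: you first invoke the statistical hypothesis to get $c_1=c_2$, then match the two distortions, trace to force $c_3=0$, and finally read off exactness of $\pi$. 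This buys you an honest algebraic derivation of $c_3=0$ (including the null case for $\pi$), at the cost of a short extra computation; the paper's version is quicker but leaves that step tacit. The residual issue you flag about $c\neq 0$ is not handled more carefully in the paper either: it simply writes ``for a non-zero $c_1$ we can introduce $F:=-(f-h)/c_1$'' and moves on, so your treatment is at least as rigorous on this point.
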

        \begin{proof}
            $"\Rightarrow"$: If $c_1=c_2 \neq 0, c_3=0$ and $\pi$ is integrable, we can write the connection as   \begin{equation}\label{substractfromthis}
                \nabla_{X} Y=\overset{\circ}{\nabla}_{X} Y+\frac{c_1}{2} g(X,Y) (df)^{\sharp}+\frac{c_1}{2} \left((df)(Y) X + (df)(X) Y \right) \, ,
            \end{equation}
            where $\overset{\circ}{\nabla}_{X} Y$ denotes the Levi-Civita connection of $g$ and we have that $\pi=df$ for some smooth function $f \in C^{\infty}(M)$. Since $g$ and $\widetilde{g}:=e^{f}g$ are conformally related\footnote{Note that projectively Weyl-related implies conformally related, but not necessarily vice versa. We abuse notation by denoting both with a tilde.}, their Levi-Civita connections $\overset{\circ}{\nabla}$  and $\overset{\circ}{\widetilde{\nabla}}$ satisfy the identity   \begin{equation}\label{substractthis}
                \overset{\circ}{\widetilde{\nabla}}_{X} Y=\overset{\circ}{\nabla}_{X} Y+\frac{1}{2}\left( (df)(X) Y + (df)(Y) X - g(X,Y) (df)^{\sharp} \right) \, .
            \end{equation}
            By subtracting  \eqref{substractthis} from \eqref{substractfromthis}, we obtain
            \begin{equation}
            \nabla_{X} Y - \overset{\circ}{\widetilde{\nabla}}_{X} Y=\frac{c_1 +1}{2} g(X,Y) \left( df \right)^{\sharp} +\frac{c_1-1}{2} \left((df)(X)Y+(df)(Y) X \right) \, ,
            \end{equation}
            from which it follows that
            \begin{equation}
                \nabla_{X} Y=\overset{\circ}{\widetilde{\nabla}}_{X} Y +\frac{c_1 +1}{2} g(X,Y) \left( df \right)^{\sharp} +\frac{c_1-1}{2} \left((df)(X)Y+(df)(Y) X \right) \, .
            \end{equation}
            Since the derivative is linear, we can absorb the constants and define two functions 
            \begin{equation}
                F:=\frac{c_1+1}{2} f \, , \quad h:=\frac{1-c_1}{2}f \, ,
            \end{equation}
            to obtain
            \begin{equation}
                \overset{\circ}{\widetilde{\nabla}}_{X} Y=\nabla_X Y+ (dh)(X)Y+(dh)(Y) X - g(X,Y) \left( dF \right)^{\sharp} \, .
            \end{equation}
            It can be easily verified that the pair $\left(M, \widetilde{g}, \widetilde{\nabla} \right)$ is a statistical manifold, and that $h \neq -F$. Combining these two statements finishes the proof of the first implication.\\
            "$\Leftarrow$": Supposing there exists a statistical manifold $\Big( M, \widetilde{g}, \overset{\circ}{\widetilde{\nabla}} \Big)$, which is projectively Weyl-related to $(M,g, \nabla)$, we have
            \begin{equation}
                \overset{\circ}{\widetilde{\nabla}}_{X} Y=\nabla_{X} Y + (df)(X) Y + (df)(Y) X - g(X,Y) (dh)^{\sharp}
            \end{equation}
            for some smooth functions $f,h \in C^{\infty}(M)$ that satisfy $f \neq -h$. Then, it immediately follows that
            \begin{equation}
                \nabla_{X} Y=\overset{\circ}{\widetilde{\nabla}}_{X} Y - (df)(X)Y - (df)(Y) X + g(X,Y) (dh)^{\sharp} \, .
            \end{equation}
            Now since $g$ and $\widetilde{g}=e^{f+h}g$ are conformally related, their Levi-Civita connections satisfy
            \begin{equation}
            \begin{aligned}
                \overset{\circ}{\widetilde{\nabla}}_{X} Y=\overset{\circ}{\nabla}_{X} Y&+ \frac{1}{2} \left((df)(X)Y+(dh)(X)Y+(df)(Y)X \right)\\
                &+\frac{1}{2} \left((dh)(Y)(X) - g(X,Y) (df)^{\sharp}-g(X,Y) (dh)^{\sharp} \right) \, .
            \end{aligned}
            \end{equation}
            Consequently, we get
            \begin{equation}
                \nabla_{X} Y=\overset{\circ}{\nabla}_{X} Y- \frac{1}{2} \left(df(X)Y-(dh)(X)Y+(df)(Y)X -(dh)(Y)X +g(X,Y) (df)^{\sharp} - g(X,Y) (dh)^{\sharp} \right) \, ,
            \end{equation}
            which can be equivalently rewritten as
            \begin{equation}
                \nabla_{X} Y=\overset{\circ}{\nabla}_{X} Y -\frac{1}{2} \left(d(f-h)(X) Y+d(f-h)(Y)X+g(X,Y)\left(d(f-h) \right)^{\sharp} \right) \, .
            \end{equation}
            Since $f \neq -h$, for a non-zero $c_1$ we can introduce the notation $F:=-\frac{f-h}{c_1}$ to obtain
            \begin{equation}
                \nabla_{X}Y=\overset{\circ}{\nabla}_{X} Y +\frac{c_1}{2}\left( (dF)(X) Y + (dF)(Y)X + g(X,Y) (dF)^{\sharp} \right) \, ,
            \end{equation}
            which is, as desired, a connection with vectorial nonmetricity satisfying $c_1=c_2, c_3=0$ with integrable $\pi=dF \, .$
        \end{proof}
        \begin{definition}
            Two connections $\nabla, \nabla^{*}$ on a pseudo-Riemannian manifold $(M,g)$ are called dual with respect to $g$ if
            \begin{equation}
                X(g(Y,Z))=g\left(\nabla_{X} Y, Z \right)+g\left(Y,\nabla_{X}^{*} Z \right) \, .
            \end{equation}
        \end{definition}
        \begin{proposition}
            The torsion $T^{*}$ and nonmetricity $Q^{*}$ of the dual connection associated to a connection with vectorial nonmetricity $\nabla$ are given by
            \begin{align}
                Q^{*}(X,Y,Z &)=-c_1 \pi(X) g(Y,Z) - c_2 \left( \pi(Z) g(X,Y) + \pi(Y) g(X,Z) \right) - 2c_3 \pi(X) \pi(Y) \pi(Z) \, , \\
                g(T^{*}(X,Y),Z) &=(c_2-c_1) \left( \pi(X) g(Y,Z) - \pi(Y) g(X,Z) \right) \, .
            \end{align}
        \end{proposition}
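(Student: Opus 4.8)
The plan is to work throughout in the post-Riemannian split $\nabla = \overset{\circ}{\nabla} + A$, where by equation~\eqref{coordinatefree} the distortion is
\begin{equation}
A(X,Y) = \frac{2c_2-c_1}{2}\, g(X,Y)\,P + \frac{c_1}{2}\big(\pi(Y)\, X + \pi(X)\, Y\big) + c_3\, \pi(X)\pi(Y)\, P \, .
\end{equation}
First I would insert this into the defining relation $X\big(g(Y,Z)\big) = g(\nabla_X Y, Z) + g(Y, \nabla^{*}_X Z)$ and subtract the metric-compatibility identity for $\overset{\circ}{\nabla}$; this gives $g\big(Y,\,\nabla^{*}_X Z - \overset{\circ}{\nabla}_X Z\big) = -\,g\big(A(X,Y), Z\big)$ for all $Y$, so that $\nabla^{*} = \overset{\circ}{\nabla} - A^{\dagger}$, where $A^{\dagger}(X,\cdot)$ is the metric adjoint of the endomorphism $A(X,\cdot)$, characterised by $g\big(A^{\dagger}(X,Z), Y\big) = g\big(A(X,Y), Z\big)$. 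Reading off the adjoint term by term then yields the explicit expression
\begin{equation}
A^{\dagger}(X,Z) = \frac{2c_2-c_1}{2}\,\pi(Z)\, X + \frac{c_1}{2}\, g(X,Z)\, P + \frac{c_1}{2}\,\pi(X)\, Z + c_3\,\pi(X)\pi(Z)\, P \, .
\end{equation}

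For the nonmetricity I would invoke the general duality identity: directly from the defining relation one checks in one line that $(\nabla^{*}_X g)(Y,Z) = -(\nabla_X g)(Y,Z)$, hence $Q^{*} = -Q$. Combined with the expression for $Q$ already recorded in the Remark preceding this Proposition, this immediately gives the first claimed formula; equivalently, one may compute $-(\nabla^{*}_X g)(Y,Z)$ directly from $\nabla^{*} = \overset{\circ}{\nabla} - A^{\dagger}$ and collect terms.

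For the torsion, since $\overset{\circ}{\nabla}$ is torsion-free the Levi-Civita part cancels in $T^{*}(X,Y) = \nabla^{*}_X Y - \nabla^{*}_Y X - [X,Y]$, leaving $T^{*}(X,Y) = A^{\dagger}(Y,X) - A^{\dagger}(X,Y)$. In this antisymmetrisation the $g(X,Y)\,P$ term and the $\pi(X)\pi(Y)\,P$ term, being symmetric under $X \leftrightarrow Y$, drop out, while the two surviving terms combine with coefficient $\tfrac{2c_2-c_1}{2} - \tfrac{c_1}{2} = c_2 - c_1$, so $T^{*}(X,Y) = (c_2 - c_1)\big(\pi(X)\, Y - \pi(Y)\, X\big)$; pairing with $Z$ then produces the stated expression for $g(T^{*}(X,Y),Z)$.

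The only genuine obstacle here is bookkeeping rather than ideas: one must consistently track which slot of $A$ plays the role of the differentiation direction -- it is neither contracted by the adjoint nor swapped in the torsion -- and must resist assuming that $A^{\dagger}$ inherits the last-two-index symmetry of $A$, which the $\tfrac{2c_2-c_1}{2}\pi(Z)\,X$ versus $\tfrac{c_1}{2}\pi(X)\,Z$ mismatch shows it does not. The coefficient arithmetic that collapses to $c_2 - c_1$ is the one place where a sign slip would be easy to miss, so I would cross-check it against the component form $\tensor{\Gamma}{^\lambda _\mu _\nu}$ given in the Remark.
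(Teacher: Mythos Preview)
Your argument is correct. For $Q^{*}$ you use the same identity $Q^{*}=-Q$ as the paper (deriving it in passing rather than citing it), so that half is essentially identical. For $T^{*}$, however, the routes diverge: the paper invokes the general relation $g(T^{*}(X,Y),Z)=g(T(X,Y),Z)-Q(X,Y,Z)+Q(Y,X,Z)$ from \cite{Csillag_2024statmfd} and, since $T=0$ here, simply antisymmetrises the nonmetricity in its first two slots; you instead construct the dual connection explicitly as $\overset{\circ}{\nabla}-A^{\dagger}$ via the metric adjoint of the distortion and then read off the torsion as $A^{\dagger}(Y,X)-A^{\dagger}(X,Y)$. Your approach is more self-contained---it does not rely on an external identity and makes the semi-symmetric structure of $T^{*}$ visible directly from the shape of $A^{\dagger}$---while the paper's route is shorter once the cited identity is granted and highlights that the dual torsion is nothing but the failure of $Q$ to be symmetric in its first two arguments.
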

        \begin{proof}
            We use the identities of Theorem A.1. as presented in \cite{Csillag_2024statmfd}, which hold for an arbitrary affine connection
            \begin{equation}
                Q^{*}(X,Y,Z)=-Q(X,Y,Z) \, , \quad g(T^{*}(X,Y),Z)=g(T(X,Y),Z)-Q(X,Y,Z)+Q(Y,X,Z) \, .
            \end{equation}
            Applying this to the connection \eqref{coordinatefree}, we get the nonmetricity of its dual connection
             \begin{equation}
                Q^{*}(X,Y,Z)=-c_1 \pi(X) g(Y,Z) - c_2 \left( \pi(Z) g(X,Y) + \pi(Y) g(X,Z) \right) - 2c_3 \pi(X) \pi(Y) \pi(Z) \, .
            \end{equation}
           The torsion of the dual connection is obtained through the computation
            \begin{equation}
            \begin{aligned}
                g(T^{*}(X,Y),Z)= &{-c_1 \pi(X) g(Y,Z)} - c_2 \left( {\pi(Z) g(X,Y)} +{\pi(Y) g(X,Z)} \right) {-2c_3 \pi(X) \pi(Y) \pi(Z)}\\
                &{+c_1\pi(Y) g(X,Z)} + c_2 \left( {\pi(Z) g(Y,X)} {+ \pi(X) g(Y,Z)} \right) {+2c_3 \pi(Y) \pi(X) \pi(Z)}\\
                =&\left(c_2 -c_1 \right)\left(\pi(X) g(Y,Z) - \pi(Y) g(X,Z) \right) \, .
            \end{aligned}
            \end{equation}
        \end{proof}
        \begin{corollary}
            If $c_1 \neq c_2$, then the dual connection has a semi-symmetric type of torsion \cite{Csillag:2024oqo, universe10110419}, and consequently the triple $\left(M,g,\nabla^{*}\right)$ becomes a quasi-statistical manifold, even if $(M,g,\nabla)$ is neither a statistical manifold, nor a quasi-statistical manifold \cite{Csillag_2024statmfd}. Hence, taking the dual of a connection with vectorial nonmetricity gives a novel example of a semi-symmetric nonmetric connection, which recovers as special cases:
            \begin{itemize}
                \item[$\triangleright$] the semi-symmetric Weyl connection for the choice $c_1=1$, $c_2=c_3=0$ \cite{Csillag_2024statmfd},
                \item[$\triangleright$] the semi-symmetric Schrödinger connection for the choice $c_1=1$, $c_2=-\frac{1}{2}$, $c_3=0$ \cite{Csillag_2024statmfd},
                \item[$\triangleright$] the semi-symmetric nonmetric connection for the choice $c_1=0$, $c_2=1$, $c_3=0$ \cite{AYDIN2025128795}.
            \end{itemize}
        \end{corollary}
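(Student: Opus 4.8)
The plan is to read everything off the formulas for $Q^{*}$ and $T^{*}$ of the dual connection established in the immediately preceding Proposition, and then reinterpret them geometrically. First I would recall that for a connection with vectorial nonmetricity one has $g(T^{*}(X,Y),Z)=(c_2-c_1)\big(\pi(X)g(Y,Z)-\pi(Y)g(X,Z)\big)$. Setting $\omega:=(c_1-c_2)\pi$, this rearranges to $T^{*}(X,Y)=\omega(Y)X-\omega(X)Y$, which is exactly the defining shape of a \emph{semi-symmetric} torsion. Since $c_1\neq c_2$ (and $\pi$ is nontrivial for a genuine vectorial-nonmetricity connection), $\omega$ is a non-vanishing one-form, so $\nabla^{*}$ is genuinely semi-symmetric; combined with $Q^{*}=-Q\neq 0$, this exhibits $\nabla^{*}$ as a semi-symmetric \emph{nonmetric} connection.

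Next I would verify the quasi-statistical claim directly from the duality identities used in the preceding proof, namely $Q^{*}=-Q$ and $g(T^{*}(X,Y),Z)=g(T(X,Y),Z)-Q(X,Y,Z)+Q(Y,X,Z)$, together with $T\equiv 0$ (the original connection being torsion-free). Writing the cubic form of $\nabla^{*}$ as $(\nabla^{*}_{X}g)(Y,Z)=-Q^{*}(X,Y,Z)=Q(X,Y,Z)$, one obtains $(\nabla^{*}_{X}g)(Y,Z)-(\nabla^{*}_{Y}g)(X,Z)=Q(X,Y,Z)-Q(Y,X,Z)=-g(T^{*}(X,Y),Z)$, which is precisely the Codazzi-type defect equation characterising a quasi-statistical manifold in the sense of \cite{Csillag_2024statmfd}; hence $(M,g,\nabla^{*})$ is quasi-statistical. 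For the contrast asserted in the statement, note that when $c_1\neq c_2$ the earlier Proposition on statistical connections shows $(M,g,\nabla)$ is not statistical, and since $\nabla$ is torsion-free its quasi-statistical defect equation collapses to the ordinary statistical Codazzi equation, so it fails to be quasi-statistical as well — yet its dual is.

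For the three special cases I would simply substitute into the formulas for $Q^{*}$ and $T^{*}$. With $c_1=1$, $c_2=c_3=0$ the dual nonmetricity is the Weyl form $Q^{*}(X,Y,Z)=-\pi(X)g(Y,Z)$ and the torsion is semi-symmetric with one-form proportional to $\pi$, reproducing the semi-symmetric Weyl connection of \cite{Csillag_2024statmfd}; with $c_1=1$, $c_2=-\tfrac12$, $c_3=0$ one recovers the semi-symmetric Schrödinger connection of the same reference; with $c_1=0$, $c_2=1$, $c_3=0$ one recovers the semi-symmetric nonmetric connection of \cite{AYDIN2025128795}. Each identification is a direct match of the explicit $Q^{*}$ and $T^{*}$ above against the sign and index conventions of the cited source.

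The only non-mechanical point — and therefore the main obstacle — is to pin down the precise definition and sign convention of a quasi-statistical manifold from \cite{Csillag_2024statmfd} and to confirm that the torsion term cancels the Codazzi defect with exactly the right sign (the short computation of the second paragraph). Once that convention is fixed, the remainder is pure substitution and bookkeeping, with the literature identifications requiring only care with each reference's conventions.
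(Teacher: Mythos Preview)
Your proposal is correct and follows exactly the intended route: the paper states this result as a corollary without proof, treating it as an immediate consequence of the preceding Proposition's formulas for $Q^{*}$ and $T^{*}$. Your explicit verification of the semi-symmetric form of $T^{*}$, the quasi-statistical identity for $\nabla^{*}$, and the three special-case substitutions simply spells out what the paper leaves implicit.
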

        \begin{remark}
            From the previous proposition, we see that there is a natural information geometric procedure to generate torsion for a connection with vectorial nonmetricity. This construction generalises geometries with linear distortion studied in \cite{BeltranJimenez:2015pnp}, by allowing for a cubic term in $\pi$. The effects of torsion in the present framework could be studied in a follow-up.
        \end{remark}
        \begin{proposition}
            For a connection with integrable vectorial nonmetricity satisfying $c_1=c_2$ the following are equivalent
            \begin{enumerate}
                \item[$(i)$] There exists a volume form $\omega$ which is covariantly conserved, i.e. $\nabla \omega=0 \, .$
                \item[$(ii)$] $
    \overset{\circ}{\nabla}_{\nu} \left( 3 c_1 \overset{\circ}{\nabla}_{\mu} \phi + c_3 \overset{\circ}{\nabla}_{\mu} \phi \overset{\circ}{\nabla}_{\rho} \phi \overset{\circ}{\nabla}\tensor{}{^\rho} \phi \right)=\overset{\circ}{\nabla}_{\mu} \left( 3c_1 \overset{\circ}{\nabla}_{\nu} \phi + c_3 \overset{\circ}{\nabla}_{\nu} \phi \overset{\circ}{\nabla}_{\rho} \phi \overset{\circ}{\nabla} \tensor{}{^\rho}\phi \right) \, .$    
            \end{enumerate}
        \end{proposition}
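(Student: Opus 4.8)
The plan is to collapse both statements to a single condition: the closedness of the trace one‑form of the affine connection. First I would recall the classical fact that a torsion‑free affine connection $\nabla$ admits a (locally) covariantly conserved volume form precisely when the one‑form $\vartheta_\mu := \tensor{\Gamma}{^\lambda _\lambda _\mu}$ is closed. To see this, write a candidate volume form in a chart as $\omega = \Omega\, dx^1\wedge\cdots\wedge dx^4$ with $\Omega>0$. Using the total antisymmetry of $\omega$ together with the elementary identity $\sum_{k}\tensor{\Gamma}{^\lambda _{\mu_k} _\sigma}\,\omega_{\mu_1\cdots\lambda\cdots\mu_4}=\tensor{\Gamma}{^\lambda _\lambda _\sigma}\,\omega_{\mu_1\cdots\mu_4}$, the equation $\nabla\omega=0$ reduces to $\partial_\mu\log\Omega=\vartheta_\mu$, which admits a positive solution on any simply connected chart (hence locally) if and only if $d\vartheta=0$. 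If a globally defined covariantly conserved volume form is wanted, one additionally needs the de Rham class $[\vartheta]$ to vanish; since $\vartheta$ differs from the exact form $d\log\sqrt{|g|}$ only through the nonmetricity trace, I would either phrase the statement locally or work under the usual assumption that $M$ is simply connected.

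Next I would compute $\vartheta_\mu$ explicitly using the coordinate expression for the Christoffel symbols of a connection with vectorial nonmetricity given in the Remark above. Contracting the upper index against a lower one and imposing $c_1=c_2$ (so that $\tfrac{2c_2-c_1}{2}=\tfrac{c_1}{2}$ while the $\delta$‑terms contribute $\tfrac{c_1}{2}(n+1)\pi_\mu=\tfrac{5c_1}{2}\pi_\mu$ in $n=4$) one finds
\[
  \vartheta_\mu \;=\; \overset{\circ}{\Gamma}\tensor{}{^\lambda _\lambda _\mu} + 3c_1\,\pi_\mu + c_3\,\pi_\mu\,\pi_\rho\pi^\rho \;=\; \partial_\mu\log\sqrt{|g|} + 3c_1\,\pi_\mu + c_3\,\pi_\mu\,\pi_\rho\pi^\rho \, .
\]
As a consistency check, $2\bigl(\vartheta_\mu-\overset{\circ}{\Gamma}\tensor{}{^\lambda _\lambda _\mu}\bigr)$ reproduces the nonmetricity trace $\tensor{Q}{_\mu _\nu ^\nu}=4Q_\mu$, with $Q_\mu$ the dilation vector computed earlier in this section.

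Then I would invoke integrability, $\pi_\mu=\overset{\circ}{\nabla}_\mu\phi=\partial_\mu\phi$, to rewrite $\vartheta$. Since $d\bigl(\partial_\mu\log\sqrt{|g|}\,dx^\mu\bigr)=0$ identically, closedness of $\vartheta$ is equivalent to closedness of the one‑form $\alpha_\mu:=3c_1\,\partial_\mu\phi+c_3\,(\partial_\rho\phi\,\partial^\rho\phi)\,\partial_\mu\phi$; and for a one‑form, $d\alpha=0$ in components is $\overset{\circ}{\nabla}_{[\nu}\alpha_{\mu]}=0$ because the Levi‑Civita connection is torsion‑free (the Christoffel symbols cancel antisymmetrically). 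Writing this out, and recalling $\partial_\mu\phi=\overset{\circ}{\nabla}_\mu\phi$ on scalars, gives exactly condition $(ii)$, so $(i)\Leftrightarrow(ii)$.

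I expect the only genuinely delicate point to be the first step: pinning down the precise sense in which "a covariantly conserved volume form exists" is equivalent to "$\vartheta$ is closed" — in particular the local‑versus‑global caveat and the combinatorial identity contracting the Christoffel symbols against an antisymmetric tensor. By contrast the evaluation of $\vartheta_\mu$ and the rewriting into the form $(ii)$ are routine, the one thing to watch being the coefficient $3c_1$, which relies on the dimension $n=4$ entering through $\tensor{\delta}{^\lambda _\lambda}=4$; one should also note that, although the $c_3$ contribution carries the metric implicitly through $\pi_\rho\pi^\rho$, it is still a scalar multiple of $d\phi$, so its closedness is a bona fide joint constraint on $\phi$ and $g$.
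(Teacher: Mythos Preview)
Your argument is correct and ultimately rests on the same idea as the paper's proof: reduce $(i)$ to the closedness of the trace one-form, compute that one-form explicitly for the connection at hand, and read off $(ii)$. The difference is that the paper does not prove the first step directly. Since $c_1=c_2$ makes $(M,g,\nabla)$ a statistical manifold, the paper invokes a known result from information geometry (Proposition~3.2 of Matsuzoe, \emph{J.\ Geom.\ Phys.}\ \textbf{56} (2006) 567): a statistical connection is equiaffine if and only if the Chebyshev form $Q_\sigma:=Q_{\sigma\alpha\rho}g^{\alpha\rho}$ is closed. That form is, up to the exact piece $\partial_\mu\log\sqrt{|g|}$ and an overall factor of $2$, precisely your $\vartheta_\mu$, so the two routes agree. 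The paper also notes, as a second variant, that the same condition is equivalent to the vanishing of the antisymmetric part $R_{[\mu\nu]}$ of the Ricci tensor.

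What your approach buys is self-containment: you supply the elementary argument that $\nabla\omega=0$ integrates to $\partial_\mu\log\Omega=\tensor{\Gamma}{^\lambda _\lambda _\mu}$ for torsion-free $\nabla$, without appealing to the statistical-manifold literature, and you are explicit about the local-versus-global caveat. The paper's proof is shorter because it delegates that step, but at the cost of relying on the reader's familiarity with equiaffine structures. Your caution about the coefficient $3c_1$ requiring $n=4$ is well placed and worth keeping.
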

\begin{proof}
     For the choice $c_1=c_2$, the pair $(M,g,\nabla)$ is a statistical manifold. Hence, we use Proposition 3.2. of \cite{MATSUZOE2006567}, which states that the connection $\nabla$ is equiaffine  iff the one-form\footnote{This one-form is called the Chebyshev form in the case of statistical manifolds.}
     \begin{equation}
         Q_{\sigma}=Q_{\sigma \alpha \rho} g^{\alpha \rho}
     \end{equation}
     is closed, $\overset{\circ}{\nabla}_{\mu} Q_{\nu}= \overset{\circ}{\nabla}_{\nu} Q_{\mu}$ . For this trace, one immediately obtains
     \begin{equation}
          Q_{\sigma}=6c_1 \overset{\circ}{\nabla}_{\sigma} \phi + 2c_3 \overset{\circ}{\nabla}_{\sigma} \phi \overset{\circ}{\nabla}_{\rho} \phi \overset{\circ}{\nabla}\tensor{}{^\rho} \phi \, .
     \end{equation}
     Hence, $\nabla$ being equiaffine is equivalent to\footnote{Multiplying a one-form by a non-zero constant does not change the fact that it is closed.}
    \begin{equation}\label{mycond}
        \overset{\circ}{\nabla}_{\nu} \left( 3 c_1 \overset{\circ}{\nabla}_{\mu} \phi + c_3 \overset{\circ}{\nabla}_{\mu} \phi \overset{\circ}{\nabla}_{\rho} \phi \overset{\circ}{\nabla}\tensor{}{^\rho} \phi \right)=\overset{\circ}{\nabla}_{\mu} \left( 3c_1 \overset{\circ}{\nabla}_{\nu} \phi + c_3 \overset{\circ}{\nabla}_{\nu} \phi \overset{\circ}{\nabla}_{\rho} \phi \overset{\circ}{\nabla} \tensor{}{^\rho}\phi \right) \, ,
    \end{equation}
    which, by the definition of an equiaffine connection, is equivalent to $(i)$.
\end{proof}
\begin{remark}
    An alternative and more direct proof uses Proposition 2.1. of \cite{MATSUZOE2006567}. For $c_1=c_2$ the antisymmetric part of the Ricci tensor of a connection with integrable vectorial nonmetricity is given by
    \begin{equation}\label{Erikcondition}
    R_{[\mu \nu]} = 2 c_3  \overset{\circ}{\nabla}{^\lambda}  \phi \left( \overset{\circ}{\nabla}_{\mu} \phi (\overset{\circ}{\nabla}_{\lambda} \overset{\circ}{\nabla}_{\nu} \phi) -  \overset{\circ}{\nabla}_{\nu} \phi (\overset{\circ}{\nabla}_{\lambda} \overset{\circ}{\nabla}_{\mu} \phi) \right) \, .
\end{equation}
Imposing this antisymmetric part to vanish (c.f. Proposition 2.1. of \cite{MATSUZOE2006567}) is equivalent to Eq.~\eqref{mycond}.
\end{remark}

\section{Derivation of the field equations} \label{append_var}
We restrict our attention to the gravitational sector, as the variation of the matter action is straightforward due to the absence of any non-minimal couplings between geometry and matter. The full action then reads
\begin{equation}
    S[g,\pi,\phi,\lambda]=\frac{1}{2 \kappa} \int \sqrt{-g} \Big(R+ \xi \nabla_\mu \pi^\mu +\lambda^\mu \big( \pi_\mu - \overset{\circ}{\nabla}{}_{\mu} \phi \big) \Big) d^4x \, ,
\end{equation}
and it can be equivalently rewritten as
\begin{equation}
    \frac{1}{2 \kappa} \int \sqrt{-g} \Big(\overset{\circ}{R}+b_3 \overset{\circ}{\nabla}_{\mu} \pi^{\mu} + b_1 \pi_\mu \pi^\mu+\frac{1}{2} b_2 \pi_\rho \pi^\rho \pi_\lambda \pi^\lambda +\lambda^\mu \big(\pi_\mu - \overset{\circ}{\nabla}_{\mu} \phi \big) \Big)  d^4x  \, .
\end{equation}
Note that the $b_3$ is a boundary term and will not contribute to the equations of motion. Varying the remaining terms with respect to $g_{\mu \nu}$, $\pi_{\mu}$, $\lambda^{\mu}$ and $\phi$ yield
\begin{equation}\label{ActionVar}
\begin{aligned}
    \overset{\circ}{R}_{\mu \nu} - \frac{1}{2} g_{\mu \nu} \overset{\circ}{R}+  b_1 \pi_\mu \pi_\nu - \frac{1}{2} g_{\mu \nu} b_1 \pi^\rho \pi_\rho+b_2 \pi_\mu \pi_\nu \pi_\rho \pi^\rho -\frac{1}{4}b_2 g_{\mu \nu} \pi_\rho \pi^\rho \pi_\lambda \pi^\lambda -\frac{1}{2} g_{\mu \nu} \lambda^\rho \Big( \pi_\rho - \overset{\circ}{\nabla}_{\rho} \phi \Big)&=0 \, ,\\
    b_1 \pi^\mu + b_2 \pi_\rho \pi^\rho \pi^\mu +{\frac{1}{2}} \lambda^\mu&=0 \, ,\\
    \pi_{\mu}-\overset{\circ}{\nabla}_{\mu} \phi&=0\, ,\\
    \overset{\circ}{\nabla}_{\mu} \lambda^\mu&=0 \, .
\end{aligned}
\end{equation}
From the variation with respect to the Lagrange multiplier we obtain the integrable condition $\pi_\mu=\overset{\circ}{\nabla}_{\mu} \phi$. Plugging this back into the metric field equation gives
\begin{equation}
     \overset{\circ}{G}_{\mu \nu} + \overset{\circ}{\nabla}_{\mu} \phi \overset{\circ}{\nabla}_{\nu} \phi \Big( b_1 + b_2 \overset{\circ}{\nabla}_{\rho} \phi \overset{\circ}{\nabla} \tensor{}{^\rho} \phi \Big)-\frac{1}{2} g_{\mu \nu} \overset{\circ}{\nabla}_{\rho} \phi \overset{\circ}{\nabla} \tensor{}{^\rho} \phi \Big( b_1 +\frac{b_2}{2} \overset{\circ}{\nabla}_{\lambda} \phi \overset{\circ}{\nabla}{^\lambda} \phi \Big)=0 \, ,
\end{equation}
which is equivalent to \eqref{EoM_met}. To obtain the dynamical equation for the  scalar field, we take the covariant divergence of the second equation appearing in \eqref{ActionVar} to arrive at
\begin{equation}
    b_1 \overset{\circ}{\nabla}_{\mu} \pi^{\mu}+ b_{2}  \overset{\circ}{\nabla}_{\mu} (\pi_\rho \pi^\rho \pi^\mu)+\overset{\circ}{\nabla}_{\mu} \lambda^{\mu}=0 \, .
\end{equation}
Using the variation with respect to $\phi$, the last term vanishes. Finally, using the integrable condition, the scalar field equation reduces to
\begin{equation}
      b_1 \overset{\circ}{\nabla}_{\mu} \overset{\circ}{\nabla} \tensor{}{^{\mu}} \phi + b_2 \Big( \overset{\circ}{\nabla}_{\mu} \phi \overset{\circ}{\nabla}\tensor{}{^{\mu}}\phi \overset{\circ}{\nabla}_{\nu} \overset{\circ}{\nabla}\tensor{}{^{\nu}} \phi + 2  \overset{\circ}{\nabla}_{\mu} \phi\overset{\circ}{\nabla}\tensor{}{^{\nu}} \phi \overset{\circ}{\nabla}_{\nu} \overset{\circ}{\nabla}\tensor{}{^{\mu}} \phi \Big) = 0  \, ,
\end{equation}
which is in agreement with \eqref{EoM_phi}.


\section{Comment on alternative dynamical systems approaches} 
\label{sec:comment}
Given the revitalised interest in the quadratic power-law $k$-essence models \cite{Quiros:2025fnn,Hussain:2024qrd}, it is worthwhile to briefly comment on the differing conclusions appearing in the literature and offer potential explanations.
Firstly, our results are consistent with the dynamical systems analysis performed in \cite{Fang:2014qga,Chakraborty:2019swx}. Regarding the more recent work \cite{Quiros:2025fnn}, our initial analysis produces the same results, with our $k$-essence stability constraints leading to the same restrictions on the physical phase space. That work is also the first to use these perturbative stability constraints in conjunction with the dynamical systems analysis, a tactic which we readily adopt in this paper.
However, the alternative set of variables lead to a markedly different presentation. We will refer to these variables used in \cite{Quiros:2025fnn} as 
\begin{equation*}
    \widetilde{x}=\frac{b_1 X}{b_1 X+H^2} \, , \quad   \widetilde{y}= \frac{b_2 H^2}{b_1^2 + b_2 H^2} \, ,
\end{equation*}
where $b_1$ and $b_2$ are assumed to be positive.

These alternative variables $\left(\widetilde{x},\widetilde{y}\right)$ have the benefit of being manifestly compact. However, after implementing the stability constraints $c_s^2 \geq 0$ and $0 \leq \Omega_{m} \leq 1$, the physical phase space in our original variables $(x,y)$ is also compact, so this advantage is no longer relevant. The problem that plagued our formulation, and the previous works using the same variables \cite{Fang:2014qga,Chakraborty:2019swx}, is that the line $x=-2y$ and the `would-be' critical point O is divergent. As explained in detail in Section \ref{sec:fixed}, this cannot be properly captured by the dynamical systems formulation because the limit is not well-defined.
This problem is still present in the $\left(\widetilde{x},\widetilde{y}\right)$ formulation. In fact, it is somewhat worse: both lines $\widetilde{y}=1$ and $\widetilde{x}=1$ are divergent, and these lines contain three `would-be' critical points (which we will now simply call critical points). 

The first item of disagreement is on the physical properties of the critical points on this $\widetilde{y}=1$ divergent line: the unstable early-time fixed point at $\left(\widetilde{x}=0,\widetilde{y}=1\right)$ does not behave like matter (as claimed in \cite{Quiros:2025fnn}) when approached from any trajectories within the physical phase space.
Instead, it behaves like radiation $q=1$, $w_{\textrm{eff}}=1/3$ and is analogous to our early-time point A of Table \ref{tab:fixed}. This can be confirmed numerically by evolving any initial conditions within the physical phase space backwards in time. Consequently, this then aligns with other dynamical systems literature studying this model \cite{Fang:2014qga,Chakraborty:2019swx} and the initial analysis of Scherrer \cite{Scherrer:2004au}.

The second point to address is the claim that the model (after implementing stability conditions) doesn't allow for matter-dominated behaviour \cite{Quiros:2025fnn}. This is a subtle issue, as we also do not technically find matter dominated points either. Nonetheless, we know that the effective fluid behaves like matter-domination in the vicinity of the origin, see Fig.~\ref{fig:A}. Moreover, when looking only in the physically relevant phase space (Fig.~\ref{fig:prior}), the origin acts like a saddle point by attracting trajectories from point A and projecting them towards point C. Consequently, we are able to find suitable $\Lambda$CDM-like evolutions (Fig.~\ref{fig:evol}) with an extended period of matter domination followed by accelerated expansion. In the alternative formulation with variables $\left(\widetilde{x},\widetilde{y}\right)$, these solutions are equally possible to obtain. Specifically, the `matter-dominated' region is a section of the physical phase space between the early-time radiation repeller and the late-time de Sitter attractor. Numerical solutions to the system $d\widetilde{x}/dN$ and $d\widetilde{y}/dN$ indeed give completely equivalent results once our initial conditions are mapped to the variables used in \cite{Quiros:2025fnn}. This confirms that the purely kinetic $k$-essence models can give rise to suitable cosmologies, and Section \ref{sec:5} proves they are compatible with late-time observational data.

\section{Analytic solution} \label{appendix:analyticalsolution}

Here we present the analytic solution for the cosmological model governed by equations (\ref{ff1})-(\ref{Phi1}). As shown in \cite{Chimento:2003ta,Scherrer:2004au}, the scalar field equation for any purely-kinetic $k$-essence model $P(X)$ can be written in terms of the scale factor $a$ as \cite{Scherrer:2004au}
\begin{equation*}
    (P_{,X} + 2X P_{,XX})a \frac{d X}{d a} + 6 X P_{,X} = 0 \, ,
\end{equation*}
yielding the solution
\begin{equation} \label{Psol}
    X P_{,X}^2 = \frac{K_0}{a^{6}} \, ,
\end{equation}
where $K_0$ is an integration constant. Our model (\ref{Pmodel}) in the reduced variables (\ref{new_vars}) takes the form
\begin{equation}
    P(\widetilde{X})  = H_0^2 \left(- \widetilde{X} + \frac{4}{3}B \widetilde{X}^2 \right) \, 
\end{equation}
where we have defined the dimensionless kinetic term $\widetilde{X}=b_1 X/H_0^2$.
The solutions to (\ref{Psol}) are then the roots of the cubic equation
\begin{equation}
    \widetilde{X} - \frac{16}{3} B \widetilde{X}^2 + \frac{64}{9} B^2 \widetilde{X}^3 - k_0 (1+z)^6 =0 \, ,
\end{equation}
where we have used that $a = 1/(1+z)$ and defined $k_0 = K_0/H_0^2$. Given that both $B$ and $\widetilde{X}$ must be positive, we also require $k_0 \geq 0$ for real solutions. Only one branch of solution is well-defined for all $z>-1$, which can be written as 
\begin{equation}
    \widetilde{X}(z) = \frac{1}{4B} + \frac{1+c_z^2}{8 B c_z} \, , \quad  \textrm{with} \quad c_z := \Big(-1 + 6 \sqrt{2} \sqrt{B k_0(18 B k_0 (1+z)^6 -1)} (1+z)^3 + 36 B k_0 (1+z)^6\Big)^{\frac{1}{3}} \, .
\end{equation}
Note that for $k_0=0$ the kinetic term becomes a constant $\widetilde{X}=3/(8B)$ and the model reduces exactly to $\Lambda$CDM.
A closed-form analytic expression for $h(z)$ can then be obtained by plugging this solution into the Friedmann equation
\begin{equation} \label{hfull}
    h(z)^2 = \Omega_{m0} (1+z)^3 - \frac{\widetilde{X}(z)}{3} + \frac{4}{3} B \widetilde{X}(z)^2 \, .
\end{equation}
Evaluating the Friedmann equation at $z=0$ reveals that each $B$ is associated with a maximum allowed $\Omega_{m0}$, which directly corresponds to the $\Lambda$CDM limit $k_0=0$. These are the constraints derived in (\ref{model_constr1}). Note that the $z=0$ Friedmann constraint is too complicated to explicitly eliminate $k_0$ in terms of $\Omega_{m0}$, which is why we used numerical solutions in Section \ref{sec:5}. However, we have verified that after fixing $\Omega_{m0}$ and $k_0$, the numerical and analytic solutions are equivalent.

Expanding (\ref{hfull}) around future infinity $z \rightarrow -1$ reveals the following form of the solution
\begin{equation} \label{happrox}
    h(z)^2 = \frac{1}{16B} + \left(\Omega_{m0}+\sqrt{\frac{k_0}{6B}} \right)(1+z)^3 + \mathcal{O}(1+z)^6 \, .
\end{equation}
The first cosmological constant term is also exactly what we found for the fixed point C in the dynamical systems analysis, see Section \ref{sec:fixed}. The next leading-order contribution of the $k$-essence field behaves like matter $\propto (1+z)^3$, as pointed out by Scherrer \cite{Scherrer:2004au}. However, this is only true in the vicinity of the fixed point at future infinity. Nonetheless, using the Friedmann constraint for the full solution  (\ref{hfull}) at $z=0$ to eliminate the constant $k_0$ in favour of the matter density parameter $\Omega_{m0}$, we see that the evolution of $h(z)$ is primarily dominated by the $k$-essence terms at low $z$.

This is confirmed by the full analytic solutions plotted in Fig.~\ref{fig:Append}, which track the $\Lambda$CDM solution (i.e., $k_0=0$) for small redshifts almost independently of $\Omega_{m0}$. The left figure shows that the solution $h(z)$ is mainly determined by the parameter $B$ for small $z$. Even for these significantly large changes in $\Omega_{m0}$, the $k$-essence solution $h(z)$ only deviates slightly from the equivalent $\Lambda$CDM curve.
On the other hand, for large $z$ (right figure) both $\Omega_{m0}$ and $B$ are important. The early-time behaviour thus shows a radically different evolution from $\Lambda$CDM. This is also clear from the dynamical systems section, with trajectories originating at the radiation-tracking fixed point. It is therefore clear that early-time data would be especially important for constraining $\Omega_{m0}$.

\begin{figure}[htb!] 
    \centering
\includegraphics[width=0.95\linewidth]{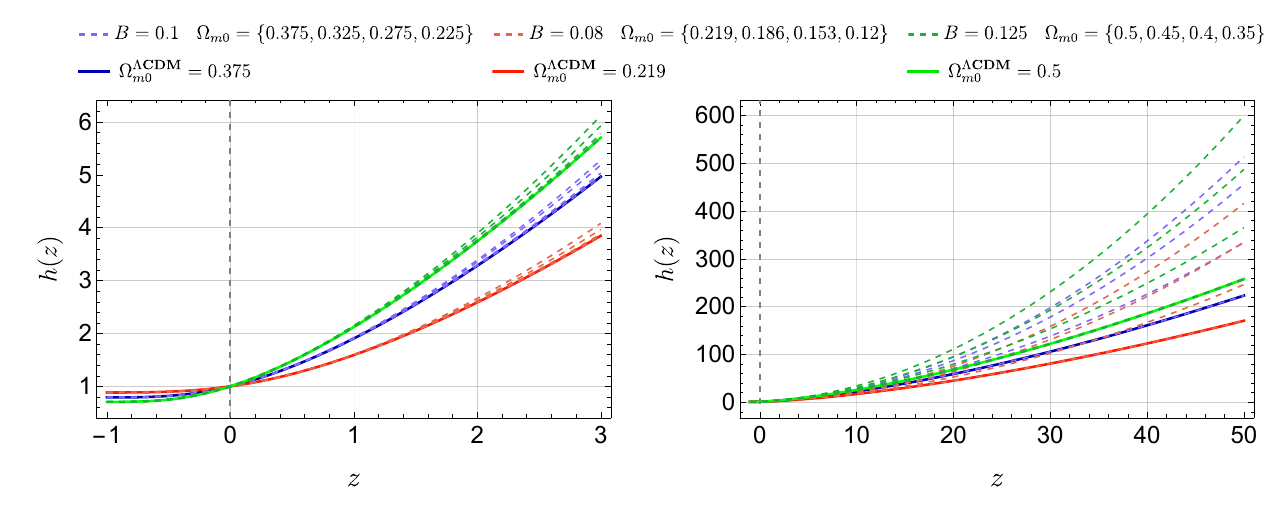}
        \caption{Plots of $h(z)$ for various model parameters of the geometric $k$-essence model (dashed) against $\Lambda$CDM (solid).} \label{fig:Append}
\end{figure}

\acknowledgments{We acknowledge useful discussions with Christian Boehmer, Tiberiu Harko, Andronikos Paliathanasis, Jackson Levi Said, Elsa Teixeira and Sunny Vagnozzi.
This paper is based upon work from COST Action CA21136 Addressing observational tensions in cosmology with systematics and fundamental physics (CosmoVerse) supported by COST (European Cooperation in Science and Technology). E.J. is supported by the Engineering and Physical Sciences Research Council [EP/W524335/1]. The work of L.Cs. was supported by a grant of the Ministry of Research, Innovation and Digitization, CNCS/CCCDI-UEFISCDI, project number PN-IV-P8-8.1-PRE-HE-ORG-2023-0118, within PNCDI IV and Collegium Talentum Hungary.
}


 \bibliographystyle{JHEP}
 \bibliography{biblio.bib}

@article{MATSUZOE2006567,
title = {Equiaffine structures on statistical manifolds and Bayesian statistics},
journal = {Differential Geometry and its Applications},
volume = {24},
number = {6},
pages = {567-578},
year = {2006},
issn = {0926-2245},
doi = {https://doi.org/10.1016/j.difgeo.2006.02.003},
url = {https://www.sciencedirect.com/science/article/pii/S0926224506000180},
author = {Hiroshi Matsuzoe and Jun-ichi Takeuchi and Shun-ichi Amari},
}

@misc{desi2024_bao_covariance,
  author       = {DESI Collaboration},
  title        = {DESI 2024 Gaussian BAO Covariance Matrices},
  year         = {2024},
  howpublished = {\url{https://github.com/CobayaSampler/bao_data/blob/master/desi_2024_gaussian_bao_ALL_GCcomb_cov.txt}},
  note         = {Covariance matrices for DESI 2024 BAO analysis},
}

@article{Csillag_2024statmfd,
doi = {10.1088/1475-7516/2024/12/034},
url = {https://dx.doi.org/10.1088/1475-7516/2024/12/034},
year = {2024},
month = {dec},
publisher = {IOP Publishing},
volume = {2024},
number = {12},
pages = {034},
author = {Csillag, Lehel and Hama, Rattanasak and Józsa, Máté and Harko, Tiberiu and Sabău, Sorin V.},
title = {Length-preserving biconnection gravity and its cosmological implications},
journal = {Journal of Cosmology and Astroparticle Physics}
}

@article{BeltranJimenez:2015pnp,
    author = "Beltran Jimenez, Jose and Koivisto, Tomi S.",
    title = "{Spacetimes with vector distortion: Inflation from generalised Weyl geometry}",
    eprint = "1509.02476",
    archivePrefix = "arXiv",
    primaryClass = "gr-qc",
    reportNumber = "NORDITA-2015-100",
    doi = "10.1016/j.physletb.2016.03.047",
    journal = "Phys. Lett. B",
    volume = "756",
    pages = "400--404",
    year = "2016"
}

@article{Amendola:2020ldb,
    author = {Amendola, Luca and Tsujikawa, Shinji},
    title = {Scaling solutions and weak gravity in dark energy with energy and momentum couplings},
    eprint = {2003.02686},
    archivePrefix = {arXiv},
    primaryClass = {gr-qc},
    doi = {10.1088/1475-7516/2020/06/020},
    journal = {JCAP},
    volume = {2020},
    number = {06},
    pages = {020},
    year = {2020}
}

@article{Iosifidis:2024ksa,
    author = {Iosifidis, Damianos and Jensko, Erik and Koivisto, Tomi S.},
    title = {Relativistic interacting fluids in cosmology},
    eprint = {2406.01412},
    archivePrefix = {arXiv},
    primaryClass = {gr-qc},
    doi = {10.1088/1475-7516/2024/11/043},
    journal = {JCAP},
    volume = {2024},
    number = {11},
    pages = {043},
    year = {2024}
}

@article{Csillag:2024eor,
    author = "Csillag, Lehel and Agashe, Anish and Iosifidis, Damianos",
    title = {{Schr\"odinger connections: from mathematical foundations towards Yano\textendash{}Schr\"odinger cosmology}},
    eprint = "2402.06167",
    archivePrefix = "arXiv",
    primaryClass = "gr-qc",
    doi = "10.1088/1361-6382/ad871c",
    journal = "Class. Quant. Grav.",
    volume = "41",
    number = "23",
    pages = "235005",
    year = "2024"
}

@article{Salim:1996ei,
    author = "Salim, J. M. and Sautu, S. L.",
    title = "{Gravitational theory in Weyl integrable space-time}",
    doi = "10.1088/0264-9381/13/3/004",
    journal = "Class. Quant. Grav.",
    volume = "13",
    pages = "353--360",
    year = "1996"
}

@article{Miritzis:2013ai,
    author = "Miritzis, John",
    title = "{Acceleration in Weyl integrable spacetime}",
    eprint = "1301.5696",
    archivePrefix = "arXiv",
    primaryClass = "gr-qc",
    doi = "10.1142/S0218271813500193",
    journal = "Int. J. Mod. Phys. D",
    volume = "22",
    pages = "1350019",
    year = "2013"
}

@article{Novello:2008ra,
    author = "Novello, M. and Bergliaffa, S. E. Perez",
    title = "{Bouncing Cosmologies}",
    eprint = "0802.1634",
    archivePrefix = "arXiv",
    primaryClass = "astro-ph",
    doi = "10.1016/j.physrep.2008.04.006",
    journal = "Phys. Rept.",
    volume = "463",
    pages = "127--213",
    year = "2008"
}

@article{Quiros:2022uns,
    author = "Quiros, Israel",
    title = "{Phenomenological signatures of gauge invariant theories of gravity with vectorial nonmetricity}",
    eprint = "2208.10048",
    archivePrefix = "arXiv",
    primaryClass = "gr-qc",
    doi = "10.1103/PhysRevD.107.104028",
    journal = "Phys. Rev. D",
    volume = "107",
    number = "10",
    pages = "104028",
    year = "2023"
}

@article{Paliathanasis:2020plf,
    author = "Paliathanasis, Andronikos and Leon, Genly and Barrow, John D.",
    title = "{Einstein-aether theory in Weyl integrable geometry}",
    eprint = "2007.06435",
    archivePrefix = "arXiv",
    primaryClass = "gr-qc",
    doi = "10.1140/epjc/s10052-020-08598-0",
    journal = "Eur. Phys. J. C",
    volume = "80",
    number = "12",
    pages = "1099",
    year = "2020"
}

@article{Paliathanasis:2021qns,
    author = "Paliathanasis, Andronikos and Leon, Genly",
    title = "{Integrability and cosmological solutions in Einstein-\ae{}ther-Weyl theory}",
    eprint = "2101.01606",
    archivePrefix = "arXiv",
    primaryClass = "gr-qc",
    doi = "10.1140/epjc/s10052-021-09031-w",
    journal = "Eur. Phys. J. C",
    volume = "81",
    number = "3",
    pages = "255",
    year = "2021"
}

@article{Pucheu:2016act,
    author = "Pucheu, M. L. and Alves Junior, F. A. P. and Barreto, A. B. and Romero, C.",
    title = "{Cosmological models in Weyl geometrical scalar-tensor theory}",
    eprint = "1602.06966",
    archivePrefix = "arXiv",
    primaryClass = "gr-qc",
    doi = "10.1103/PhysRevD.94.064010",
    journal = "Phys. Rev. D",
    volume = "94",
    number = "6",
    pages = "064010",
    year = "2016"
}

@article{Garriga:1999vw,
    author = "Garriga, Jaume and Mukhanov, Viatcheslav F.",
    title = "{Perturbations in k-inflation}",
    eprint = "hep-th/9904176",
    archivePrefix = "arXiv",
    reportNumber = "UAB-FT-466",
    doi = "10.1016/S0370-2693(99)00602-4",
    journal = "Phys. Lett. B",
    volume = "458",
    pages = "219--225",
    year = "1999"
}

@article{Scherrer:2004au,
    author = "Scherrer, Robert J.",
    title = "{Purely kinetic k-essence as unified dark matter}",
    eprint = "astro-ph/0402316",
    archivePrefix = "arXiv",
    doi = "10.1103/PhysRevLett.93.011301",
    journal = "Phys. Rev. Lett.",
    volume = "93",
    pages = "011301",
    year = "2004"
}

@article{Armendariz-Picon:1999hyi,
    author = "Armendariz-Picon, C. and Damour, T. and Mukhanov, Viatcheslav F.",
    title = "{k - inflation}",
    eprint = "hep-th/9904075",
    archivePrefix = "arXiv",
    doi = "10.1016/S0370-2693(99)00603-6",
    journal = "Phys. Lett. B",
    volume = "458",
    pages = "209--218",
    year = "1999"
}

@article{Chiba:1999ka,
    author = "Chiba, Takeshi and Okabe, Takahiro and Yamaguchi, Masahide",
    title = "{Kinetically driven quintessence}",
    eprint = "astro-ph/9912463",
    archivePrefix = "arXiv",
    reportNumber = "UTAP-352",
    doi = "10.1103/PhysRevD.62.023511",
    journal = "Phys. Rev. D",
    volume = "62",
    pages = "023511",
    year = "2000"
}

@article{Armendariz-Picon:2000ulo,
    author = "Armendariz-Picon, C. and Mukhanov, Viatcheslav F. and Steinhardt, Paul J.",
    title = "{Essentials of k essence}",
    eprint = "astro-ph/0006373",
    archivePrefix = "arXiv",
    doi = "10.1103/PhysRevD.63.103510",
    journal = "Phys. Rev. D",
    volume = "63",
    pages = "103510",
    year = "2001"
}

@article{Fang:2014qga,
    author = "Fang, Wei and Tu, Hong and Li, Ying and Huang, Jiasheng and Shu, Chenggang",
    title = "{Full Investigation on the Dynamics of Power-Law Kinetic Quintessence}",
    eprint = "1406.0128",
    archivePrefix = "arXiv",
    primaryClass = "gr-qc",
    doi = "10.1103/PhysRevD.89.123514",
    journal = "Phys. Rev. D",
    volume = "89",
    number = "12",
    pages = "123514",
    year = "2014"
}

@article{Chakraborty:2019swx,
    author = "Chakraborty, Abhijit and Ghosh, Anandamohan and Banerjee, Narayan",
    title = "{Dynamical systems analysis of a k -essence model}",
    eprint = "1904.10149",
    archivePrefix = "arXiv",
    primaryClass = "gr-qc",
    doi = "10.1103/PhysRevD.99.103513",
    journal = "Phys. Rev. D",
    volume = "99",
    number = "10",
    pages = "103513",
    year = "2019"
}

@misc{Quiros:2025fnn,
    author = {Quiros, Israel and Gonzalez, Tame and Nucamendi, Ulises and De Arcia, Roberto and Rangel, Francisco Antonio Horta},
    title = {Revisiting purely kinetic k-essence},
    eprint = {2501.14177},
    archivePrefix = {arXiv},
    primaryClass = {gr-qc},
    month = {January},
    year = {2025}
}

@article{Bahamonde:2017ize,
    author = {Bahamonde, Sebastian and B\"ohmer, Christian G. and Carloni, Sante and Copeland, Edmund J. and Fang, Wei and Tamanini, Nicola},
    title = "{Dynamical systems applied to cosmology: dark energy and modified gravity}",
    eprint = "1712.03107",
    archivePrefix = "arXiv",
    primaryClass = "gr-qc",
    doi = "10.1016/j.physrep.2018.09.001",
    journal = "Phys. Rept.",
    volume = "775-777",
    pages = "1--122",
    year = "2018"
}

@article{Amarzguioui:2005zq,
    author = "Amarzguioui, Morad and Elgaroy, O. and Mota, D. F. and Multamaki, T.",
    title = "{Cosmological constraints on f(r) gravity theories within the palatini approach}",
    eprint = "astro-ph/0510519",
    archivePrefix = "arXiv",
    doi = "10.1051/0004-6361:20064994",
    journal = "Astron. Astrophys.",
    volume = "454",
    pages = "707--714",
    year = "2006"
}

@article{Gomes:2023xzk,
    author = {Gomes, D\'ebora Aguiar and Briffa, Rebecca and Kozak, Aleksander and Levi Said, Jackson and Saal, Margus and Wojnar, Aneta},
    title = {Cosmological constraints of Palatini f(R) gravity},
    eprint = {2310.17339},
    archivePrefix = {arXiv},
    primaryClass = {gr-qc},
    doi = {10.1088/1475-7516/2024/01/011},
    journal = {JCAP},
    volume = {2024},
    number = {01},
    pages = {011},
    year = {2024}
}

@article{Armendariz-Picon:2000nqq,
    author = "Armendariz-Picon, C. and Mukhanov, Viatcheslav F. and Steinhardt, Paul J.",
    title = "{A Dynamical solution to the problem of a small cosmological constant and late time cosmic acceleration}",
    eprint = "astro-ph/0004134",
    archivePrefix = "arXiv",
    doi = "10.1103/PhysRevLett.85.4438",
    journal = "Phys. Rev. Lett.",
    volume = "85",
    pages = "4438--4441",
    year = "2000"
}

@article{Copeland:1997et,
    author = "Copeland, Edmund J. and Liddle, Andrew R and Wands, David",
    title = "{Exponential potentials and cosmological scaling solutions}",
    eprint = "gr-qc/9711068",
    archivePrefix = "arXiv",
    reportNumber = "SUSX-TH-97-022, SUSSEX-AST-97-11-1, PU-RCG-97-20",
    doi = "10.1103/PhysRevD.57.4686",
    journal = "Phys. Rev. D",
    volume = "57",
    pages = "4686--4690",
    year = "1998"
}

@article{Corasaniti:2002vg,
    author = "Corasaniti, Pier Stefano and Copeland, E. J.",
    title = "{A Model independent approach to the dark energy equation of state}",
    eprint = "astro-ph/0205544",
    archivePrefix = "arXiv",
    doi = "10.1103/PhysRevD.67.063521",
    journal = "Phys. Rev. D",
    volume = "67",
    pages = "063521",
    year = "2003"
}

@article{Linder:2002et,
    author = "Linder, Eric V.",
    title = "{Exploring the expansion history of the universe}",
    eprint = "astro-ph/0208512",
    archivePrefix = "arXiv",
    doi = "10.1103/PhysRevLett.90.091301",
    journal = "Phys. Rev. Lett.",
    volume = "90",
    pages = "091301",
    year = "2003"
}

@article{Li:2006bx,
    author = "Li, Hui and Guo, Zong-Kuan and Zhang, Yuan-Zhong",
    title = "{Parametrization of k-essence and its kinetic term}",
    eprint = "astro-ph/0601007",
    archivePrefix = "arXiv",
    doi = "10.1142/S0217732306019475",
    journal = "Mod. Phys. Lett. A",
    volume = "21",
    pages = "1683--1690",
    year = "2006"
}

@article{Cardenas:2018ltl,
    author = "C\'ardenas, V\'\i{}ctor H. and Cruz, Norman and Mu\~noz, Sebastian and Villanueva, J. R.",
    title = "{Reconstruction of a kinetic k-essence Lagrangian from a modified of dark energy equation of state}",
    eprint = "1804.02762",
    archivePrefix = "arXiv",
    primaryClass = "gr-qc",
    doi = "10.1140/epjc/s10052-018-6066-8",
    journal = "Eur. Phys. J. C",
    volume = "78",
    number = "7",
    pages = "591",
    year = "2018"
}

@article{Sotiriou:2006hs,
    author = "Sotiriou, Thomas P.",
    title = "{f(R) gravity and scalar-tensor theory}",
    eprint = "gr-qc/0604028",
    archivePrefix = "arXiv",
    doi = "10.1088/0264-9381/23/17/003",
    journal = "Class. Quant. Grav.",
    volume = "23",
    pages = "5117--5128",
    year = "2006"
}

@article{Sotiriou:2006qn,
    author = "Sotiriou, Thomas P. and Liberati, Stefano",
    title = "{Metric-affine f(R) theories of gravity}",
    eprint = "gr-qc/0604006",
    archivePrefix = "arXiv",
    doi = "10.1016/j.aop.2006.06.002",
    journal = "Annals Phys.",
    volume = "322",
    pages = "935--966",
    year = "2007"
}

@article{DeFelice:2010aj,
    author = "De Felice, Antonio and Tsujikawa, Shinji",
    title = "{f(R) theories}",
    eprint = "1002.4928",
    archivePrefix = "arXiv",
    primaryClass = "gr-qc",
    doi = "10.12942/lrr-2010-3",
    journal = "Living Rev. Rel.",
    volume = "13",
    pages = "3",
    year = "2010"
}

@book{perko2013differential,
  title={Differential equations and dynamical systems},
  author={Perko, Lawrence},
  volume={7},
  year={2013},
  publisher={Springer Science \& Business Media}
}

@article{Chimento:2003ta,
    author = "Chimento, Luis P.",
    title = "{Extended tachyon field, Chaplygin gas and solvable k-essence cosmologies}",
    eprint = "astro-ph/0311613",
    archivePrefix = "arXiv",
    doi = "10.1103/PhysRevD.69.123517",
    journal = "Phys. Rev. D",
    volume = "69",
    pages = "123517",
    year = "2004"
}

@article{ForemanMackey2013,
  author       = {Foreman-Mackey, Daniel and Hogg, David W. and Lang, Dustin and Goodman, Jonathan},
  title        = {emcee: The MCMC Hammer},
  journal      = {Publications of the Astronomical Society of the Pacific},
  volume       = {125},
  number       = {925},
  pages        = {306--312},
  year         = {2013},
  doi          = {10.1086/670067},
  eprint       = {1202.3665},
  archivePrefix= {arXiv},
  primaryClass = {astro-ph.IM}
}

@book{Gregory2005,
  author    = {Gregory, Philip C.},
  title     = {Bayesian Logical Data Analysis for the Physical Sciences: A Comparative Approach with Mathematica® Support},
  publisher = {Cambridge University Press},
  year      = {2005},
  isbn      = {9780521150125}
}

@article{Lewis2002,
  author = {Lewis, Antony and Bridle, Sarah},
  title = {Cosmological parameters from CMB and other data: a Monte Carlo approach},
  journal = {Physical Review D},
  volume = {66},
  number = {10},
  pages = {103511},
  year = {2002},
  doi = {10.1103/PhysRevD.66.103511},
  eprint = {astro-ph/0205436}
}

@book{Gelman2013,
  author    = {Gelman, Andrew and Carlin, John B. and Stern, Hal S. and Dunson, David B. and Vehtari, Aki and Rubin, Donald B.},
  title     = {Bayesian Data Analysis},
  edition   = {3rd},
  publisher = {Chapman and Hall/CRC},
  year      = {2013},
  isbn      = {9781439840955}
}

@article{Jimenez_2002,
doi = {10.1086/340549},
url = {https://dx.doi.org/10.1086/340549},
year = {2002},
month = {jul},
publisher = {},
volume = {573},
number = {1},
pages = {37},
author = {Jimenez, Raul and Loeb, Abraham},
title = {Constraining Cosmological Parameters Based on Relative Galaxy Ages},
journal = {The Astrophysical Journal}
}

@misc{moresco2020HzTable,
  author       = {Moresco, Michele},
  title        = {CCcovariance: Data - HzTable\_MM\_BC03.dat},
  year         = {2020},
  howpublished = {\url{https://gitlab.com/mmoresco/CCcovariance/-/blob/master/data/HzTable_MM_BC03.dat}}
}

@article{Vagnozzi_2021,
  doi = {10.3847/1538-4357/abd4df},
  url = {https://dx.doi.org/10.3847/1538-4357/abd4df},
  year = {2021},
  month = {feb},
  publisher = {The American Astronomical Society},
  volume = {908},
  number = {1},
  pages = {84},
  author = {Vagnozzi, Sunny and Loeb, Abraham and Moresco, Michele},
  title = {Eppur \`{e} piatto? The Cosmic Chronometers Take on Spatial Curvature and Cosmic Concordance},
  journal = {The Astrophysical Journal}
}

@article{Moresco_2012,
    author = {Moresco, M. and the Cosmic Chronometers Collaboration},
    title = {Improved constraints on the expansion rate of the Universe up to z $\sim$ 1.1 from the spectroscopic evolution of cosmic chronometers},
    eprint = {1201.3609},
    archivePrefix = {arXiv},
    primaryClass = {astro-ph.CO},
    doi = {10.1088/1475-7516/2012/08/006},
    journal = {JCAP},
    volume = {2012},
    number = {08},
    pages = {006},
    year = {2012}
}

@article{10.1093/mnrasl/slv037,
    author = {Moresco, Michele},
    title = {Raising the bar: new constraints on the Hubble parameter with cosmic chronometers at $z \sim 2$},
    journal = {Monthly Notices of the Royal Astronomical Society: Letters},
    volume = {450},
    number = {1},
    pages = {L16-L20},
    year = {2015},
    month = {04},
    issn = {1745-3925},
    doi = {10.1093/mnrasl/slv037},
    url = {https://doi.org/10.1093/mnrasl/slv037},
    eprint = {https://academic.oup.com/mnrasl/article-pdf/450/1/L16/54653923/mnrasl\_450\_1\_l16.pdf},
}

@article{Moresco_2016,
  doi = {10.1088/1475-7516/2016/05/014},
  url = {https://dx.doi.org/10.1088/1475-7516/2016/05/014},
  year = {2016},
  month = {may},
  publisher = {},
  volume = {2016},
  number = {05},
  pages = {014},
  author = {Moresco, Michele and Pozzetti, Lucia and Cimatti, Andrea and Jimenez, Raul and Maraston, Claudia and Verde, Licia and Thomas, Daniel and Citro, Annalisa and Tojeiro, Rita and Wilkinson, David},
  title = {A 6\% measurement of the Hubble parameter at z $\sim$ 0.45: direct evidence of the epoch of cosmic re-acceleration},
  journal = {Journal of Cosmology and Astroparticle Physics}
}

@article{Brout_2022,
    author = "Brout, Dillon and others",
    title = "{The Pantheon+ Analysis: Cosmological Constraints}",
    eprint = "2202.04077",
    archivePrefix = "arXiv",
    primaryClass = "astro-ph.CO",
    doi = "10.3847/1538-4357/ac8e04",
    journal = "Astrophys. J.",
    volume = "938",
    number = "2",
    pages = "110",
    year = "2022"
}

@article{Conley2011,
  author       = {Conley, A. and Carlberg, R. G. and Guy, J. and Howell, D. A. and Sullivan, M. and others},
  title        = {Supernova Constraints and Systematic Uncertainties from the First Three Years of the Supernova Legacy Survey},
  journal      = {The Astrophysical Journal Supplement Series},
  volume       = {192},
  number       = {1},
  pages        = {1},
  year         = {2011},
  doi          = {10.1088/0067-0049/192/1/1},
  url          = {https://iopscience.iop.org/article/10.1088/0067-0049/192/1/1}
}

@misc{PantheonPlusGitHub,
  author       = {{Pantheon+ Collaboration}},
  title        = {{Pantheon+ Data Release: Distances and Covariance Matrices}},
  year         = {2022},
  howpublished = {\url{https://github.com/PantheonPlusSH0ES/DataRelease/tree/main/Pantheon\%2B_Data/4_DISTANCES_AND_COVAR}}
}

@article{Adame2025,
    author = {Adame, A. G. and the DESI Collaboration},
    title = {DESI 2024 VI: Cosmological Constraints from the Measurements of Baryon Acoustic Oscillations},
    eprint = {2404.03002},
    archivePrefix = {arXiv},
    primaryClass = {astro-ph.CO},
    reportNumber = {FERMILAB-PUB-24-0154-PPD},
    doi = {10.1088/1475-7516/2025/02/021},
    journal = {JCAP},
    volume = {2025},
    number = {02},
    pages = {021},
    year = {2025}
}

@misc{Moresco2021CCcovariance,
  author       = {Moresco, Michele},
  title        = {Cosmic Chronometers Covariance Estimate},
  year         = {2021},
  howpublished = {\url{https://gitlab.com/mmoresco/CCcovariance}}
}

@Article{universe9090393,
AUTHOR = {Vagnozzi, Sunny},
TITLE = {Seven Hints That Early-Time New Physics Alone Is Not Sufficient to Solve the Hubble Tension},
JOURNAL = {Universe},
VOLUME = {9},
YEAR = {2023},
NUMBER = {9},
ARTICLE-NUMBER = {393},
URL = {https://www.mdpi.com/2218-1997/9/9/393},
ISSN = {2218-1997},
DOI = {10.3390/universe9090393}
}

@article{Lin_2021,
doi = {10.3847/1538-4357/ac12cf},
url = {https://dx.doi.org/10.3847/1538-4357/ac12cf},
year = {2021},
month = {oct},
publisher = {The American Astronomical Society},
volume = {920},
number = {2},
pages = {159},
author = {Lin, Weikang and Chen, Xingang and Mack, Katherine J.},
title = {Early Universe Physics Insensitive and Uncalibrated Cosmic Standards: Constraints on $\Omega_{m}$ and Implications for the Hubble Tension},
journal = {The Astrophysical Journal}
}

@article{Jedamzik2021,
  author       = {Jedamzik, Karsten and Pogosian, Levon and Zhao, Gong-Bo},
  title        = {Why reducing the cosmic sound horizon alone can not fully resolve the Hubble tension},
  journal      = {Communications Physics},
  volume       = {4},
  number       = {1},
  pages        = {123},
  year         = {2021},
  doi          = {10.1038/s42005-021-00628-x},
  url          = {https://doi.org/10.1038/s42005-021-00628-x},
  issn         = {2399-3650}
}

@article{Tauscher_2018,
doi = {10.1088/1475-7516/2018/12/015},
url = {https://dx.doi.org/10.1088/1475-7516/2018/12/015},
year = {2018},
month = {dec},
publisher = {},
volume = {2018},
number = {12},
pages = {015},
author = {Tauscher, K. and Rapetti, D. and Burns, J.O.},
title = {A new goodness-of-fit statistic and its application to 21-cm cosmology},
journal = {Journal of Cosmology and Astroparticle Physics}
}

@article{10.1111/j.1745-3933.2007.00306.x,
    author = {Liddle, Andrew R.},
    title = {Information criteria for astrophysical model selection},
    journal = {Monthly Notices of the Royal Astronomical Society: Letters},
    volume = {377},
    number = {1},
    pages = {L74-L78},
    year = {2007},
    month = {05},
    issn = {1745-3925},
    doi = {10.1111/j.1745-3933.2007.00306.x},
    url = {https://doi.org/10.1111/j.1745-3933.2007.00306.x},
}

@article{10.1111/j.1365-2966.2011.19969.x,
    author = {Tan, M. Y. J. and Biswas, Rahul},
    title = {The reliability of the Akaike information criterion method in cosmological model selection},
    journal = {Monthly Notices of the Royal Astronomical Society},
    volume = {419},
    number = {4},
    pages = {3292-3303},
    year = {2012},
    month = {01},
    issn = {0035-8711},
    doi = {10.1111/j.1365-2966.2011.19969.x},
    url = {https://doi.org/10.1111/j.1365-2966.2011.19969.x},
}

@book{Burnham2002,
  author    = {Burnham, Kenneth P. and Anderson, David R.},
  title     = {Model Selection and Multimodel Inference: A Practical Information-Theoretic Approach},
  edition   = {2},
  year      = {2002},
  publisher = {Springer},
  address   = {New York},
  isbn      = {978-0-387-95364-9},
  doi       = {10.1007/b97636},
  url       = {https://link.springer.com/book/10.1007/b97636}
}

@article{AYDIN2025128795,
title = {Constant sectional curvature surfaces with a semi-symmetric non-metric connection},
journal = {Journal of Mathematical Analysis and Applications},
volume = {542},
number = {2},
pages = {128795},
year = {2025},
issn = {0022-247X},
doi = {https://doi.org/10.1016/j.jmaa.2024.128795},
url = {https://www.sciencedirect.com/science/article/pii/S0022247X24007170},
author = {Muhittin Evren Aydin and Rafael López and Adela Mihai},
keywords = {Rotational surface, Sectional curvature, Semi-symmetric connection, Non-metric connection}
}

@book{AmariNagaoka2000,
  author    = {Shun-ichi Amari and Hiroshi Nagaoka},
  title     = {Methods of Information Geometry},
  series    = {Translations of Mathematical Monographs},
  volume    = {191},
  year      = {2000},
  publisher = {American Mathematical Society},
  address   = {Providence, RI},
  note      = {Translated by Daishi Harada}
}

@incollection{Amari1997,
  author    = {Shun-ichi Amari},
  title     = {Information Geometry},
  booktitle = {Contemporary Mathematics},
  volume    = {203},
  pages     = {81--96},
  year      = {1997},
  publisher = {American Mathematical Society}
}

@incollection{Lauritzen1987,
  author    = {Stefan L. Lauritzen},
  title     = {Statistical Manifolds},
  booktitle = {Differential Geometry in Statistical Inference},
  series    = {IMS Lecture Notes -- Monograph Series},
  volume    = {10},
  pages     = {163--216},
  year      = {1987},
  publisher = {Institute of Mathematical Statistics},
  address   = {Hayward, CA}
}

@incollection{Matsuzoe2010,
  author    = {Hiroshi Matsuzoe and others},
  title     = {Statistical Manifolds and Affine Differential Geometry},
  booktitle = {Advances in Mathematics: Studies in Pure Mathematics},
  volume    = {57},
  pages     = {303--321},
  year      = {2010},
  publisher = {Mathematical Society of Japan}
}

@article{PhysRevD.108.044026,
  title = {Biconnection gravity as a statistical manifold},
  author = {Iosifidis, Damianos and Pallikaris, Konstantinos},
  journal = {Phys. Rev. D},
  volume = {108},
  issue = {4},
  pages = {044026},
  numpages = {13},
  year = {2023},
  month = {Aug},
  publisher = {American Physical Society},
  doi = {10.1103/PhysRevD.108.044026},
  url = {https://link.aps.org/doi/10.1103/PhysRevD.108.044026}
}

@article{Rezaei2021,
  author    = {Rezaei, Mehdi and Malekjani, Mohammad},
  title     = {Comparison between different methods of model selection in cosmology},
  journal   = {The European Physical Journal Plus},
  year      = {2021},
  volume    = {136},
  number    = {2},
  pages     = {219},
  doi       = {10.1140/epjp/s13360-021-01200-w},
  url       = {https://doi.org/10.1140/epjp/s13360-021-01200-w},
  issn      = {2190-5444}
}

@article{Arevalo2017,
  author    = {Arevalo, Fabiola and Cid, Antonella and Moya, Jorge},
  title     = {AIC and BIC for cosmological interacting scenarios},
  journal   = {The European Physical Journal C},
  year      = {2017},
  volume    = {77},
  number    = {8},
  pages     = {565},
  doi       = {10.1140/epjc/s10052-017-5128-7},
  url       = {https://doi.org/10.1140/epjc/s10052-017-5128-7},
  issn      = {1434-6052}
}

@article{Vrieze2012,
  author    = {Vrieze, Scott I.},
  title     = {Model selection and psychological theory: A discussion of the differences between the Akaike Information Criterion (AIC) and the Bayesian Information Criterion (BIC)},
  journal   = {Psychological Methods},
  year      = {2012},
  volume    = {17},
  number    = {2},
  pages     = {228--243},
  doi       = {10.1037/a0027127}
}

@article{Hussain:2024qrd,
    author = {Hussain, Saddam and Nelleri, Sarath and Bhattacharya, Kaushik},
    title = {Comprehensive study of k-essence model: dynamical system analysis and observational constraints from latest Type Ia supernova and BAO observations},
    eprint = {2406.07179},
    archivePrefix = {arXiv},
    primaryClass = {astro-ph.CO},
    doi = {10.1088/1475-7516/2025/03/025},
    journal = {JCAP},
    volume = {2025},
    number = {03},
    pages = {025},
    year = {2025}
}

@article{BeltranJimenez:2016wxw,
    author = {Beltran Jimenez, Jose and Heisenberg, Lavinia and Koivisto, Tomi S.},
    title = {Cosmology for quadratic gravity in generalized Weyl geometry},
    eprint = {1602.07287},
    archivePrefix = {arXiv},
    primaryClass = {hep-th},
    reportNumber = {NORDITA-2016-10},
    doi = {10.1088/1475-7516/2016/04/046},
    journal = {JCAP},
    volume = {2016},
    number = {04},
    pages = {046},
    year = {2016}
}

@article{aringazin1991matter,
  title={Matter fields in spacetime with vector nonmetricity},
  author={Aringazin, AK and Mikhailov, AL},
  journal={Classical and Quantum Gravity},
  volume={8},
  number={9},
  pages={1685},
  year={1991},
  publisher={IOP Publishing}
}

@article{Hehl:1994ue,
    author = "Hehl, Friedrich W. and McCrea, J. Dermott and Mielke, Eckehard W. and Ne'eman, Yuval",
    title = "{Metric affine gauge theory of gravity: Field equations, Noether identities, world spinors, and breaking of dilation invariance}",
    eprint = "gr-qc/9402012",
    archivePrefix = "arXiv",
    reportNumber = "TAUP-N192-94, TAUP-192-94",
    doi = "10.1016/0370-1573(94)00111-F",
    journal = "Phys. Rept.",
    volume = "258",
    pages = "1--171",
    year = "1995"
}

@article{Goenner:2004se,
    author = "Goenner, H. F. M.",
    title = "{On the history of unified field theories}",
    doi = "10.12942/lrr-2004-2",
    journal = "Living Rev. Rel.",
    volume = "7",
    pages = "2",
    year = "2004"
}

@article{Weyl:1918ib,
    author = "Weyl, H.",
    title = "{Gravitation and electricity}",
    journal = "Sitzungsber. Preuss. Akad. Wiss. Berlin (Math. Phys. )",
    volume = "1918",
    pages = "465",
    year = "1918"
}

@article{Klemm:2020mfp,
    author = "Klemm, Silke and Ravera, Lucrezia",
    title = "{An action principle for the Einstein\textendash{}Weyl equations}",
    eprint = "2006.15890",
    archivePrefix = "arXiv",
    primaryClass = "hep-th",
    reportNumber = "IFUM-1082-FT",
    doi = "10.1016/j.geomphys.2020.103958",
    journal = "J. Geom. Phys.",
    volume = "158",
    pages = "103958",
    year = "2020"
}

@article{Iosifidis:2021bad,
    author = "Iosifidis, Damianos",
    title = "{The full quadratic metric-affine gravity (including parity odd terms): exact solutions for the affine-connection}",
    eprint = "2112.09154",
    archivePrefix = "arXiv",
    primaryClass = "gr-qc",
    doi = "10.1088/1361-6382/ac6058",
    journal = "Class. Quant. Grav.",
    volume = "39",
    number = "9",
    pages = "095002",
    year = "2022"
}

@book{calderbank1997einstein,
  title={Einstein-Weyl geometry},
  author={Calderbank, David MJ and Pederson, Henrik},
  year={1997},
  publisher={Odense Universitet. Institut for Matematik og Datalogi}
}

@article{Iosifidis:2018jwu,
    author = "Iosifidis, Damianos",
    title = "{Exactly Solvable Connections in Metric-Affine Gravity}",
    eprint = "1812.04031",
    archivePrefix = "arXiv",
    primaryClass = "gr-qc",
    doi = "10.1088/1361-6382/ab0be2",
    journal = "Class. Quant. Grav.",
    volume = "36",
    number = "8",
    pages = "085001",
    year = "2019"
}

@article{Aoki:2023sum,
    author = "Aoki, Katsuki and Bahamonde, Sebastian and Gigante Valcarcel, Jorge and Gorji, Mohammad Ali",
    title = "{Cosmological perturbation theory in metric-affine gravity}",
    eprint = "2310.16007",
    archivePrefix = "arXiv",
    primaryClass = "gr-qc",
    reportNumber = "YITP-23-135",
    doi = "10.1103/PhysRevD.110.024017",
    journal = "Phys. Rev. D",
    volume = "110",
    number = "2",
    pages = "024017",
    year = "2024"
}

@article{Bahamonde:2024zkb,
    author = "Bahamonde, Sebastian and Hell, Anamaria and Blixt, Daniel and Dialektopoulos, Konstantinos F.",
    title = "{Revisiting stability in new general relativity}",
    eprint = "2404.02972",
    archivePrefix = "arXiv",
    primaryClass = "gr-qc",
    doi = "10.1103/PhysRevD.111.064080",
    journal = "Phys. Rev. D",
    volume = "111",
    number = "6",
    pages = "064080",
    year = "2025"
}

@article{Bahamonde:2024efl,
    author = "Bahamonde, Sebastian and Gigante Valcarcel, Jorge",
    title = "{Stability in cubic metric-affine gravity}",
    eprint = "2411.12954",
    archivePrefix = "arXiv",
    primaryClass = "gr-qc",
    doi = "10.1103/PhysRevD.111.084058",
    journal = "Phys. Rev. D",
    volume = "111",
    number = "8",
    pages = "084058",
    year = "2025"
}

@article{Hehl:1976kv,
    author = "Hehl, F. W. and Kerlick, G. D. and Von Der Heyde, P.",
    title = "{On Hypermomentum in General Relativity. 3. Coupling Hypermomentum to Geometry}",
    journal = "Z. Naturforsch. A",
    volume = "31",
    pages = "823--827",
    year = "1976"
}

@article{Iosifidis:2021nra,
    author = {Iosifidis, Damianos},
    title = {The Perfect Hyperfluid of Metric-Affine Gravity: The Foundation},
    eprint = {2101.07289},
    archivePrefix = {arXiv},
    primaryClass = {gr-qc},
    doi = {10.1088/1475-7516/2021/04/072},
    journal = {JCAP},
    volume = {2021},
    number = {04},
    pages = {072},
    year = {2021}
}

@article{Iosifidis:2023kyf,
    author = {Iosifidis, Damianos and Koivisto, Tomi S.},
    title = {Hyperhydrodynamics: relativistic viscous fluids from hypermomentum},
    eprint = {2312.06780},
    archivePrefix = {arXiv},
    primaryClass = {gr-qc},
    doi = {10.1088/1475-7516/2024/05/001},
    journal = {JCAP},
    volume = {2024},
    number = {05},
    pages = {001},
    year = {2024}
}

@article{Shapiro:2001rz,
    author = "Shapiro, I. L.",
    title = "{Physical aspects of the space-time torsion}",
    eprint = "hep-th/0103093",
    archivePrefix = "arXiv",
    reportNumber = "DF-UFJF-01-04",
    doi = "10.1016/S0370-1573(01)00030-8",
    journal = "Phys. Rept.",
    volume = "357",
    pages = "113",
    year = "2002"
}

@article{Velten:2021xxw,
    author = "Velten, Hermano and Caram\^es, Thiago R. P.",
    title = "{To conserve, or not to conserve: A review of nonconservative theories of gravity}",
    eprint = "2102.03457",
    archivePrefix = "arXiv",
    primaryClass = "gr-qc",
    doi = "10.3390/universe7020038",
    journal = "Universe",
    volume = "7",
    number = "2",
    pages = "38",
    year = "2021"
}

@article{Bahamonde:2021akc,
    author = "Bahamonde, Sebastian and Gigante Valcarcel, Jorge",
    title = "{Observational constraints in metric-affine gravity}",
    eprint = "2103.12036",
    archivePrefix = "arXiv",
    primaryClass = "gr-qc",
    doi = "10.1140/epjc/s10052-021-09275-6",
    journal = "Eur. Phys. J. C",
    volume = "81",
    number = "6",
    pages = "495",
    year = "2021"
}

@article{Hehl:1976kt,
    author = "Hehl, F. W. and Kerlick, G. D. and Von Der Heyde, P.",
    title = "{On Hypermomentum in General Relativity. 2. The Geometry of Space-Time}",
    journal = "Z. Naturforsch. A",
    volume = "31",
    pages = "524--527",
    year = "1976"
}

@article{Neeman:1996zcr,
    author = "Ne'eman, Yuval and Hehl, Friedrich W.",
    editor = "Ruffini, R. and Verbin, Y.",
    title = "{Test matter in a space-time with nonmetricity}",
    eprint = "gr-qc/9604047",
    archivePrefix = "arXiv",
    reportNumber = "TAUP-N-253-96",
    doi = "10.1088/0264-9381/14/1A/020",
    journal = "Class. Quant. Grav.",
    volume = "14",
    pages = "A251--A260",
    year = "1997"
}

@book{schouten2013ricci,
  title={Ricci-calculus: an introduction to tensor analysis and its geometrical applications},
  author={Schouten, Jan Arnoldus},
  volume={10},
  year={2013},
  publisher={Springer Science \& Business Media}
}

@article{SupernovaSearchTeam:1998fmf,
    author = "Riess, Adam G. and others",
    collaboration = "Supernova Search Team",
    title = "{Observational evidence from supernovae for an accelerating universe and a cosmological constant}",
    eprint = "astro-ph/9805201",
    archivePrefix = "arXiv",
    doi = "10.1086/300499",
    journal = "Astron. J.",
    volume = "116",
    pages = "1009--1038",
    year = "1998"
}

@article{Peebles:2002gy,
    author = "Peebles, P. J. E. and Ratra, Bharat",
    editor = "Hsu, Jong-Ping and Fine, D.",
    title = "{The Cosmological Constant and Dark Energy}",
    eprint = "astro-ph/0207347",
    archivePrefix = "arXiv",
    reportNumber = "KSUPT-02-3",
    doi = "10.1103/RevModPhys.75.559",
    journal = "Rev. Mod. Phys.",
    volume = "75",
    pages = "559--606",
    year = "2003"
}

@article{Copeland:2006wr,
    author = "Copeland, Edmund J. and Sami, M. and Tsujikawa, Shinji",
    title = "{Dynamics of dark energy}",
    eprint = "hep-th/0603057",
    archivePrefix = "arXiv",
    doi = "10.1142/S021827180600942X",
    journal = "Int. J. Mod. Phys. D",
    volume = "15",
    pages = "1753--1936",
    year = "2006"
}

@article{Weinberg:1988cp,
    author = "Weinberg, Steven",
    editor = "Hsu, Jong-Ping and Fine, D.",
    title = "{The Cosmological Constant Problem}",
    reportNumber = "UTTG-12-88",
    doi = "10.1103/RevModPhys.61.1",
    journal = "Rev. Mod. Phys.",
    volume = "61",
    pages = "1--23",
    year = "1989"
}

@article{Ratra:1987rm,
    author = "Ratra, Bharat and Peebles, P. J. E.",
    title = "{Cosmological Consequences of a Rolling Homogeneous Scalar Field}",
    reportNumber = "PUPT-1072",
    doi = "10.1103/PhysRevD.37.3406",
    journal = "Phys. Rev. D",
    volume = "37",
    pages = "3406",
    year = "1988"
}

@article{Abdalla:2022yfr,
    author = "Abdalla, Elcio and others",
    title = "{Cosmology intertwined: A review of the particle physics, astrophysics, and cosmology associated with the cosmological tensions and anomalies}",
    eprint = "2203.06142",
    archivePrefix = "arXiv",
    primaryClass = "astro-ph.CO",
    reportNumber = "FERMILAB-CONF-22-192-SCD",
    doi = "10.1016/j.jheap.2022.04.002",
    journal = "JHEAp",
    volume = "34",
    pages = "49--211",
    year = "2022"
}

@misc{DiValentino:2025sru,
    author = {Di Valentino, Eleonora and others},
    title = {The CosmoVerse White Paper: Addressing observational tensions in cosmology with systematics and fundamental physics},
    eprint = {2504.01669},
    archivePrefix = {arXiv},
    primaryClass = {astro-ph.CO},
    month = {April},
    year = {2025}
}

@article{Joyce:2014kja,
    author = "Joyce, Austin and Jain, Bhuvnesh and Khoury, Justin and Trodden, Mark",
    title = "{Beyond the Cosmological Standard Model}",
    eprint = "1407.0059",
    archivePrefix = "arXiv",
    primaryClass = "astro-ph.CO",
    doi = "10.1016/j.physrep.2014.12.002",
    journal = "Phys. Rept.",
    volume = "568",
    pages = "1--98",
    year = "2015"
}

@article{Clifton:2011jh,
    author = "Clifton, Timothy and Ferreira, Pedro G. and Padilla, Antonio and Skordis, Constantinos",
    title = "{Modified Gravity and Cosmology}",
    eprint = "1106.2476",
    archivePrefix = "arXiv",
    primaryClass = "astro-ph.CO",
    doi = "10.1016/j.physrep.2012.01.001",
    journal = "Phys. Rept.",
    volume = "513",
    pages = "1--189",
    year = "2012"
}

@book{Blagojevic:2013xpa,
    editor = "Blagojevi\'c, Milutin and Hehl, Friedrich W.",
    title = "{Gauge Theories of Gravitation}: {A Reader with Commentaries}",
    doi = "10.1142/p781",
    isbn = "978-1-84816-726-1",
    publisher = "World Scientific",
    address = "Singapore",
    year = "2013"
}

@article{Verde:2019ivm,
    author = "Verde, L. and Treu, T. and Riess, A. G.",
    title = "{Tensions between the Early and the Late Universe}",
    eprint = "1907.10625",
    archivePrefix = "arXiv",
    primaryClass = "astro-ph.CO",
    doi = "10.1038/s41550-019-0902-0",
    journal = "Nature Astron.",
    volume = "3",
    pages = "891",
    year = "2019"
}

@article{DiValentino:2020zio,
    author = "Di Valentino, Eleonora and others",
    title = "{Snowmass2021 - Letter of interest cosmology intertwined II: The hubble constant tension}",
    eprint = "2008.11284",
    archivePrefix = "arXiv",
    primaryClass = "astro-ph.CO",
    reportNumber = "FERMILAB-PUB-21-590-PPD",
    doi = "10.1016/j.astropartphys.2021.102605",
    journal = "Astropart. Phys.",
    volume = "131",
    pages = "102605",
    year = "2021"
}

@article{Wetterich:1994bg,
    author = "Wetterich, Christof",
    title = "{The Cosmon model for an asymptotically vanishing time dependent cosmological 'constant'}",
    eprint = "hep-th/9408025",
    archivePrefix = "arXiv",
    reportNumber = "HD-THEP-94-16",
    journal = "Astron. Astrophys.",
    volume = "301",
    pages = "321--328",
    year = "1995"
}

@article{Caldwell:1997ii,
    author = "Caldwell, R. R. and Dave, Rahul and Steinhardt, Paul J.",
    title = "{Cosmological imprint of an energy component with general equation of state}",
    eprint = "astro-ph/9708069",
    archivePrefix = "arXiv",
    doi = "10.1103/PhysRevLett.80.1582",
    journal = "Phys. Rev. Lett.",
    volume = "80",
    pages = "1582--1585",
    year = "1998"
}

@article{SDSS:2005xqv,
    author = "Eisenstein, Daniel J. and others",
    collaboration = "SDSS",
    title = "{Detection of the Baryon Acoustic Peak in the Large-Scale Correlation Function of SDSS Luminous Red Galaxies}",
    eprint = "astro-ph/0501171",
    archivePrefix = "arXiv",
    reportNumber = "FERMILAB-PUB-05-057-A-CD",
    doi = "10.1086/466512",
    journal = "Astrophys. J.",
    volume = "633",
    pages = "560--574",
    year = "2005"
}

@phdthesis{Jensko:2023lmn,
    author = "Jensko, Erik",
    title = "{A unified approach to geometric modifications of gravity}",
    eprint = "2401.12567",
    archivePrefix = "arXiv",
    primaryClass = "gr-qc",
    school = "U. Coll. London",
    year = "2023"
}

@article{Guendelman:2015jii,
    author = "Guendelman, Eduardo and Nissimov, Emil and Pacheva, Svetlana",
    title = "{Unified Dark Energy and Dust Dark Matter Dual to Quadratic Purely Kinetic K-Essence}",
    eprint = "1511.07071",
    archivePrefix = "arXiv",
    primaryClass = "gr-qc",
    doi = "10.1140/epjc/s10052-016-3938-7",
    journal = "Eur. Phys. J. C",
    volume = "76",
    number = "2",
    pages = "90",
    year = "2016"
}

@article{Planck:2018vyg,
    author = "Aghanim, N. and others",
    collaboration = "Planck",
    title = "{Planck 2018 results. VI. Cosmological parameters}",
    eprint = "1807.06209",
    archivePrefix = "arXiv",
    primaryClass = "astro-ph.CO",
    doi = "10.1051/0004-6361/201833910",
    journal = "Astron. Astrophys.",
    volume = "641",
    pages = "A6",
    year = "2020",
    note = "[Erratum: Astron.Astrophys. 652, C4 (2021)]"
}

@article{Bertacca:2010ct,
    author = "Bertacca, Daniele and Bartolo, Nicola and Matarrese, Sabino",
    title = "{Unified Dark Matter Scalar Field Models}",
    eprint = "1008.0614",
    archivePrefix = "arXiv",
    primaryClass = "astro-ph.CO",
    doi = "10.1155/2010/904379",
    journal = "Adv. Astron.",
    volume = "2010",
    pages = "904379",
    year = "2010"
}

@article{Iosifidis:2023wbx,
    author = "Iosifidis, Damianos",
    title = "{On a torsion/curvature analogue of dual connections and statistical manifolds}",
    eprint = "2303.13259",
    archivePrefix = "arXiv",
    primaryClass = "math.DG",
    doi = "10.1016/j.geomphys.2023.105064",
    journal = "J. Geom. Phys.",
    volume = "196",
    pages = "105064",
    year = "2024"
}

@article{Iosifidis:2020gth,
    author = "Iosifidis, Damianos",
    title = "{Cosmological Hyperfluids, Torsion and Non-metricity}",
    eprint = "2003.07384",
    archivePrefix = "arXiv",
    primaryClass = "gr-qc",
    doi = "10.1140/epjc/s10052-020-08634-z",
    journal = "Eur. Phys. J. C",
    volume = "80",
    number = "11",
    pages = "1042",
    year = "2020"
}

@article{Hehl:1977gn,
    author = "Hehl, F. W. and Lord, E. A. and Ne'eman, Yuval",
    title = "{Hadron Dilation, Shear and Spin as Components of the Intrinsic Hypermomentum Current and Metric Affine Theory of Gravitation}",
    doi = "10.1016/0370-2693(77)90260-X",
    journal = "Phys. Lett. B",
    volume = "71",
    pages = "432--434",
    year = "1977"
}

@article{Luz:2023uhy,
    author = "Luz, Paulo and Lemos, Jos\'e P. S.",
    title = "{Relativistic cosmology and intrinsic spin of matter: Results and theorems in Einstein-Cartan theory}",
    eprint = "2303.03104",
    archivePrefix = "arXiv",
    primaryClass = "gr-qc",
    doi = "10.1103/PhysRevD.107.084004",
    journal = "Phys. Rev. D",
    volume = "107",
    number = "8",
    pages = "084004",
    year = "2023"
}

@article{Hehl:1978cb,
    author = "Hehl, F. W. and Lord, E. A. and Ne'eman, Yuval",
    editor = "Ruffini, R. and Verbin, Y.",
    title = "{Hypermomentum in Hadron Dynamics and in Gravitation}",
    doi = "10.1103/PhysRevD.17.428",
    journal = "Phys. Rev. D",
    volume = "17",
    pages = "428--433",
    year = "1978"
}

@article{Puetzfeld:2004yg,
    author = "Puetzfeld, Dirk",
    editor = "Cline, D. B.",
    title = "{Status of non-Riemannian cosmology}",
    eprint = "gr-qc/0404119",
    archivePrefix = "arXiv",
    doi = "10.1016/j.newar.2005.01.022",
    journal = "New Astron. Rev.",
    volume = "49",
    pages = "59--64",
    year = "2005"
}

@article{Andrei:2024vvy,
    author = {Andrei, Ilaria and Iosifidis, Damianos and J\"arv, Laur and Saal, Margus},
    title = "{Friedmann cosmology with hyperfluids}",
    eprint = "2411.19127",
    archivePrefix = "arXiv",
    primaryClass = "gr-qc",
    doi = "10.1103/PhysRevD.111.064063",
    journal = "Phys. Rev. D",
    volume = "111",
    number = "6",
    pages = "064063",
    year = "2025"
}

@article{Iosifidis:2020upr,
    author = "Iosifidis, Damianos",
    title = "{Non-Riemannian cosmology: The role of shear hypermomentum}",
    eprint = "2010.00875",
    archivePrefix = "arXiv",
    primaryClass = "gr-qc",
    doi = "10.1142/S0219887821501292",
    journal = "Int. J. Geom. Meth. Mod. Phys.",
    volume = "18",
    number = "supp01",
    pages = "2150129",
    year = "2021"
}

@misc{Minkevich:1998cv,
    author = {Minkevich, A. V. and Garkun, A. S.},
    title = {Isotropic cosmology in metric-affine gauge theory of gravity},
    eprint = {gr-qc/9805007},
    archivePrefix = {arXiv},
    month = {May},
    year = {1998}
}

@article{Koivisto:2005yk,
    author = "Koivisto, Tomi",
    title = "{Covariant conservation of energy momentum in modified gravities}",
    eprint = "gr-qc/0505128",
    archivePrefix = "arXiv",
    doi = "10.1088/0264-9381/23/12/N01",
    journal = "Class. Quant. Grav.",
    volume = "23",
    pages = "4289--4296",
    year = "2006"
}

@article{Iosifidis:2022evi,
    author = "Iosifidis, Damianos and Vogiatzis, Ioannis Georgios and Tsagas, Christos G.",
    title = "{Friedmann-like universes with non-metricity}",
    eprint = "2205.03139",
    archivePrefix = "arXiv",
    primaryClass = "gr-qc",
    doi = "10.1140/epjc/s10052-023-11313-4",
    journal = "Eur. Phys. J. C",
    volume = "83",
    number = "3",
    pages = "216",
    year = "2023"
}

@article{Kable:2021yws,
  author         = {Joshua A. Kable and Giampaolo Benevento and Noemi Frusciante and Antonio De Felice and Shinji Tsujikawa},
  title          = {{Probing modified gravity with integrated Sachs-Wolfe CMB and galaxy cross-correlations}},
  journal        = {JCAP},
  volume         = {2022},
  number         = {09},
  pages          = {002},
  year           = {2022},
  doi            = {10.1088/1475-7516/2022/09/002},
  archivePrefix  = {arXiv},
  eprint         = {2111.10432},
  primaryClass   = {astro-ph.CO}
}

@misc{DESI:2025zpo,
  author         = {Abdul Karim, M. and others},
  collaboration  = {DESI},
  title          = {{DESI DR2 Results I: Baryon Acoustic Oscillations from the Lyman Alpha Forest}},
  year           = {2025},
  eprint         = {2503.14739},
  archivePrefix  = {arXiv},
  primaryClass   = {astro-ph.CO},
  reportNumber   = {FERMILAB-PUB-25-0167-PPD}
}

@misc{DESI:2025zgx,
  author         = {Abdul Karim, M. and others},
  collaboration  = {DESI},
  title          = {{DESI DR2 Results II: Measurements of Baryon Acoustic Oscillations and Cosmological Constraints}},
  year           = {2025},
  eprint         = {2503.14738},
  archivePrefix  = {arXiv},
  primaryClass   = {astro-ph.CO},
  reportNumber   = {FERMILAB-PUB-25-0169-PPD}
}

@article{HosseiniMansoori:2024pdq,
    author = "Hosseini Mansoori, Seyed Ali and Moshafi, Hossein",
    title = "{Alleviating H $_{0}$ and S $_{8}$ Tensions Simultaneously in K-essence Cosmology}",
    eprint = "2405.05843",
    archivePrefix = "arXiv",
    primaryClass = "astro-ph.CO",
    doi = "10.3847/1538-4357/ad8350",
    journal = "Astrophys. J.",
    volume = "975",
    number = "2",
    pages = "275",
    year = "2024"
}

@article{Tian:2021omz,
    author = "Tian, S. X. and Zhu, Zong-Hong",
    title = "{Early dark energy in $k$-essence}",
    eprint = "2102.06399",
    archivePrefix = "arXiv",
    primaryClass = "gr-qc",
    doi = "10.1103/PhysRevD.103.043518",
    journal = "Phys. Rev. D",
    volume = "103",
    number = "4",
    pages = "043518",
    year = "2021"
}

@article{Huang:2022hye,
    author = "Huang, Qihong and Zhang, Kaituo and Huang, He and Xu, Bing and Tu, Feiquan",
    title = "{CMB Power Spectrum in the Emergent Universe with K-Essence}",
    eprint = "2210.16166",
    archivePrefix = "arXiv",
    primaryClass = "gr-qc",
    doi = "10.3390/universe9050221",
    journal = "Universe",
    volume = "9",
    number = "5",
    pages = "221",
    year = "2023"
}

@article{Das:2023umo,
    author = "Das, Surajit and Panda, Arijit and Manna, Goutam and Ray, Saibal",
    title = "{Raychaudhuri Equation~in K-essence Geometry: Conditional Singular and Non-Singular Cosmological Models}",
    eprint = "2303.03118",
    archivePrefix = "arXiv",
    primaryClass = "gr-qc",
    doi = "10.1002/prop.202200193",
    journal = "Fortsch. Phys.",
    volume = "71",
    number = "4-5",
    pages = "2200193",
    year = "2023"
}

@article{Copeland:2023zqz,
  author         = {Edmund J. Copeland and Adam Moss and Sergio Sevillano Mu\~noz and Jade M. M. White},
  title          = {{Scaling solutions as Early Dark Energy resolutions to the Hubble tension}},
  journal        = {JCAP},
  volume         = {2024},
  number         = {05},
  pages          = {078},
  year           = {2024},
  doi            = {10.1088/1475-7516/2024/05/078},
  archivePrefix  = {arXiv},
  eprint         = {2309.15295},
  primaryClass   = {astro-ph.CO}
}

@article{Huang:2021urp,
    author = "Huang, Zhiqi",
    title = "{Statistics of thawing k-essence dark energy models}",
    eprint = "2108.06089",
    archivePrefix = "arXiv",
    primaryClass = "astro-ph.CO",
    reportNumber = "SYSU-SPA-2021",
    doi = "10.1103/PhysRevD.104.103533",
    journal = "Phys. Rev. D",
    volume = "104",
    number = "10",
    pages = "103533",
    year = "2021"
}

@article{dePutter:2007ny,
    author = "de Putter, Roland and Linder, Eric V.",
    title = "{Kinetic k-essence and Quintessence}",
    eprint = "0705.0400",
    archivePrefix = "arXiv",
    primaryClass = "astro-ph",
    doi = "10.1016/j.astropartphys.2007.05.011",
    journal = "Astropart. Phys.",
    volume = "28",
    pages = "263--272",
    year = "2007"
}

@article{Bouhmadi-Lopez:2016cja,
  author         = {Mariam Bouhmadi-L\'opez and K. Sravan Kumar and Jo\~ao Marto and Jo\~ao Morais and Alexander Zhuk},
  title          = {{$K$-essence model from the mechanical approach point of view: coupled scalar field and the late cosmic acceleration}},
  journal        = {JCAP},
  volume         = {2016},
  number         = {07},
  pages          = {050},
  year           = {2016},
  doi            = {10.1088/1475-7516/2016/07/050},
  archivePrefix  = {arXiv},
  eprint         = {1605.03212},
  primaryClass   = {gr-qc}
}

@article{Bertacca:2007ux,
    author = "Bertacca, Daniele and Matarrese, Sabino and Pietroni, Massimo",
    title = "{Unified Dark Matter in Scalar Field Cosmologies}",
    eprint = "astro-ph/0703259",
    archivePrefix = "arXiv",
    doi = "10.1142/S0217732307025893",
    journal = "Mod. Phys. Lett. A",
    volume = "22",
    pages = "2893--2907",
    year = "2007"
}

@article{Yang:2024kdo,
    author = "Yang, Yuhang and Ren, Xin and Wang, Qingqing and Lu, Zhiyu and Zhang, Dongdong and Cai, Yi-Fu and Saridakis, Emmanuel N.",
    title = "{Quintom cosmology and modified gravity after DESI 2024}",
    eprint = "2404.19437",
    archivePrefix = "arXiv",
    primaryClass = "astro-ph.CO",
    doi = "10.1016/j.scib.2024.07.029",
    journal = "Sci. Bull.",
    volume = "69",
    pages = "2698--2704",
    year = "2024"
}

@article{Chatzidakis:2022mpf,
    author = "Chatzidakis, S. and Giacomini, A. and Leach, P. G. L. and Leon, G. and Paliathanasis, A. and Pan, Supriya",
    title = "{Interacting dark energy in curved FLRW spacetime from Weyl Integrable Spacetime}",
    eprint = "2206.06639",
    archivePrefix = "arXiv",
    primaryClass = "gr-qc",
    doi = "10.1016/j.jheap.2022.10.001",
    journal = "JHEAp",
    volume = "36",
    pages = "141--151",
    year = "2022"
}

@misc{Trautman:2006fp,
  author        = {Andrzej Trautman},
  title         = {{Einstein-Cartan theory}},
  eprint        = {gr-qc/0606062},
  archivePrefix = {arXiv},
  year          = {2006},
  month         = jun
}

@article{Bahamonde:2021gfp,
    author = "Bahamonde, Sebastian and Dialektopoulos, Konstantinos F. and Escamilla-Rivera, Celia and Farrugia, Gabriel and Gakis, Viktor and Hendry, Martin and Hohmann, Manuel and Levi Said, Jackson and Mifsud, Jurgen and Di Valentino, Eleonora",
    title = "{Teleparallel gravity: from theory to cosmology}",
    eprint = "2106.13793",
    archivePrefix = "arXiv",
    primaryClass = "gr-qc",
    doi = "10.1088/1361-6633/ac9cef",
    journal = "Rept. Prog. Phys.",
    volume = "86",
    number = "2",
    pages = "026901",
    year = "2023"
}

@book{Aldrovandi:2013wha,
    author = "Aldrovandi, Ruben and Pereira, Jos\'e Geraldo",
    title = "{Teleparallel Gravity}: {An Introduction}",
    doi = "10.1007/978-94-007-5143-9",
    isbn = "978-94-007-5142-2, 978-94-007-5143-9",
    publisher = "Springer",
    year = "2013"
}

@article{Boehmer:2023fyl,
    author = "Boehmer, Christian G. and Jensko, Erik",
    title = "{Modified gravity: A unified approach to metric-affine models}",
    eprint = "2301.11051",
    archivePrefix = "arXiv",
    primaryClass = "gr-qc",
    doi = "10.1063/5.0150038",
    journal = "J. Math. Phys.",
    volume = "64",
    number = "8",
    pages = "082505",
    year = "2023"
}

@article{BeltranJimenez:2019esp,
    author = "Beltr\'an Jim\'enez, Jose and Heisenberg, Lavinia and Koivisto, Tomi S.",
    title = "{The Geometrical Trinity of Gravity}",
    eprint = "1903.06830",
    archivePrefix = "arXiv",
    primaryClass = "hep-th",
    doi = "10.3390/universe5070173",
    journal = "Universe",
    volume = "5",
    number = "7",
    pages = "173",
    year = "2019"
}

@article{Wu:2018idg,
    author = "Wu, Jimin and Li, Guangjie and Harko, Tiberiu and Liang, Shi-Dong",
    title = "{Palatini formulation of $f(R,T)$ gravity theory, and its cosmological implications}",
    eprint = "1805.07419",
    archivePrefix = "arXiv",
    primaryClass = "gr-qc",
    doi = "10.1140/epjc/s10052-018-5923-9",
    journal = "Eur. Phys. J. C",
    volume = "78",
    number = "5",
    pages = "430",
    year = "2018"
}

@article{Riess:2021jrx,
    author = "Riess, Adam G. and others",
    title = "{A Comprehensive Measurement of the Local Value of the Hubble Constant with 1 km s$^{-1}$ Mpc$^{-1}$ Uncertainty from the Hubble Space Telescope and the SH0ES Team}",
    eprint = "2112.04510",
    archivePrefix = "arXiv",
    primaryClass = "astro-ph.CO",
    doi = "10.3847/2041-8213/ac5c5b",
    journal = "Astrophys. J. Lett.",
    volume = "934",
    number = "1",
    pages = "L7",
    year = "2022"
}

@article{Boehmer:2021aji,
    author = "Boehmer, Christian G. and Jensko, Erik",
    title = "{Modified gravity: A unified approach}",
    eprint = "2103.15906",
    archivePrefix = "arXiv",
    primaryClass = "gr-qc",
    doi = "10.1103/PhysRevD.104.024010",
    journal = "Phys. Rev. D",
    volume = "104",
    number = "2",
    pages = "024010",
    year = "2021"
}

@article{Boehmer:2022wln,
    author = "Boehmer, Christian G. and Jensko, Erik and Lazkoz, Ruth",
    title = "{Cosmological dynamical systems in modified gravity}",
    eprint = "2201.09588",
    archivePrefix = "arXiv",
    primaryClass = "gr-qc",
    doi = "10.1140/epjc/s10052-022-10412-y",
    journal = "Eur. Phys. J. C",
    volume = "82",
    number = "6",
    pages = "500",
    year = "2022"
}

@book{Coley:2003mj,
    author = "Coley, A. A.",
    title = "{Dynamical systems and cosmology}",
    doi = "10.1007/978-94-017-0327-7",
    publisher = "Kluwer",
    address = "Dordrecht, Netherlands",
    year = "2003"
}

@article{Csillag:2024oqo,
    author = "Csillag, Lehel and Harko, Tiberiu",
    title = "{Semi-symmetric metric gravity: From the Friedmann\textendash{}Schouten geometry with torsion to dynamical dark energy models}",
    eprint = "2402.06114",
    archivePrefix = "arXiv",
    primaryClass = "gr-qc",
    doi = "10.1016/j.dark.2024.101596",
    journal = "Phys. Dark Univ.",
    volume = "46",
    pages = "101596",
    year = "2024"
}

@article{BeltranJimenez:2019tme,
    author = "Beltr\'an Jim\'enez, Jose and Heisenberg, Lavinia and Koivisto, Tomi Sebastian and Pekar, Simon",
    title = "{Cosmology in $f(Q)$ geometry}",
    eprint = "1906.10027",
    archivePrefix = "arXiv",
    primaryClass = "gr-qc",
    doi = "10.1103/PhysRevD.101.103507",
    journal = "Phys. Rev. D",
    volume = "101",
    number = "10",
    pages = "103507",
    year = "2020"
}

@article{Wheeler:2018rjb,
    author = "Wheeler, James T.",
    title = "{Weyl geometry}",
    eprint = "1801.03178",
    archivePrefix = "arXiv",
    primaryClass = "gr-qc",
    doi = "10.1007/s10714-018-2401-5",
    journal = "Gen. Rel. Grav.",
    volume = "50",
    number = "7",
    pages = "80",
    year = "2018"
}

@book{schrodinger1985space,
  title={Space-time structure},
  author={Schr{\"o}dinger, Erwin},
  year={1985},
  publisher={Cambridge University Press}
}

@article{Hehl:1976kj,
    author = "Hehl, F. W. and Von Der Heyde, P. and Kerlick, G. D. and Nester, J. M.",
    title = "{General Relativity with Spin and Torsion: Foundations and Prospects}",
    doi = "10.1103/RevModPhys.48.393",
    journal = "Rev. Mod. Phys.",
    volume = "48",
    pages = "393--416",
    year = "1976"
}

@article{Kibble:1961ba,
    author = "Kibble, T. W. B.",
    editor = "Hsu, Jong-Ping and Fine, D.",
    title = "{Lorentz invariance and the gravitational field}",
    doi = "10.1063/1.1703702",
    journal = "J. Math. Phys.",
    volume = "2",
    pages = "212--221",
    year = "1961"
}

@article{Cai:2015emx,
    author = "Cai, Yi-Fu and Capozziello, Salvatore and De Laurentis, Mariafelicia and Saridakis, Emmanuel N.",
    title = "{f(T) teleparallel gravity and cosmology}",
    eprint = "1511.07586",
    archivePrefix = "arXiv",
    primaryClass = "gr-qc",
    doi = "10.1088/0034-4885/79/10/106901",
    journal = "Rept. Prog. Phys.",
    volume = "79",
    number = "10",
    pages = "106901",
    year = "2016"
}

@article{BeltranJimenez:2017tkd,
    author = "Beltr\'an Jim\'enez, Jose and Heisenberg, Lavinia and Koivisto, Tomi",
    title = "{Coincident General Relativity}",
    eprint = "1710.03116",
    archivePrefix = "arXiv",
    primaryClass = "gr-qc",
    reportNumber = "NORDITA-2017-100, IFT-UAM/CSIC-17-093, ITS-ETH-2017-10",
    doi = "10.1103/PhysRevD.98.044048",
    journal = "Phys. Rev. D",
    volume = "98",
    number = "4",
    pages = "044048",
    year = "2018"
}

@article{Bengochea:2008gz,
    author = "Bengochea, Gabriel R. and Ferraro, Rafael",
    title = "{Dark torsion as the cosmic speed-up}",
    eprint = "0812.1205",
    archivePrefix = "arXiv",
    primaryClass = "astro-ph",
    doi = "10.1103/PhysRevD.79.124019",
    journal = "Phys. Rev. D",
    volume = "79",
    pages = "124019",
    year = "2009"
}

@article{Linder:2010py,
    author = "Linder, Eric V.",
    title = "{Einstein's Other Gravity and the Acceleration of the Universe}",
    eprint = "1005.3039",
    archivePrefix = "arXiv",
    primaryClass = "astro-ph.CO",
    doi = "10.1103/PhysRevD.81.127301",
    journal = "Phys. Rev. D",
    volume = "81",
    pages = "127301",
    year = "2010",
    note = "[Erratum: Phys.Rev.D 82, 109902 (2010)]"
}

@article{Sotiriou:2008rp,
    author = "Sotiriou, Thomas P. and Faraoni, Valerio",
    title = "{f(R) Theories Of Gravity}",
    eprint = "0805.1726",
    archivePrefix = "arXiv",
    primaryClass = "gr-qc",
    doi = "10.1103/RevModPhys.82.451",
    journal = "Rev. Mod. Phys.",
    volume = "82",
    pages = "451--497",
    year = "2010"
}

@article{Nester:1998mp,
    author = "Nester, James M. and Yo, Hwei-Jang",
    title = "{Symmetric teleparallel general relativity}",
    eprint = "gr-qc/9809049",
    archivePrefix = "arXiv",
    reportNumber = "NCU-CCS-980904",
    journal = "Chin. J. Phys.",
    volume = "37",
    pages = "113",
    year = "1999"
}

@article{Hayashi:1967se,
    author = "Hayashi, K. and Nakano, T.",
    editor = "Hsu, Jong-Ping and Fine, D.",
    title = "{Extended translation invariance and associated gauge fields}",
    doi = "10.1143/PTP.38.491",
    journal = "Prog. Theor. Phys.",
    volume = "38",
    pages = "491--507",
    year = "1967"
}

@article{Obukhov:2002tm,
    author = "Obukhov, Yu. N. and Pereira, J. G.",
    title = "{Metric affine approach to teleparallel gravity}",
    eprint = "gr-qc/0212080",
    archivePrefix = "arXiv",
    reportNumber = "IFT-P-94-2002",
    doi = "10.1103/PhysRevD.67.044016",
    journal = "Phys. Rev. D",
    volume = "67",
    pages = "044016",
    year = "2003"
}

@Article{universe10110419,
AUTHOR = {Chaudhary, Himanshu and Csillag, Lehel and Harko, Tiberiu},
TITLE = {Semi-Symmetric Metric Gravity: A Brief Overview},
JOURNAL = {Universe},
VOLUME = {10},
YEAR = {2024},
NUMBER = {11},
ARTICLE-NUMBER = {419},
URL = {https://www.mdpi.com/2218-1997/10/11/419},
ISSN = {2218-1997},
DOI = {10.3390/universe10110419}
}

@article{FerreiraJunior:2023qxi,
    author = "Ferreira Junior, Alexsandre L. and Pinto-Neto, Nelson and Zanelli, Jorge",
    title = "{Inflation and late-time accelerated expansion driven by k-essence degenerate dynamics}",
    eprint = "2311.01456",
    archivePrefix = "arXiv",
    primaryClass = "hep-th",
    doi = "10.1103/PhysRevD.109.023515",
    journal = "Phys. Rev. D",
    volume = "109",
    number = "2",
    pages = "023515",
    year = "2024"
}

@article{Fritis:2023lhd,
    author = "Fritis, Alberto and Villalobos-Silva, Daniel and Vasquez, Yerko and Lopez-Caraballo, Carlos H. and Helo, Juan Carlos",
    title = "{Dynamical system analysis and observational constraints of cosmological models in mimetic gravity}",
    eprint = "2312.07765",
    archivePrefix = "arXiv",
    primaryClass = "gr-qc",
    doi = "10.1016/j.dark.2024.101650",
    journal = "Phys. Dark Univ.",
    volume = "46",
    pages = "101650",
    year = "2024"
}

@article{Bamba:2012cp,
    author = "Bamba, Kazuharu and Capozziello, Salvatore and Nojiri, Shin'ichi and Odintsov, Sergei D.",
    title = "{Dark energy cosmology: the equivalent description via different theoretical models and cosmography tests}",
    eprint = "1205.3421",
    archivePrefix = "arXiv",
    primaryClass = "gr-qc",
    doi = "10.1007/s10509-012-1181-8",
    journal = "Astrophys. Space Sci.",
    volume = "342",
    pages = "155--228",
    year = "2012"
}

@article{Nojiri:2010wj,
    author = "Nojiri, Shin'ichi and Odintsov, Sergei D.",
    title = "{Unified cosmic history in modified gravity: from F(R) theory to Lorentz non-invariant models}",
    eprint = "1011.0544",
    archivePrefix = "arXiv",
    primaryClass = "gr-qc",
    doi = "10.1016/j.physrep.2011.04.001",
    journal = "Phys. Rept.",
    volume = "505",
    pages = "59--144",
    year = "2011"
}

@article{Nojiri:2017ncd,
    author = "Nojiri, S. and Odintsov, S. D. and Oikonomou, V. K.",
    title = "{Modified Gravity Theories on a Nutshell: Inflation, Bounce and Late-time Evolution}",
    eprint = "1705.11098",
    archivePrefix = "arXiv",
    primaryClass = "gr-qc",
    reportNumber = "PHYS.REPT.-692-(2017)-1-104, Phys.Rept. 692 (2017) 1-104",
    doi = "10.1016/j.physrep.2017.06.001",
    journal = "Phys. Rept.",
    volume = "692",
    pages = "1--104",
    year = "2017"
}

@article{Goenner:2014mka,
    author = "Goenner, Hubert F. M.",
    title = "{On the History of Unified Field Theories. Part II. (ca. 1930 - ca. 1965)}",
    doi = "10.12942/lrr-2014-5",
    journal = "Living Rev. Rel.",
    volume = "17",
    pages = "5",
    year = "2014"
}

@misc{Ganguly:2025kdf,
    author = "Ganguly, Samit and Manna, Goutam and Gangopadhyay, Debashis and Guendelman, Eduardo and Bhattacharyya, Abhijit",
    title = "{Non-Affine Extensions of the Raychaudhuri Equation in the K-essence Framework}",
    eprint = "2503.03076",
    archivePrefix = "arXiv",
    primaryClass = "gr-qc",
    month = "3",
    year = "2025"
}

@article{Panda:2024amk,
    author = "Panda, Arijit and Gangopadhyay, Debashis and Manna, Goutam",
    title = "{Form Invariance of Raychaudhuri Equation~in the Presence of Inflaton-Type Fields}",
    eprint = "2404.11632",
    archivePrefix = "arXiv",
    primaryClass = "gr-qc",
    doi = "10.1002/prop.202400134",
    journal = "Fortsch. Phys.",
    volume = "72",
    number = "9-10",
    pages = "2400134",
    year = "2024"
}

@article{Gialamas:2019nly,
    author = "Gialamas, Ioannis D. and Lahanas, A. B.",
    title = "{Reheating in $R^2$ Palatini inflationary models}",
    eprint = "1911.11513",
    archivePrefix = "arXiv",
    primaryClass = "gr-qc",
    doi = "10.1103/PhysRevD.101.084007",
    journal = "Phys. Rev. D",
    volume = "101",
    number = "8",
    pages = "084007",
    year = "2020"
}

@article{Gialamas:2024iyu,
    author = "Gialamas, Ioannis D. and Tamvakis, Kyriakos",
    title = "{Inflation in Weyl-invariant Einstein-Cartan gravity}",
    eprint = "2410.16364",
    archivePrefix = "arXiv",
    primaryClass = "gr-qc",
    doi = "10.1103/PhysRevD.111.044007",
    journal = "Phys. Rev. D",
    volume = "111",
    number = "4",
    pages = "044007",
    year = "2025"
}

@article{Bezares:2021dma,
    author = "Bezares, Miguel and Aguilera-Miret, Ricard and ter Haar, Lotte and Crisostomi, Marco and Palenzuela, Carlos and Barausse, Enrico",
    title = "{No Evidence of Kinetic Screening in Simulations of Merging Binary Neutron Stars beyond General Relativity}",
    eprint = "2107.05648",
    archivePrefix = "arXiv",
    primaryClass = "gr-qc",
    doi = "10.1103/PhysRevLett.128.091103",
    journal = "Phys. Rev. Lett.",
    volume = "128",
    number = "9",
    pages = "091103",
    year = "2022"
}

@article{Bezares:2021yek,
    author = "Bezares, Miguel and ter Haar, Lotte and Crisostomi, Marco and Barausse, Enrico and Palenzuela, Carlos",
    title = "{Kinetic screening in nonlinear stellar oscillations and gravitational collapse}",
    eprint = "2105.13992",
    archivePrefix = "arXiv",
    primaryClass = "gr-qc",
    doi = "10.1103/PhysRevD.104.044022",
    journal = "Phys. Rev. D",
    volume = "104",
    number = "4",
    pages = "044022",
    year = "2021"
}

@misc{Barker:2025xzd,
    author = "Barker, Will and Marzo, Carlo and Santoni, Alessandro",
    title = "{Can metric-affine gravity be saved?}",
    eprint = "2505.23894",
    archivePrefix = "arXiv",
    primaryClass = "hep-th",
    month = "5",
    year = "2025"
}

@article{Arkani-Hamed:2003pdi,
    author = "Arkani-Hamed, Nima and Cheng, Hsin-Chia and Luty, Markus A. and Mukohyama, Shinji",
    title = "{Ghost condensation and a consistent infrared modification of gravity}",
    eprint = "hep-th/0312099",
    archivePrefix = "arXiv",
    reportNumber = "HUTP-03-A081, UMD-PPP-04-012",
    doi = "10.1088/1126-6708/2004/05/074",
    journal = "JHEP",
    volume = "05",
    pages = "074",
    year = "2004"
}

@article{Babichev:2018twg,
    author = "Babichev, Eugeny and Ramazanov, Sabir and Vikman, Alexander",
    title = "{Recovering $P(X)$ from a canonical complex field}",
    eprint = "1807.10281",
    archivePrefix = "arXiv",
    primaryClass = "gr-qc",
    reportNumber = "LPT-Orsay-18-89",
    doi = "10.1088/1475-7516/2018/11/023",
    journal = "JCAP",
    volume = "11",
    pages = "023",
    year = "2018"
}

@article{Sawicki:2024ryt,
    author = "Sawicki, Ignacy and Trenkler, Georg and Vikman, Alexander",
    title = "{Causality and Stability from Acoustic Geometry}",
    eprint = "2412.21169",
    archivePrefix = "arXiv",
    primaryClass = "gr-qc",
    month = "12",
    year = "2024"
}

@article{Keeley:2025rlg,
    author = "Keeley, Ryan E. and Shafieloo, Arman and Matthewson, William L.",
    title = "{Could We Be Fooled about Phantom Crossing?}",
    eprint = "2506.15091",
    archivePrefix = "arXiv",
    primaryClass = "astro-ph.CO",
    month = "6",
    year = "2025"
}

@article{Bonvin:2006vc,
    author = "Bonvin, Camille and Caprini, Chiara and Durrer, Ruth",
    title = "{A no-go theorem for k-essence dark energy}",
    eprint = "astro-ph/0606584",
    archivePrefix = "arXiv",
    doi = "10.1103/PhysRevLett.97.081303",
    journal = "Phys. Rev. Lett.",
    volume = "97",
    pages = "081303",
    year = "2006"
}

@article{Adams:2006sv,
    author = "Adams, Allan and Arkani-Hamed, Nima and Dubovsky, Sergei and Nicolis, Alberto and Rattazzi, Riccardo",
    title = "{Causality, analyticity and an IR obstruction to UV completion}",
    eprint = "hep-th/0602178",
    archivePrefix = "arXiv",
    reportNumber = "CERN-PH-TH-2006-033, HUTP-06-A0005",
    doi = "10.1088/1126-6708/2006/10/014",
    journal = "JHEP",
    volume = "10",
    pages = "014",
    year = "2006"
}

@article{Babichev:2007dw,
    author = "Babichev, Eugeny and Mukhanov, Viatcheslav and Vikman, Alexander",
    title = "{k-Essence, superluminal propagation, causality and emergent geometry}",
    eprint = "0708.0561",
    archivePrefix = "arXiv",
    primaryClass = "hep-th",
    reportNumber = "LMU-ASC-54-07",
    doi = "10.1088/1126-6708/2008/02/101",
    journal = "JHEP",
    volume = "02",
    pages = "101",
    year = "2008"
}

@article{Vikman:2004dc,
    author = "Vikman, Alexander",
    title = "{Can dark energy evolve to the phantom?}",
    eprint = "astro-ph/0407107",
    archivePrefix = "arXiv",
    doi = "10.1103/PhysRevD.71.023515",
    journal = "Phys. Rev. D",
    volume = "71",
    pages = "023515",
    year = "2005"
}

@article{deRham:2017aoj,
    author = "de Rham, Claudia and Melville, Scott",
    title = "{Unitary null energy condition violation in P(X) cosmologies}",
    eprint = "1703.00025",
    archivePrefix = "arXiv",
    primaryClass = "hep-th",
    doi = "10.1103/PhysRevD.95.123523",
    journal = "Phys. Rev. D",
    volume = "95",
    number = "12",
    pages = "123523",
    year = "2017"
}

@article{Gialamas:2025pwv,
    author = {Gialamas, Ioannis D. and H{\"u}tsi, Gert and Raidal, Martti and Urrutia, Juan and Vasar, Martin and Veerm{\"a}e, Hardi},
    title = "{Quintessence and phantoms in light of DESI 2025}",
    eprint = "2506.21542",
    archivePrefix = "arXiv",
    primaryClass = "astro-ph.CO",
    doi = "10.1103/kdqc-y37v",
    journal = "Phys. Rev. D",
    volume = "112",
    number = "6",
    pages = "063551",
    year = "2025"
}


\end{document}